\documentclass[11pt, letter]{article}

\usepackage{tikz}
\usepackage{amsmath}
\usepackage{graphicx}
\usetikzlibrary{decorations.pathmorphing}
\usepackage{todonotes}

\usepackage[margin=1in]{geometry}
\usepackage{amsmath,amssymb,amsthm}
\usepackage{hyperref}
\usepackage{cleveref}
\usepackage{cite}
\usepackage{algorithm2e}
\usepackage{enumitem}

\newtheorem{lemma}{Lemma}
\newtheorem{theorem}{Theorem}
\newtheorem{proposition}{Proposition}
\newtheorem{corollary}{Corollary}
\newtheorem{claim*}{Claim}
\theoremstyle{definition}
\newtheorem{definition}{Definition}
\newtheorem{observation}{Observation}

\def\R{\ensuremath{\mathbb{R}}}
\def\Rnonneg{\ensuremath{\mathbb{R}_{\ge 0}}}

\def\E{\ensuremath{\mathbb{E}}}

\def\Opt#1{\ensuremath{\textsc{Opt}(#1)}}
\def\Mech#1#2{\ensuremath{\textsc{Mech}_{#1}(#2)}}

\title{Universally truthful mechanisms for scheduling}
\author{
	Georgios Anastasiadis\thanks{School of Informatics, Aristotle University of Thessaloniki, Greece. Email: \texttt{ganastag@csd.auth.gr}}\and
	George Christodoulou\thanks{School of Informatics, Aristotle University of Thessaloniki, Greece; Archimedes, Athena Research Center, Greece. Email: \texttt{gichristo@csd.auth.gr}}\and 
	Elias Koutsoupias\thanks{Department of Computer Science, University of Oxford, UK. Email: \texttt{elias.koutsoupias@cs.ox.ac.uk}}\and 
	Annamária Kovács\thanks{Institute of Computer Science, Goethe University Frankfurt, Germany. Email: \texttt{panni@cs.uni-frankfurt.de}}\and 
	Conrad Schecker\thanks{Institute of Computer Science, Goethe University Frankfurt, Germany. Email: \texttt{schecker@em.uni-frankfurt.de}}
}
\date{}

\begin{document}
	\maketitle

	\begin{abstract}
		We consider universally truthful randomized mechanisms for the problem of
scheduling $m$ jobs on $n$ unrelated machines. We prove a lower bound on the
expected approximation ratio of every such mechanism whose probability distribution has
discrete support. We show that no universally truthful randomized
mechanism in this class can achieve approximation ratio smaller than 
$n/12 - o(n)$ with respect to the optimal makespan. We match this, up to a constant factor, 
by a mechanism with approximation ratio $n/2 + o(n)$.

	\end{abstract}
    
	\section{Introduction}

We study truthful randomized mechanisms for the classic problem of scheduling on
unrelated machines. In this problem, there are $m$ tasks to be assigned to $n$
machines, and each machine $i$ has a processing time $t_{ij}$ for
executing task $j$. The objective is to minimize the \emph{makespan}, defined as the maximum completion time among all machines.

Unlike traditional algorithms, a truthful mechanism must incentivize
each machine $i$ by appropriate payments to reveal its \emph{private
  values} $t_{ij}$. This additional restriction places the problem in
the general research area of \emph{mechanism design}, a branch of game
theory and microeconomics. In this area, the central objective is to
design algorithms (called mechanisms) in strategic environments, where
the input is held privately by selfish participants. The computational
and information-theoretic aspects of mechanism design were first
studied by Nisan and Ronen who, in their seminal
work~\cite{NisanRonen2001}, laid the foundations of the field of
\emph{algorithmic mechanism design}. They proposed the scheduling
problem as the canonical algorithmic problem in this area and posed
the question of the best approximation ratio that can be achieved by
truthful deterministic and randomized mechanisms. Two aspects make
this a challenging question: First, the processing times $t_{ij}$ of
each machine, which form the input of the mechanism, are
\emph{multidimensional}, which places the problem in the complex and
poorly-understood domain of multidimensional mechanism design. Second,
the objective is of \emph{min-max} type, which is substantially more
demanding than the well-studied min-sum objective which is truthfully
optimized by the VCG mechanism.

The status of this problem for \emph{deterministic} mechanisms was
recently settled by Christodoulou, Koutsoupias and Kov\'acs~\cite{ChristodoulouKK26}: the best approximation ratio for the
scheduling problem achieved by truthful mechanisms is exactly $n$,
equal to the number of machines. The upper bound comes from the
observation that the VCG mechanism, which truthfully minimizes the sum
of completion times, rather than the makespan, has approximation ratio
$n$. On the other hand, the lower bound is significantly more demanding and was
based on a combination of a sophisticated class of instances, a partial
characterization of mechanisms for two machines and two tasks, and an
involved induction on the number of machines~\cite{ChristodoulouKK26}.

In light of this negative result, shifting to \emph{randomized
  mechanisms} is the natural next step. Randomization in mechanism
design, which has been used since the early work of Nisan and
Ronen~\cite{NisanRonen2001}, comes in two flavors: universally
truthful and truthful-in-expectation mechanisms. A randomized
mechanism is called \textit{universally truthful} if it is defined as
a probability distribution over deterministic truthful mechanisms. A
randomized mechanism is \emph{truthful-in-expectation} when agents
cannot strictly improve their expected utility by misreporting. The class of
truthful-in-expectation mechanisms is broader, but the universally
truthful mechanisms have the important characteristic that they are
truthful even ex post (that is, even when the random coins become
known). A closely related model is the model of \emph{fractional}
mechanisms, where each task can be fractionally allocated to multiple
machines.

Unlike the case of deterministic mechanisms, the situation for
randomized mechanisms is highly unresolved across all these models. 
The best known mechanism for each of
the above models achieves an $O(n)$ approximation
guarantee~\cite{MualemSchapira2018,LuYu2008Randomized,LuYu2008Improved}. This
is far from the best lower bound which is $2-1/n$~
\cite{MualemSchapira2018,ChristodoulouKK26}.

In this work, we resolve this problem for universally truthful
mechanisms, when they are defined as a discrete probability distribution
over deterministic mechanisms. Our main result establishes a linear
lower bound for all universally truthful mechanisms.

\begin{theorem} \label{thm:nrtheorem} There is no universally truthful
  randomized mechanism with approximation ratio better than $\Omega(n)$ for
  the problem of scheduling $n$ unrelated machines.
\end{theorem}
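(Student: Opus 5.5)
The plan is a reduction to the deterministic lower bound of Christodoulou, Koutsoupias and Kov\'acs~\cite{ChristodoulouKK26}, with the randomization absorbed by an averaging argument. Write the mechanism as $\sum_k p_k M_k$, where each $M_k$ is a deterministic truthful mechanism and $p_k>0$ for countably many $k$. The point of the discrete support is that it lets us pick a single deterministic mechanism $M_k$ of non-negligible probability and run a deterministic argument against it.

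The key structural tool I would use is a characterization of truthful mechanisms by \emph{weak monotonicity}: for every machine $i$ and fixed reports $t_{-i}$ of the others, the allocation to $i$ is determined by ``threshold'' (critical value) functions. Concretely, if machine $i$ declares tiny values on a set $S$ of tasks and huge values elsewhere, truthfulness forces monotone behaviour of the bundle $i$ receives.

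The instance family I would use is the following. Start from a base instance in which each of $n$ disjoint groups of tasks has small cost on one designated machine. Then perturb it so that the optimum puts a bounded load on every machine. A deterministic truthful mechanism, however, must (by the critical-value structure) either overload one machine with about $n$ units of work, or, on a neighbouring instance obtained by lowering one machine's values, keep the allocation, which gives makespan $\Omega(n)$ there.

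For the randomized statement I would do the following.
\begin{itemize}[leftmargin=*]
\item[(i)] Fix a large finite family of instances, e.g.\ indexed by which machine is ``special'' and by a discretized choice of thresholds. Choose the discretization fine enough, relative to the discrete support, that for all but an $o(1)$ fraction of probability mass each $M_k$ has its critical values strictly separated from the grid points. The countable support lets us choose the scale parameters outside a countable set of bad values.
\item[(ii)] For each such $M_k$, show by the deterministic argument that on a constant fraction of the instances in the family the makespan is at least $cn$.
\item[(iii)] Average over $k$ and over a uniformly random instance in the family. Some instance then has expected makespan at least $c'n\cdot\textsc{Opt}$, giving the bound $n/12-o(n)$ once the constants are tracked.
\end{itemize}
This is Yao's principle restricted to universally truthful mechanisms. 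What makes it work is that the deterministic lower bound must be proved in a ``many instances are bad'' form, not just ``some instance is bad''.

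The main obstacle is step (ii): converting the deterministic $n$ lower bound, whose proof uses an adaptive chain of instances depending on the mechanism, into a statement that a \emph{fixed} distribution of instances defeats every deterministic mechanism with constant probability. I would try a simpler, non-adaptive construction. A single task (or bundle) is cheap for all machines, except that each machine independently has a slightly perturbed value. We then use the threshold structure to show that a deterministic truthful mechanism cannot, for most perturbations, spread $n$ bundles across $n$ machines. The reason is that the critical values are fixed functions of the other machines' bids, so a random perturbation of one machine crosses its threshold only with probability bounded away from $1$. This loses a constant factor, consistent with the gap between $n/12$ and $n/2$.
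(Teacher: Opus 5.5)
Your outer framework matches the paper: truncate the discrete distribution to finitely many deterministic rules with a common ratio bound $K$, then apply Yao's principle. That requires one fixed instance distribution on which every weakly monotone $A$ of ratio below $K$ has $\mathbb E[\textsc{Mech}]\ge\Omega(n)\cdot\mathbb E[\textsc{Opt}]$. However, step (ii), which you flag as the obstacle, is essentially the whole proof, and the non-adaptive construction you propose for it cannot work. Suppose every task is cheap on all machines up to small independent perturbations. Consider the rule that gives task $j$ to a designated machine $d(j)$ whenever $t_{d(j)j}\le 2\min_{i\neq d(j)}t_{ij}$, and to a cheapest machine otherwise. It is task-independent and allocates each task by thresholds, hence weakly monotone. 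It has ratio at most $2n$ on every input, since the makespan is at most $\sum_j 2t_j^{\min}\le 2n\,\textsc{Opt}$. On your family it simply spreads the tasks round-robin and is essentially optimal. So there is no statement of the form ``a truthful mechanism cannot spread $n$ bundles over $n$ machines for most perturbations.'' Nor does anything force a random perturbation to cross a critical value with probability bounded away from $1$: critical values are arbitrary monotone functions of the other bids. Your family never uses the bound $K$, which is exactly what has to be exploited.

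The paper builds the distribution so that bounded ratio pins the mechanism down. Instances are trivial multi-cliques of multiplicity $\ell$. Each edge-task has cost $0$ at one endpoint, a perturbed $s^e\approx 1$ at the other, and $K^2$ elsewhere. The $0/s^e$ orientation is random, with half of each parallel class oriented each way. A uniformly random root $i$ and a random star of edges with zero root cost are chosen, and the root's costs on the star are raised to $c$, so $\textsc{Opt}<1$. Three ingredients missing from your plan make this work.
\begin{itemize}
\item[(a)] Sibling independence (Proposition~\ref{prop:indepParallel}, from the $2\times 2$ characterization plus ratio $<K$). The critical values $\psi^e_{ij}(1-\varepsilon)$ do not depend on the orientation of their parallel class. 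Together with $x^e_{ij}+x^e_{ji}\ge 1$, the random orientation then gives $\mathbb E|B|\ge (n-1)/2$. This is where a genuine ``probability $1/2$'' comes from.
\item[(b)] The Box Theorem (Theorem~\ref{thm:box-restated}). Knowing each star edge's individual threshold says nothing about the joint allocation, since weak monotonicity allows bundling cuts. One needs that a random star is a box with probability tending to $1$ as $\ell\to\infty$. Here $\ell$ must grow with $K$, which is why a common bound $K$ is needed, rather than your grid-separation condition; continuity is handled by perturbing $s^e$ and $\varepsilon$.
\item[(c)] The root's allocation on the star minimizes $\sum_{j\in R}t_j+c_R$. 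Evaluating this at the box corner gives $c_R\ge(1-2\varepsilon)|B\setminus R|$, hence Lemma~\ref{lem:BTrelation}: $|T|\ge((1-2\varepsilon)|B|-c(n-1))/(1-c)$. This yields the bound $\frac{c(1-2c)}{2(1-c)}(n-1)$, which is then optimized over $c$.
\end{itemize}
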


In particular, we show that the approximation ratio is at least $(3/2-\sqrt{2})(n-1) \approx 0.085 (n-1) > (n-1)/12$.

On the \emph{positive side}, we provide a universally truthful
mechanism that achieves an approximation ratio of $n/2 + o(n)$.  This is an asymptotic improvement over the linear coefficient of the
previous best bound of $0.837n$ by \cite{LuYu2008Improved}. Our
bound asymptotically matches the $(n+1)/2$ bound for 
fractional~\cite{ChristodoulouKoutsoupiasKovacs2010Fractional} and the $(n+5)/2$ bound for truthful-in-expectation mechanisms~\cite{LuYu2008Randomized}. 
It also asymptotically matches the lower
bound of $(n+1)/2$ established for \emph{task-independent}
mechanisms~\cite{ChristodoulouKoutsoupiasKovacs2010Fractional,LuYu2008Randomized}.

We focus on universally truthful mechanisms defined as \emph{discrete probability distributions} over deterministic mechanisms. We make this restriction primarily to avoid the measure-theoretic technicalities associated with continuous distributions. Because both discrete and continuous distributions share the same underlying truthfulness properties in the scheduling domain, we expect that our approach and the resulting linear lower bound naturally extend to distributions with uncountable support.

The remainder of the paper is organized as follows. Section 2 formally defines the scheduling problem, the mechanism design framework, and the fundamental concept of weak monotonicity. Section 3 contains the proof of our main negative result; we introduce the graph-theoretic representation of the instances, detail our distribution over multi-clique instances, and apply Yao's minimax principle to establish the $\Omega(n)$ lower bound. In Section 4, we present our positive result by introducing the \textsc{Exp-Bounded-Square} mechanism and proving both its universal truthfulness and its expected approximation guarantee. Finally, the formal adaptation of the Box Theorem, which forms the geometric core of our lower bound analysis, is given in the Appendix.

\subsection{Related Work}

\paragraph{Deterministic Mechanisms.}
The seminal work of Nisan and Ronen~\cite{NisanRonen2001} established a lower bound of $2$ for deterministic truthful mechanisms and showed that the VCG mechanism yields an $n$-approximation, leaving a huge gap. A sequence of works improved the lower bound only incrementally: first to $2.41$~\cite{ChristodoulouKoutsoupiasVidali2009}, then to $2.61$~\cite{KoutsoupiasVidali2013}, later to $2.75$~\cite{GiannakopoulosHammerlPocas2020}, and eventually to $3$~\cite{DobzinskiShaulker2020}. Christodoulou, Koutsoupias, and Kov\'acs~\cite{ChristodoulouKK26} fully closed this gap by showing that no deterministic truthful mechanism can circumvent the $n$-approximation barrier, thereby resolving the long-standing Nisan--Ronen conjecture.

\paragraph{Randomized Mechanisms.}
Mu'alem and Schapira~\cite{MualemSchapira2018} showed a lower bound of $2-1/n$ for both truthful-in-expectation and universally truthful randomized mechanisms. Christodoulou, Koutsoupias, and Kov\'acs~\cite{ChristodoulouKoutsoupiasKovacs2010Fractional} extended this lower bound to fractional mechanisms, where each task can be fractionally allocated to multiple machines. They also presented a (deterministic) fractional mechanism with a guarantee of $(n+1)/2$. Building on this framework, Lu and Yu~\cite{LuYu2008Randomized} obtained a truthful-in-expectation mechanism achieving an approximation guarantee of $(n+5)/2$.

Regarding universally truthful mechanisms, Nisan and Ronen~\cite{NisanRonen2001} proposed a mechanism for the case of two machines achieving a $7/4$-approximation. Mu'alem and Schapira~\cite{MualemSchapira2018} extended this mechanism to $n$ machines with an approximation guarantee of $0.875n$, which Lu and Yu~\cite{LuYu2008Improved} subsequently improved to $0.837n$. The case of two machines is largely resolved~\cite{LuYu2008Randomized,ChenDuZuluaga2015}.

\paragraph{Bayesian Setting.}
In the Bayesian model, Daskalakis and Weinberg~\cite{DaskalakisWeinberg2015} gave a polynomial-time Bayesian incentive-compatible mechanism whose expected makespan is within a factor of $2$ of the optimal Bayesian truthful mechanism (this guarantee is with respect to the optimal truthful mechanism, rather than the ex post optimal makespan). Chawla, Hartline, Malec, and Sivan~\cite{ChawlaHartlineMalecSivan2013} studied prior-independent mechanisms, where processing times are drawn from an unknown distribution, and provided approximation guarantees under distributional assumptions. Giannakopoulos and Kyropoulou~\cite{GiannakopoulosKyropoulou2017} analyzed the performance of the VCG mechanism in a Bayesian scheduling model under independence and symmetry assumptions on the processing-time distributions, showing that VCG achieves an $O(\log n / \log \log n)$ approximation.

\paragraph{Special Cases.}
Several variants and special cases of truthful scheduling have also been studied, yielding significant results. Ashlagi, Dobzinski, and Lavi~\cite{AshlagiDobzinskiLavi2012} showed a lower bound of $n$ for \textit{anonymous mechanisms}, a natural class of mechanisms that treat machines symmetrically. Lavi and Swamy~\cite{LaviSwamy2009} studied a restricted, yet multidimensional, input domain in which each processing time can take only one of two possible values, `low'' or `high''. For this domain, they gave a deterministic truthful mechanism with an approximation factor of $2$. They further showed that, even in this restricted setting, no truthful mechanism can achieve the optimal makespan, proving a lower bound of $11/10$. Yu~\cite{Yu2009} extended this line of work to domains with a range of possible values. Auletta, Christodoulou, and Penna~\cite{AulettaChristodoulouPenna2015} studied multidimensional domains in which each machine's private information is encoded by a single bit. Christodoulou, Koutsoupias, and Kov\'acs~\cite{ChristodoulouKK25} investigated \textit{graph-restricted} instances, in which each task can be assigned to at most two machines. Their graph-based perspective is closely related to the present lower-bound construction, which relies on multi-cliques and multi-stars.

\paragraph{Learning-Augmented Mechanisms.}
Finally, a recent line of research investigates learning-augmented mechanisms under various prediction regimes, ranging from full-input predictions~\cite{XuL22, BalkanskiGT23} to minimal output-only predictions~\cite{CSV24}. Within the full-prediction paradigm, Balkanski, Gkatzelis, and Tan~\cite{BalkanskiGT23} provided a deterministic mechanism achieving a constant-factor approximation under accurate predictions and an $O(n)$ robustness bound under arbitrary errors. Cole, Gupta, and Jangir~\cite{ColeGJ25} achieved similar guarantees using a more limited intermediate model that requires predicting only $O(m+n)$ structural values, whereas the robustness bound of Christodoulou, Sgouritsa, and Vlachos~\cite{CSV24} for output-only predictions degrades to $O(n^2)$.

	\section{Preliminaries}
\label{sec:prelim}

We consider the scheduling problem for unrelated machines.
There is a set $M$ of $m$ \emph{tasks} that need to be scheduled on a set $N$ of $n$ \emph{machines}.
Each task $j \in M$ has a \emph{processing time} (or \emph{cost})  $t_{ij} \in \Rnonneg$ for every machine $i \in N$.
If $T \subseteq M$ is the subset of tasks that are scheduled on machine $i,$ then the \emph{completion time} of this machine is  $t_i(T) := \sum_{j \in T} t_{ij}$.
The goal is to find an allocation of all tasks to the machines such that the maximum completion time over the machines, called \emph{makespan}, is minimized.

\subsection{Mechanism Design Setting}

Each individual machine $i\in N$ is controlled by a selfish \emph{agent (player)}, and the vector of running times $t_i=(t_{i1},t_{i2},\ldots, t_{im}),$ also called the \emph{type} of agent $i,$ is privately known to that agent. We will use the terms \emph{machine, agent} and \emph{player} interchangeably. The set $\mathcal T_i$ contains all possible types of $i$, i.e., all possible vectors $t_i\in \Rnonneg^m.$ Additionally, let $\mathcal T = \times_{i\in N}\mathcal T_i$ denote the space of possible type profiles for all the $n$ agents. 
In general, a mechanism defines for each agent $i$ a set $\mathcal B_i$ of all available strategies that the agent can choose from. We consider \emph{direct-revelation} mechanisms and thus the strategy space of agent $i$
coincides with its type space, that is, $\mathcal B_i=\mathcal T_i$. An agent may
report a bid vector $b_i=(b_{i1}, \dots, b_{im})\in\mathcal T_i$ that differs from its true type $t_i$, if this serves their interest.

\subsection{Deterministic Mechanisms}
A deterministic mechanism $ M = (A,\mathcal P)$ consists of two parts: 
\begin{itemize}
	\item An \textbf{allocation algorithm} $A\,$:  Let $\mathcal A$ denote the set of all
	partitions of the tasks among the machines. The \emph{allocation algorithm (allocation rule)} $A$ 
	is a function
	$A:\mathcal T\to \mathcal A$.
	Given a bid vector $b\in\mathcal T$, the allocation $A(b)$ assigns each task
	to exactly one machine. We denote by $A_i(b)$ the set of tasks assigned to machine $i$ under bids $b$.
	
	\item  A \textbf{payment scheme} $\mathcal P\,$: The \emph{payment scheme} $\mathcal{P}= (\mathcal P_1,\dots,\mathcal P_n)$, specifies, for every bid vector $b$, the payments to the players. In particular, the payment of each agent $i$ is determined by the   function $\mathcal P_i: \mathcal T \rightarrow \mathbb{R}$.

\end{itemize}
The \emph{utility} $u_i^M$ of a player $i$ under the deterministic mechanism $M$ is the payment that he gets from the mechanism, minus the \emph{actual} time they need to process the set of tasks assigned to him, i.e., $u_i^M(b)= \mathcal P_i(b)- t_i(A_i(b))$ where $t_i(A_i(b))=\sum_{j\in A_i(b)}t_{ij}$. The mechanism $M$ is \emph{truthful} if for every player, reporting their true type is a dominant strategy. Formally, for every agent $i\in N$, every pair of types
	$t_i,b_i\in\mathcal T_i$, where $t_i$ is the true type of agent $i$, and every $b_{-i}\in\mathcal T_{-i}$ it holds that: $$u_i^M(t_i,b_{-i})\ge u_i^M(b_i,b_{-i}).$$

	 The performance of $M$ is measured by the makespan $\Mech{A}{t}:=\max_{i\in N}t_i(A_i(t))$ achieved by its allocation algorithm $A$ for input type $t$, which is related to the optimal makespan $\Opt{t}:=\min_{A\in \mathcal A}\max_{i\in N}t_i(A_i)$. 
	 We define the approximation ratio of $A$ as:
	 
	\[
	\rho(A,t) := 
	\begin{cases}
		\frac {\Mech{A}{t}} {\Opt{t}} , & \text{if}~\Opt{t} > 0,\\
		1, & \text{if}~\Opt{t}=0~ \text{and} ~\Mech{A}{t}=0,\\
		\infty, & \text{if}~\Opt{t}=0 ~\text{and}~ \Mech{A}{t}>0.
	\end{cases}
	\] 
	
	We call a mechanism \emph{$\beta$-approximate}, with
        $\beta\in \mathbb R_{+}$, if $\rho(A,t) \leq \beta $ for every
        $t\in \mathcal T$. For an allocation algorithm $A$ with
        \emph{unbounded} approximation ratio it holds that for every
        $\beta\in \mathbb R_{+}$, there is $t\in \mathcal T$ such that
        $\rho(A,t)> \beta$. We will mainly consider mechanisms with
        bounded approximation ratio.

		\subsection{Weak Monotonicity}
                A deterministic mechanism consists of two components:
                an allocation algorithm and a payment scheme.
                Our analysis focuses on the approximation ratio of the
                allocation algorithm, so we will make use of the
                following characterization of allocation algorithms
                which can be implemented truthfully (by some appropriate payment
                scheme).

	 \begin{definition}
	 	An allocation algorithm $A$ is called \emph{weakly monotone (WMON)} if it satisfies the following property: for every $i\in N$ and every two input types $t=(t_i,t_{-i})$ and $t'=(t'_i,t_{-i})$, the corresponding allocations $A_i$ and $A_i'$ satisfy $$t_i(A_i)-t_i(A'_i)\le t'_i(A_i)-t'_i(A'_i). $$
	 	An equivalent condition, using the indicator variables $a_{ij}(t)\in \{0,1\}$ of whether task $j$ is allocated to player $i$, is $$ \sum_{j\in M} (a_{ij}(t')-a_{ij}(t))(t'_{ij}-t_{ij})\le 0. $$

	 	\end{definition}
	 	
	 	It is well known that the allocation function of every truthful mechanism is weakly monotone~\cite{BikhchandaniChatterjiLaviMualemNisanSen2006}. Also, it is a sufficient condition for truthfulness in every convex domain~\cite{SaksYu2005}, and therefore in the scheduling domain.
	 	
	 	We will now consider some immediate consequences and implications of weak monotonicity that will be useful tools in our proof. The first is that when we fix the values of the players for a subset of tasks, then  the restriction of the allocation to the rest of the tasks is still weakly monotone. Formally:
	 	\begin{lemma}
	 		\label{lem:restricted-subset-WMON}
	 			Let $A$ be a weakly monotone allocation, and let $(T,M\setminus T)$  be a partition of $M$. When we fix the values of the players for the tasks in $M\setminus T$, the restriction of the allocation $A$ to $T$ is also weakly monotone.
	 	\end{lemma}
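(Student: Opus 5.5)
The plan is a direct verification from the indicator-variable form of the weak monotonicity definition. First I make the restricted allocation precise. Fix the values $\bar t_{ij}$ for all $i\in N$ and $j\in M\setminus T$. The restricted allocation $A^T$ is defined on types that specify values only for tasks in $T$. Given such a profile $s=(s_{ij})_{i\in N, j\in T}$, let $\hat s$ be the full profile that agrees with $s$ on $T$ and with $\bar t$ on $M\setminus T$. Then set $A^T_i(s):=A_i(\hat s)\cap T$, with indicators $a^T_{ij}(s)=a_{ij}(\hat s)$ for $j\in T$.

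Next I check the WMON inequality for $A^T$. Take any player $i$ and two restricted profiles $s=(s_i,s_{-i})$ and $s'=(s'_i,s_{-i})$ that differ only in player $i$'s values on $T$. Their extensions $\hat s$ and $\hat s'$ also differ only in player $i$'s coordinates, since on $M\setminus T$ both equal $\bar t$. So weak monotonicity of $A$ applies to the pair $\hat s,\hat s'$ and gives
\[
\sum_{j\in M}\bigl(a_{ij}(\hat s')-a_{ij}(\hat s)\bigr)\bigl(\hat s'_{ij}-\hat s_{ij}\bigr)\le 0 .
\]
For $j\in M\setminus T$ we have $\hat s'_{ij}=\hat s_{ij}=\bar t_{ij}$, so every term with $j\notin T$ vanishes, whatever the allocation of those tasks does. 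The sum therefore reduces to $\sum_{j\in T}(a^T_{ij}(s')-a^T_{ij}(s))(s'_{ij}-s_{ij})\le 0$, which is exactly the weak monotonicity condition for $A^T$.

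There is no real obstacle here. The only subtle point is that the tasks in $M\setminus T$ may be reallocated when player $i$ changes its bid on $T$, so $A^T$ need not be a partition of $T$ among a fixed set of machines. This does not matter, because those tasks contribute zero to the sum. I will also note that the equivalence between the two forms of WMON in the definition holds for any finite task set, so it applies to $T$ as well.
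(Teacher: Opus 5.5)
Your proof is correct and is the standard direct argument; the paper itself gives no proof and defers to \cite{ChristodoulouKoutsoupiasKovacs2020Submodular}. You extend both restricted profiles by the fixed values on $M\setminus T$, apply weak monotonicity of $A$, and note that every term with $j\notin T$ vanishes because $\hat s'_{ij}=\hat s_{ij}$. One small wording point: your closing remark is slightly off, since $A^T(s)$ is always a partition of $T$ among $N$ and only the allocation of $M\setminus T$ can change, but this does not affect the argument.
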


	 	See \cite{ChristodoulouKoutsoupiasKovacs2020Submodular} for a simple proof of Lemma~\ref{lem:restricted-subset-WMON}. The following lemma, first used in~\cite{NisanRonen2001},
	 	is a standard tool for showing lower bounds for truthful mechanisms (see~\cite{AshlagiDobzinskiLavi2012,
		ChristodoulouKoutsoupiasVidali2009,DobzinskiShaulker2020,
	GiannakopoulosHammerlPocas2020,MualemSchapira2018,NisanRonen2001} and refer to~\cite{NisanRonen2001} for a proof).
	\begin{lemma}(\cite{NisanRonen2001}) 
		Consider a truthful mechanism $(A,\mathcal{P})$ and its allocation
		for a bid vector $t$. Let $T$ be a subset of the tasks allocated to player $i$
		and let $T'$ be a subset of the tasks allocated to the other players. Consider
		any bid profile $t' = (t'_i,t_{-i})$ that is obtained from $t$ by decreasing the
		values of player $i$ in $T$, i.e., $t'_{ij} < t_{ij}$, $j \in T$, and increasing
		the values of player $i$ in $T'$, i.e., $t'_{ij} > t_{ij}$, $j \in T'$. Then the
		allocation of player $i$ for $t$ and $t'$ agree for all tasks in $T \cup T'$.
	\end{lemma}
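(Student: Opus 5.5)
The plan is to derive the statement directly from weak monotonicity applied to player $i$ alone. Since the mechanism $(A,\mathcal P)$ is truthful, its allocation rule $A$ is weakly monotone. The profiles $t$ and $t'=(t'_i,t_{-i})$ differ only in player $i$'s row, so WMON applies with $t_{-i}$ held fixed. Write $a_{ij}=a_{ij}(t)$ and $a'_{ij}=a_{ij}(t')$. WMON states
\[
\sum_{j\in M}(a'_{ij}-a_{ij})(t'_{ij}-t_{ij})\le 0 .
\]

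The first complication is that the tasks outside $T\cup T'$ may also change values and could contribute terms of either sign. To remove them, I would invoke Lemma~\ref{lem:restricted-subset-WMON}. I must be careful here, because that lemma fixes the values outside a set for both profiles, whereas $t$ and $t'$ differ only on $T\cup T'$ if we read the statement as leaving the other values of player $i$ unchanged. I take this to be the intended reading, namely that $t'$ is obtained from $t$ only by the described decreases on $T$ and increases on $T'$. Then $t'_{ij}=t_{ij}$ for $j\notin T\cup T'$, and those terms vanish from the sum outright, so the lemma is not even needed.

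Next I would show that each remaining term is nonnegative.
\begin{itemize}
\item For $j\in T$ we have $a_{ij}=1$, so $a'_{ij}-a_{ij}\in\{-1,0\}$. Since $t'_{ij}-t_{ij}<0$, the term is $\ge 0$, and it is $>0$ exactly when $a'_{ij}=0$.
\item For $j\in T'$ we have $a_{ij}=0$, so $a'_{ij}-a_{ij}\in\{0,1\}$. Since $t'_{ij}-t_{ij}>0$, the term is again $\ge 0$, and it is strictly positive exactly when $a'_{ij}=1$.
\end{itemize}

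A sum of nonnegative terms that is $\le 0$ must have every term equal to zero. Therefore $a'_{ij}=1$ for all $j\in T$ and $a'_{ij}=0$ for all $j\in T'$, which is exactly the claim that player $i$'s allocation agrees on $T\cup T'$.

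The only real obstacle is the point above: if other coordinates of player $i$ were allowed to change arbitrarily, the conclusion would fail. The strictness of the inequalities is essential, since it is what turns each vanishing term into a forced allocation.
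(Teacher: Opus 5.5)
Your proof is correct and is essentially the standard argument for this lemma. The paper does not reprove it but defers to Nisan--Ronen, whose proof adds the truthfulness inequalities for player $i$ at $t$ and at $t'$ so that the payments cancel, which yields exactly the weak monotonicity inequality you use, and then uses the strict sign of every term on $T\cup T'$. Your reading that player $i$'s values outside $T\cup T'$ are unchanged is the intended one, and under it the argument is complete.
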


	Notice that only the set $T$ of tasks allocated to player $i$
        remains the same. This does not preclude changing the
        allocation of the other players for the tasks in
        $M\setminus T$, unless there are only two
        players. \emph{Graph settings} with only two possible
        players for each task (as defined in Section~\ref{sec:graphs}, see also
        \cite{ChristodoulouKK25,ChristodoulouKoutsoupiasKovacs2020Submodular,
          ChristodoulouKK26}) are helpful to overcome this hurdle.
	
	\subsection{Randomized Mechanisms}
        In this paper, we restrict attention to \emph{universally truthful
   randomized mechanisms}, that is, randomized mechanisms whose support consists
   only of truthful deterministic mechanisms. Since the performance (i.e., the expected makespan) of such mechanisms depends only on
   the allocation rules of the mechanisms in their support, we may equivalently view such
   randomized mechanisms as distributions over deterministic allocation rules that are
   implementable by truthful mechanisms. By \cite{BikhchandaniChatterjiLaviMualemNisanSen2006} every such allocation rule satisfies the weak monotonicity condition. Thus, we can view universally-truthful randomized mechanisms as distributions over deterministic weakly monotone allocation algorithms.

   Let $\mathcal A$ denote the set of all deterministic weakly monotone allocation
   algorithms, and let $\mathcal R$ denote the set of all \emph{discrete} probability
   distributions over $\mathcal A$. For every  reported bid profile  $b\in \mathcal T$, a randomized mechanism $R\in\mathcal R$ samples
   an allocation rule $A \sim  R$ and applies it to $b.$ 
   Let $$\operatorname{supp}(R):= \{A'\in \mathcal  A\;|\; \mathbb P_{A \sim R}(A=A')>0\}$$ denote the \emph{support of $R$}.

   For every $R\in \mathcal R$ and $t \in \mathcal T$, let
   $\Mech{R}{t} := \E_{A \sim R}[\Mech{A}{t}]$ denote the expected
   makespan of $R$ for input $t$.  The \emph{(expected) approximation
     ratio} of $R$ for a given type $t$ is defined
   as $$\rho(R,t)=\mathbb{E}_{A\sim R}[\rho(A,t)].$$ Note that, if
   $\Opt{t}>0,$ then $\rho(R,t)=\Mech{R}{t}/\Opt{t}.$ We call a
   \emph{randomized mechanism $R$ $\beta$-approximate} if
   $\rho(R,t)\leq \beta$ for every $t\in \mathcal T.$

	\section{Lower Bound}
\label{sec:lb}

In this section, we derive our main result by describing a
distribution over instances and showing a lower bound on the ratio of
the expected makespan of any given deterministic weakly monotone
algorithm $A$ divided by the expected optimum over this input
distribution. 

In Section~\ref{sec: Yao} we show that it suffices to
focus on universally truthful mechanisms with finite support. This allows us to apply the Yao Principle assuming a 'finite game' between algorithm designer and adversary. Moreover, having a finite support is necessary so that there trivially exists a \emph{common} upper bound $K$ on the approximation ratio of the considered deterministic algorithms (in the support of the randomized algorithm). The parameter $K$ affects the size of input instances used in the instance distribution.
In Section~\ref{sec:graphs} we describe \emph{graph instances} for the scheduling problem, and  the \emph{Box Theorem} from~\cite{ChristodoulouKoutsoupiasKovacs2023Proof} used in the lower bound proof. 
 Then,
in Section~\ref{sec:random-instances} we define the distribution of
the random instances. 
In the last two sections, we analyze the
approximation guarantee, first based on an analysis for an arbitrary fixed \emph{star} in Section
\ref{sec:star} and finally for a \emph{random star} in Section
\ref{sec:random-star}.

\subsection{Applying Yao's Principle}
\label{sec: Yao}

The main result of this work is a lower bound on the approximation
ratio of universally truthful mechanisms. We show a lower bound
of $\Omega(n),$ using Yao's Minimax Principle\cite{Yao77, MotwaniR95}.

Obviously, the approximation ratio of a randomized mechanism $R$ is
unbounded, if $\operatorname{supp}(R)$ contains a (single) deterministic
mechanism with unbounded ratio. To see this, assume that such a
mechanism $A$ occurred in $R$ with probability $p>0.$ Then for an
arbitrarily large $\beta\geq 1$ there would exist an instance $t_\beta$
with approximation ratio $\rho(A,t_\beta)\geq \beta/p,$ and
$\rho(R, t_\beta)\geq p\cdot (\beta/p)=\beta$ would hold.  Therefore,
it is enough to show a lower bound only for such $R$ that are (discrete)
distributions over deterministic mechanisms $A$ with bounded
ratio.

However, in order to apply Yao's Principle directly, we need to
assume the stronger condition that a \emph{common} bound $K$ holds for
every algorithm $A$ in $\operatorname{supp}(R).$ For this  reason, \emph{and}  in order to have  a ``finite game'' for Yao's argument, we assume \emph{finite}
support  of $R.$ We define for each $K$ a corresponding set of randomized allocations $\mathcal R_K.$ Then, we show in Proposition~\ref{prop:discrete}, that it is enough to prove the lower bound for randomized mechanisms from $\mathcal R_K$ (for arbitrary given $K$).

Let $\mathcal A_K$ denote the set of deterministic allocation rules
with approximation factor \emph{less than} $K;$ we define the set of
randomized mechanisms $\mathcal R_K$ as
$$\mathcal R_K := \left\{R \in \mathcal R\quad \vert \quad
  \operatorname{supp}(R)\subseteq \mathcal A_K,\;\;
  |\operatorname{supp}(R)|<\infty \right\},$$ that is, the
distributions having only finitely many, $K$-approximate, truthful,
deterministic mechanisms in their support. In Section~\ref{sec:random-instances} we show that for
an arbitrary given $K$, there is a distribution
$\mathcal G^*$ over instances $\mathcal I^*$ with the
property that for every
$A\in \mathcal A_K$
$$\frac{\mathbb E_{\mathcal I^*\sim \mathcal G^*}[\Mech{A}{\mathcal
    I^*}]}{\mathbb E_{\mathcal I^*\sim \mathcal G^*}[\Opt{\mathcal
    I^*}]}>\frac{n-1}{12}.$$ We remark that $\mathcal G^*$ will be a different distribution over instances for different $K$ parameters, because the number of tasks in every $\mathcal I^*$ from the support of $\mathcal G^*,$ will
  increase with increasing $K.$ Then, by applying Yao's
Principle~\cite{Yao77,MotwaniR95}, this implies (see also
Theorem~\ref{thm:mainLB}) that for every $R\in \mathcal R_K$ there exists an
instance $\mathcal I_{R}$ such that
$$\rho(R, \mathcal I_{R})\geq \frac{n-1}{12}.$$

The next proposition shows how this lower bound can be
extended to the case of general discrete distributions.

\begin{proposition} Suppose that for every $K>n+1,$ every universally
  truthful mechanism $R\in \mathcal R_K$ has approximation ratio
  larger than $\beta$. Then every universally truthful mechanism $R$
  has approximation ratio of at least $\beta.$
\label{prop:discrete}
\end{proposition}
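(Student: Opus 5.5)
We argue by contradiction: from a general discrete $R$ with ratio below $\beta$ we build a finitely supported mechanism in some $\mathcal R_K$ with ratio at most $\beta$, contradicting the hypothesis. So suppose $R\in\mathcal R$ is a universally truthful mechanism with $\rho(R,t)<\beta$ for every $t$; we may assume $\beta<\infty$. By the argument given before the proposition, every $A\in\operatorname{supp}(R)$ has bounded approximation ratio, since otherwise $R$ would be unbounded. Also, as the paper notes, VCG (which must be weakly monotone) is $n$-approximate. Hence, if $\beta\le n$, we need only modify $R$ slightly; in general we handle it as follows.

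\textbf{Truncation step.} Write $R=\sum_{k\ge1}p_k\,[A_k]$ with $\sum_k p_k=1$ and each $A_k$ of finite ratio $\beta_k$. Fix $\varepsilon>0$ and choose $N$ with $\sum_{k>N}p_k<\varepsilon$. Define the finitely supported mechanism
\[
R_\varepsilon=\sum_{k\le N}p_k\,[A_k]+\Bigl(\sum_{k>N}p_k\Bigr)[V],
\]
where $V$ is the VCG allocation rule. VCG is weakly monotone and $n$-approximate; we may break ties in favour of a makespan-minimizing allocation among sum-optimal ones so that its ratio is at most $n$ even when $\Opt{t}=0$. Every rule in the support is weakly monotone, so $R_\varepsilon$ is universally truthful. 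Its support is finite, and all members have ratio less than
\[
K:=\max\{\beta_1,\dots,\beta_N,n\}+2 .
\]
This $K$ exceeds $n+1$, so $R_\varepsilon\in\mathcal R_K$.

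\textbf{Comparing ratios.} For every $t$,
\[
\rho(R_\varepsilon,t)\le\sum_{k\le N}p_k\,\rho(A_k,t)+\varepsilon n\le\rho(R,t)+\varepsilon n .
\]
By hypothesis, there is $t_\varepsilon$ with $\rho(R_\varepsilon,t_\varepsilon)>\beta$, hence $\rho(R,t_\varepsilon)>\beta-\varepsilon n$. Since $\varepsilon$ is arbitrary, $\sup_t\rho(R,t)\ge\beta$, so $R$ has approximation ratio at least $\beta$. This is exactly the claim, which says ``at least $\beta$'' rather than ``larger than $\beta$'' for this reason.

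\textbf{Main obstacle.} The delicate point is ensuring the replacement rule has bounded ratio on all instances, including the degenerate ones with $\Opt{t}=0$. Plain VCG could assign a zero-cost task to a machine with positive cost only if tied, so we fix tie-breaking toward zero-cost machines. This gives $\Mech{V}{t}=0$ whenever $\Opt{t}=0$. We must also check that this tie-breaking preserves weak monotonicity; choosing a fixed lexicographic tie-breaking consistent with minimizing the sum suffices, since VCG with any fixed tie-breaking over affine-maximizer outcomes remains implementable.

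Alternatively, we can avoid VCG entirely by renormalizing: take $R'_\varepsilon$ as $R$ conditioned on $\{A_1,\dots,A_N\}$. Its ratio is at most $\rho(R,t)/(1-\varepsilon)$, and the same limiting argument goes through. We would likely use this simpler version.
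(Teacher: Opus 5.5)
Your proof is correct. The renormalized variant you mention at the end is exactly the paper's proof: condition $R$ on the finitely many most likely rules, apply the hypothesis to this finitely supported mechanism, and use $\rho(R,t)\ge(1-\varepsilon)\rho(R'_\varepsilon,t)$. The paper additionally credits the discarded tail with ratio at least $1$, which gives $\beta(1-\delta)+\delta$ instead of $(1-\delta)\beta$.

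Your main VCG-padding version is a valid variant of the same truncation idea. It trades the multiplicative loss $(1-\varepsilon)$ for an additive loss $\varepsilon n$. The price is an auxiliary rule that is weakly monotone with ratio at most $n$ on every instance, which renormalization does not need. So your instinct that renormalization is the simpler route is right.

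Two small corrections.

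First, the contradiction framing does not actually close. Assuming $\rho(R,t)<\beta$ for every $t$ is consistent with $\sup_t\rho(R,t)=\beta$, which is exactly what you derive. What you really use is a case split. If some rule in $\operatorname{supp}(R)$ has unbounded ratio, then $R$ is unbounded and the claim is trivial. Otherwise every $\beta_k$ is finite, so $K$ is finite and at least $n+2>n+1$, and your direct argument gives $\sup_t\rho(R,t)\ge\beta$.

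Second, your tie-breaking worry is unfounded, and the sentence claiming VCG could give a zero-cost task to a positive-cost machine ``if tied'' is false. Every cost-minimizing machine for a task has cost $t_j^{\min}$, so under any tie-breaking the makespan is at most $\sum_j t_j^{\min}\le n\cdot\Opt{t}$. This is $0$ whenever $\Opt{t}=0$. Moreover, any exact minimizer of $\sum_i t_i(A_i)$ is weakly monotone regardless of how ties are broken. To see this, add the two optimality inequalities at $(t_i,t_{-i})$ and $(t'_i,t_{-i})$; the terms of the other players cancel and leave precisely the WMON inequality.
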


\begin{proof} Let $R$ be a given universally truthful mechanism and let $\delta>0$ be
 an arbitrary small constant. Consider the (finitely many) deterministic mechanisms in
  $\operatorname{supp}(R),$ having the highest probabilities of being
  selected with total probability that sums up in at least $1-\delta.$ Let
  $R'$ denote the corresponding distribution, obtained from $R$
  restricted to these mechanisms, with the appropriate normalization
  of their probabilities. Clearly $R'$ has finite support.

  Pick an instance $\mathcal I(\beta,K_\delta)$ that shows approximation
  ratio higher than $\beta$ for $R'$, as defined
  for the highest approximation ratio $K_\delta$ among these finitely
  many mechanisms. Then $R$ has (expected)
  approximation factor of at least $ (\beta(1-\delta)+ 1\cdot \delta)$ which goes to $\beta,$ when 
  $\delta$ goes to $0.$
\end{proof}

\subsection{Graph Instances and the Box Theorem}
\label{sec:graphs}

We generally consider instances that are described by
\emph{multi-graphs} with a vertex set $N$ and an edge set $E:=M.$ The edges
correspond to the tasks and the vertices correspond to the machines
(cf.~\cite{ChristodoulouKK25,ChristodoulouKK26}).  An edge $e $ between vertices
$u, v \in N$ models a task that has very high processing
times\footnote{E.g., processing time $K^2$ will suffice. }  on all
machines except for $u$ and $v$, such that every allocation where $e$
is not processed by either $u$ or $v$ has approximation ratio of at
least $K.$  

Let $Q_{\{u,v\}}$ denote the set of
parallel edges between any two different vertices $u,v\in N.$ Each of these edges represents a different
task and comes with two values for the processing times on machines
$u$ and $v$, respectively. In particular, we will use a
\emph{multi-clique} with edge-multiplicity $\ell$ for an appropriate
$\ell;$ then select a random \emph{multi-star}, and finally a random
\emph{star} from the edges of the multi-clique.

An input $t$ is called \emph{trivial}, if for every task $j\in E$ holds that $t_{ij}=0$ for at least one $i\in N.$ Trivial instances are useful, because if we increase the values on a subset of tasks (say, on a star), then the approximation ratio of $A$ restricted to this subset, gives a lower bound on the optimal approximation ratio altogether.

Consider some edge $e\in Q_{\{u,v\}}$ in an arbitrary multigraph instance $\mathcal I.$ If we fix the values of every other edge, then the allocation of edge $e$ by mechanism $A$ is determined by a so called \emph{boundary function} $\psi^{e,\mathcal I}_{u,v}: \mathbb R_{\geq 0} \rightarrow \mathbb R_{\geq 0},$ as follows (we omit the upper indices $^{e,\mathcal I}$): Let $(t_u, t_v)$ denote the processing times of $u$ and $v$ for task $e,$ then $u$ receives task $e$ if $t_u<\psi_{u,v}(t_v),$ and $v$ receives it if $t_u>\psi_{u,v}(t_v).$ 
The boundary functions are non-decreasing, and it is easy to see that,  $\psi_{v,u}\equiv \psi^{-1}_{u,v}$ holds (except for  discontinuity points). 
A straightforward observation about boundary functions for trivial instances is the following:

\begin{lemma} \label{obs:alphasNew} {\cite{ChristodoulouKK26}} Let $\mathcal I$ be a trivial instance. If  the approximation factor of $A$ is less than $K,$ then for every $u,v\in N,$ every task $e\in Q_{\{u,v\}}$ and every running time $t_v=t_v^e$ for vertex $v,$ it holds that 
\begin{enumerate}
\item[(i)] $t_v/K<\psi_{u,v}(t_v)<K \cdot t_v;$%
\item[(ii)] $\lim_{t_v\rightarrow 0}\psi_{u,v}(t_v)=\psi_{u,v}(0)=0.$ 
\end{enumerate}

\end{lemma}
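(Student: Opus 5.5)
We prove both parts by constructing, for each claimed inequality, a modified trivial instance in which the allocation rule $A$ would violate its approximation bound $K$. All other tasks keep their values throughout. Since $\mathcal I$ is trivial, every task other than $e$ has a zero-cost machine. When we vary only the two entries $(t_u,t_v)$ of $e$, the allocation of $e$ is governed by $\psi_{u,v}$, and the optimum is easy to read off: assign every other task to a machine where it costs $0$, and give $e$ to whichever of $u,v$ has the smaller value. Hence $\Opt{t}\le \min(t_u,t_v)$. (All other machines have cost $K^2$ on $e$, so giving $e$ to any of them is never optimal.) The approximation bound then only has to be tested on the edge $e$ itself.

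\textbf{Part (i).} Fix $t_v>0$ and suppose $\psi_{u,v}(t_v)\ge K t_v$. Choose $t_u$ with $K t_v\le t_u<\psi_{u,v}(t_v)$; this interval is nonempty unless $\psi_{u,v}(t_v)=Kt_v$ exactly. For such $t_u$, machine $u$ receives $e$, so the makespan is at least $t_u$, while $\Opt{t}\le t_v$. The ratio is then at least $t_u/t_v\ge K$, contradicting that $A$ is less than $K$-approximate. The boundary case $\psi_{u,v}(t_v)=Kt_v$ is handled by taking $t_u$ slightly below $\psi_{u,v}(t_v)$. Then $u$ still gets $e$, and the ratio exceeds $K-\epsilon$ for every $\epsilon>0$. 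Because the factor is strictly less than $K$ and the condition is an open one, I would absorb this by perturbing $K$ slightly; this is the one delicate point. The lower bound $\psi_{u,v}(t_v)>t_v/K$ is symmetric: if $\psi_{u,v}(t_v)\le t_v/K$, take $t_u$ just above $\psi_{u,v}(t_v)$. Then $v$ receives $e$, the makespan is at least $t_v$, and $\Opt{t}\le t_u\approx t_v/K$. Alternatively, apply the first argument to $\psi_{v,u}=\psi_{u,v}^{-1}$.

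\textbf{Part (ii).} Set $t_v=0$. If $\psi_{u,v}(0)=c>0$, choose any $0<t_u<c$. Then $u$ receives $e$ and the makespan is positive, while $\Opt{t}=0$ because giving $e$ to $v$ yields an all-zero schedule. By definition this ratio is $\infty$, a contradiction; hence $\psi_{u,v}(0)=0$. The limit statement follows from part (i): $0\le\psi_{u,v}(t_v)<Kt_v\to 0$.

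The main obstacle is the strict versus non-strict inequality in the boundary case of part (i). I would resolve it by using that ``approximation factor less than $K$'' means the supremum of the ratio over instances is some $K'<K$, so the near-boundary instances give ratios above $K'$, which is the needed contradiction.
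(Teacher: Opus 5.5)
Your proof is correct and is essentially the paper's own argument. You place $u$'s value just below (respectively above) $\psi_{u,v}(t_v)$, bound $\Opt{t}\le\min(t_u,t_v)$ using triviality of the other tasks, and let the slack tend to $0$, relying on the approximation factor being a supremum strictly below $K$. The only cosmetic difference is in (ii): you prove $\psi_{u,v}(0)=0$ directly via an instance with $\Opt{t}=0$, whereas the paper derives (ii) from (i), since monotonicity of $\psi_{u,v}$ gives $\psi_{u,v}(0)\le\psi_{u,v}(t_v)<Kt_v$ for all $t_v>0$.
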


Consider a (trivial) instance, and a boundary function $\psi_{u,v}()$ of some edge $e\in Q_{\{u,v\}}$ with running times denoted by $(t_u,t_v).$ Since some favorable properties of the allocation $A$ do not hold for $t_v$ values where $\psi_{u,v}()$ is not continuous, we would like to avoid such $t_v$ points when defining our instances. Since the $\psi_{u,v}()$ are non-decreasing, fortunately (for fixed $\mathcal I$) each $\psi_{u,v}$ has at most countably many discontinuity points. Also, $t_v=0$ is never a discontinuity point by Lemma~\ref{obs:alphasNew} (ii). It is known that if the nonzero costs of a trivial multi-graph instance are all independently, randomly perturbed, then discontinuity points of all the  boundary functions (w.r.t. the new, perturbed instance(!)) are avoided~\cite{ChristodoulouKK26}. We say that such an instance \emph{fulfills the continuity requirement}:

\begin{definition}[Continuity requirement~\cite{ChristodoulouKK26}]
\label{def:continuity}
  Fix a mechanism and consider a multi-graph instance $\mathcal I.$ We say that $\mathcal I$ \emph{has a discontinuity} if there exists some edge $e\in Q_{\{u,v\}}$ with costs $(t_u,t_v)$ so that $t_v\neq 0$ and $\psi_{u,v}(\cdot)$ is discontinuous at $t_v$ (or analogously for $t_u$). %
  We say that an instance \emph{satisfies the continuity requirement} if \emph{no rational translation of $\mathcal I$ has a discontinuity}.\footnote{In a \emph{rational translation} every running time  may be shifted, each by arbitrary different rational numbers. }
\end{definition}

An important role in our lower bound proof will be played by star (partial) instances $S$ that are so called \emph{boxes} (or $\varepsilon$-\emph{boxes}) for the given allocation rule $A.$ We define the \emph{box} property next, see also Figure~\ref{fig:box}. 

\begin{definition}
Let the vertices/machines $u$ and $v_1, v_2,\ldots, v_k $ be the \emph{root} and \emph{leaf} vertices of a star $S$ in some given multi-clique instance $\mathcal I.$ Let $e_1, e_2, \ldots , e_k$ be the respective edges of the star, and $(t_{v_1}, t_{v_2},\ldots, t_{v_k})$ strictly positive running times of the leaf vertices on them, respectively. The star $S$ is an \emph{$\varepsilon$-box}, when setting  $t_{u}^{e_j}=\psi_{u,v_j}(t_{v_j})-\varepsilon$ for \emph{every} $j=1,2,\ldots, k$ \emph{at once,} the mechanism $A$ allocates (at least) each of the tasks $e_1, e_2, \ldots , e_k$ to the root player.
\end{definition}

Our proof uses the \emph{Box Theorem} (Theorem~\ref{thm:box-restated}), originally proved in \cite{ChristodoulouKK26}. Let $\mathcal I$ be  a trivial  multi-star (or multi-clique) instance, and let $\mathbb P$ denote the probability that a randomly selected star (among the edges having zero costs for the root vertex, and positive costs for the leaf vertices), is an $\varepsilon$-box. The Box-Theorem essentially states that $\mathbb P \rightarrow 1,$ when $\ell \rightarrow \infty,$ where $\ell$ denotes the edge-multiplicity. 
The precise definitions and an adapted version of the Box Theorem are given in the Appendix (Section~\ref{sec:box}).

\subsection{Definition of Random Instances}
\label{sec:random-instances}

The distribution over instances that we use for Yao's argument, can
be viewed as a carefully constructed random multi-graph. We describe
the construction for a given number of machines $n,$ approximation bound
$K$ of the deterministic mechanisms $A\in \mathcal A_K,$ and
parameters $\varepsilon\in (0,\frac{1}{K+1})$ and an even number
$\ell=\ell(\varepsilon)$ as edge-multiplicity, both affecting the
precision of the lower bound (see Lemma~\ref{lem:LBMakespan}).

The basis will be a multi-clique graph $\mathcal{MC}=(N,E)$ over the
players as vertices, and $\ell$ edges for each pair of different
players, that is, altogether $|E|=\ell n(n-1)/2$ tasks. All costs of a
task on other machines than the two endpoints, we set to $K^2$ in every
considered input matrix. Furthermore, we fix a mapping
$\mathcal I_s: E\rightarrow (1-\varepsilon, 1)$ that associates an
independently and uniformly selected value
$s^e \in (1-\varepsilon, 1)$ with every single edge $e\in E.$ These
will be perturbed processing times (as compared to processing time
$1$) for one of the two endvertices of $e.$ This randomization
does \emph{not} play any role in Yao's argument, and we fix
$\mathcal I_s$ \emph{before} choosing the random instance
$\mathcal I^*$ by the distribution $\mathcal G^*$ used for the Yao
proof, (in particular, $\mathcal G^*$ will have finite support). We need
the perturbed $s^e$, only in order for the Box Theorem to hold.

For the analysis later, it is useful to define a preliminary (part of the) distribution $\mathcal G(n, \ell, \varepsilon)$.
 For each vertex pair $u\neq v,$ select a subset of exactly half of their parallel edges $Q_{\{u,v\}}$ uniformly at random, and denote it by $E_{u,v}$.
For every $e \in E_{u,v}$, the processing time for vertex $v$ is set to $s^e(\approx 1),$ and the processing time for vertex $u$ is set to $0.$
Conversely, the remaining half of the parallel edges that were not selected before is denoted by $E_{v,u}$, and the processing times are defined symmetrically: $0$ for vertex $v$, and $s^e$ for vertex $u.$
This describes our preliminary instance (later referred to as \emph{random multi-clique}) $\mathcal I\sim \mathcal G(n, \ell, \varepsilon)$.

Next, we will randomly select a star within the random multi-clique.
To that end pick a vertex $i \in N$ uniformly at random, which we call the \emph{root}.
Every other vertex $j \neq i$ is called \emph{leaf}.
For every leaf $j$, select an edge $e_j \in E_{i,j}$ (i.e., with processing time 0 for $i$), uniformly at random.
By $t_j$ we denote the processing time of $e_j$ for root $i$, and by $s_j$ we denote the processing time of $e_j$ for leaf $j.$
By construction it holds $t_j = 0$ and $s_j = s^{e_j}.$
The set of selected edges $\{e_j\}_{j \neq i}$ form a star subgraph, denoted by $S$.

Finally, for an arbitrary given parameter $c\in (0, 1/2)$ (to be optimized later), we increase the costs of the root player for all the star edges from $t_j=0$ to $t_j^*=c\quad (\forall j :  j\neq i).$ The obtained instance is $\mathcal I^*,$ and the distribution over all these instances (for fixed $\mathcal I_s$)  will be denoted by  ${\mathcal G^*(n, \ell, \varepsilon, c)}.$

Note that in every instance  $\mathcal I^*\sim\mathcal G^*(n, \ell, \varepsilon, c)$, every edge that does not belong to the random star $S$ has processing time 0 for one of its incident vertices. 
Thus, an optimal allocation depends only on the allocation of $S$, and if every $e_j \in S$ would be allocated to leaf $j$, then the total makespan would be at most 1.

\subsection{Allocation of a Fixed Star}
\label{sec:star}
In this subsection we consider a fixed set of parameters $n, \ell \in \mathbb N$, $\,\varepsilon \in (0,\frac{1}{K+1})$,  $\,c \in (0,\frac{1}{2}),$ and a fixed instance $\mathcal I^* $ from the support of ${\mathcal G^*(n, \ell, \varepsilon, c)}.$ Assume that $\mathcal I^*$ was obtained by choosing the multi-clique instance $\mathcal I,$ then selecting the root player $i\in N$ and the star $S=\{\,e_j \,|\, j\neq i\,\}$ of edges so that $e_j\in E_{i,j}.$

The instances $\mathcal I$ and $\mathcal I^*$  differ only in their processing times on $S\,$: for each $e_j \in S$ the processing time of player $i$ is $\,t_j=0\,$ in instance $\mathcal I,$ but it is $\,t^*_j=c\,$ in instance $\mathcal I^*.$ In both instances, $s_j \in (1{-}\varepsilon, 1)$ is the processing time that $j$ has for $e_j.$ We define two relevant subsets of the star edges for the given fixed allocation $A:\,$ Let
\[
	B^{\mathcal I}_{i,S} := \left\{e_j \in S\; ~\Big\vert~ \; 1{-}\varepsilon \,\leq\, \psi_{i,j}^{e_j, \mathcal I}(1-\varepsilon) \right\}.
\]
Note that if $e_j\in B^{\mathcal I}_{i,S},$ then the root player $i$ receives $e_j$ when $t_j$ is increased to $t_j':=1-2\varepsilon$ in instance $\mathcal I$ because for the critical value $1-2\varepsilon < \psi^{e_j,\mathcal I}_{i,j}(1-\varepsilon)\leq \psi^{e_j,\mathcal I}_{i,j}(s_j)$ holds. However, for the next lemma we will need that the root gets \emph{all} tasks from $B^{\mathcal I}_{i,S},$ if we set  $t_j'=1-2\varepsilon$ for \emph{all} $j\in B^{\mathcal I}_{i,S}$ at once.
A sufficient condition for this is that the star $S$ is a box.\footnote{The star $S$ being a box or not, concerns the allocation figure of the root player restricted to $S$ (cf. Figure~\ref{fig:box}); so being a box is an independent property of the setting of $t_j$ values and there cannot be a confusion between $\mathcal I$ and $\mathcal I^*$ but it is clearer to think of the instance $\mathcal I$ for this aspect.} Fortunately, a random star of root $i$ is a box with arbitrarily high probability, by the Box Theorem~\ref{thm:box-restated}.

We also define the following edge set w.r.t.  $\mathcal I^*$ and the given deterministic allocation $A:\,$
\[
	T^{\mathcal I^*} := \left\{e_j \in S\; ~\vert~\; A \text{ allocates } e_j \text{ to root } i \text{ for instance } \mathcal I^* \right\}.
\]

\begin{observation}
Since every $e_j \in S$ has a processing time of $c$ in instance $\mathcal I^*,$ it holds that $$\Mech{A}{\mathcal I^*} \ge c \cdot |T^{\mathcal I^*}|.$$
\end{observation}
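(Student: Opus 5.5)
The plan is a direct lower bound on the load of the root player $i$ under the allocation $A$ for the input $\mathcal I^*$. By the definition of $\Mech{A}{\mathcal I^*}$ as a maximum over all machines, it suffices to exhibit one machine whose completion time is at least $c\cdot|T^{\mathcal I^*}|$. The natural choice is the root $i$ itself, because every task in $T^{\mathcal I^*}$ is, by definition, assigned to $i$.

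First, recall how $\mathcal I^*$ is constructed. For each leaf $j\neq i$, the star edge $e_j\in E_{i,j}$ originally had processing time $t_j=0$ on $i$, and this was raised to $t^*_j=c$. So for every $e_j\in S$, the cost of machine $i$ on $e_j$ in $\mathcal I^*$ is exactly $c$. Since $T^{\mathcal I^*}\subseteq S$ and $T^{\mathcal I^*}\subseteq A_i(\mathcal I^*)$, we get
\[
t_i\bigl(A_i(\mathcal I^*)\bigr)=\sum_{e\in A_i(\mathcal I^*)} t_{i e}\;\ge\;\sum_{e_j\in T^{\mathcal I^*}} t^*_j \;=\; c\cdot |T^{\mathcal I^*}|.
\]
The inequality uses only that all processing times are nonnegative, because $A_i(\mathcal I^*)$ may contain further tasks outside $S$.

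Finally, $\Mech{A}{\mathcal I^*}=\max_{k\in N} t_k(A_k(\mathcal I^*))\ge t_i(A_i(\mathcal I^*))$, which gives the claim. No step is a genuine obstacle. The only point that needs care is that the costs of $i$ on the star edges in $\mathcal I^*$ are exactly $c$, and not the original $0$ or the perturbed values $s_j$ (those belong to the leaves). This holds by the definition of $\mathcal G^*(n,\ell,\varepsilon,c)$.
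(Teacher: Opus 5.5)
Your proof is correct and matches the paper's one-line justification: the root $i$ receives every task of $T^{\mathcal I^*}$, each costing it exactly $c$ in $\mathcal I^*$. Since all costs are nonnegative, the root's load, and hence the makespan, is at least $c\cdot|T^{\mathcal I^*}|$.
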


Crucially, under the assumption that $S$ is an $\varepsilon$-box for $A,$ we can relate $|B^{\mathcal I}_{i,S}|$ and $|T^{\mathcal I^*}|.$ The next lemma is the core of the lower bound proof:

\begin{lemma}
	\label{lem:BTrelation}
	If $S$ is an $\varepsilon$-box for allocation algorithm $A$, then
	\[
		|T^{\mathcal I^*}| \ge \frac {(1-2\varepsilon) \cdot |B^{\mathcal I}_{i,S}| - c \cdot (n-1)} {1 - c}.
	\]
\end{lemma}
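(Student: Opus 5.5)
Let $B:=B^{\mathcal I}_{i,S}$ and $T:=T^{\mathcal I^*}$. The plan is to apply weak monotonicity once, to the root player $i$, comparing two profiles that differ only in the root's costs on the star $S$. All other costs are kept as in $\mathcal I$ (equivalently $\mathcal I^*$), so by Lemma~\ref{lem:restricted-subset-WMON} we may restrict attention to the tasks of $S$. The first profile is $\mathcal I^*$ itself: the root has cost $c$ on every $e_j\in S$ and receives exactly the tasks in $T$ among them. The second profile $\mathcal I'$ sets the root's cost to $t'_j:=1-2\varepsilon$ on $e_j\in B$ and to $t'_j:=c$ on $e_j\in S\setminus B$ (or any cost the root certainly loses, e.g.\ a large value on $S\setminus B$, if that is more convenient).

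The first step is to show that in $\mathcal I'$ the root receives every task of $B$. By definition of $B$, $1-2\varepsilon<\psi_{i,j}(1-\varepsilon)\le\psi_{i,j}(s_j)$. Since $S$ is an $\varepsilon$-box, setting all root costs to $\psi_{i,j}(s_j)-\varepsilon$ simultaneously gives all of $S$ to the root. I would then lower the costs on $B$ to $1-2\varepsilon$ (below $\psi-\varepsilon$, using $s_j>1-\varepsilon$, or monotonicity of $\psi$ with the $\varepsilon$ slack) and raise the costs on $S\setminus B$ to the chosen values. Lemma~2 (Nisan--Ronen), applied with $T=B$ and $T'=\emptyset$, shows that the root keeps $B$; the tasks in $S\setminus B$ are irrelevant. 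This is the step I expect to be the main obstacle: one must handle carefully the fact that the box property refers to the exact values $\psi-\varepsilon$, and why a lower value $1-2\varepsilon$ is fine. Decreasing own costs on allocated tasks preserves them, so it should work, though the continuity requirement may be needed to avoid boundary points.

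The second step is the WMON inequality $\sum_j (a_{ij}(\mathcal I')-a_{ij}(\mathcal I^*))(t'_j-t^*_j)\le 0$. Only tasks in $B$ have $t'_j-t^*_j=1-2\varepsilon-c>0$, and each of them has $a_{ij}(\mathcal I')=1$. Hence $(1-2\varepsilon-c)\,|B\setminus T|\le 0$ if we keep the cost $c$ on $S\setminus B$, which would give $B\subseteq T$ and so an even stronger bound. If instead we raise the costs on $S\setminus B$, we need to control those terms, which is presumably why the stated bound carries the loss $c(n-1)$. If the simple version fails, I would fall back on the utility form $t_i(A_i)-t_i(A_i')\le t'_i(A_i)-t'_i(A'_i)$ with $A_i\cap S=T$ and $A'_i\cap S\supseteq B$. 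Rearranging gives $(1-2\varepsilon)|B| - c|T| \le |T|(1-c) + c(n-1)$, bounding the contribution of the unknown part of $A'_i$ by $c\cdot|S|=c(n-1)$. Solving this for $|T|$ yields exactly $|T|\ge\frac{(1-2\varepsilon)|B|-c(n-1)}{1-c}$, which is the target form. So the final write-up will define the profiles so that this accounting is valid, charging at most $c$ per star edge to the uncontrolled terms and at most $1$ (more precisely $1-c$ net) per edge of $T$.
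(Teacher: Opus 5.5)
\textbf{The first step fails, and the rest of the argument depends on it.} At the box corner $o=(\psi_{i,j}(s_j)-\varepsilon)_{e_j\in S}$ the root holds \emph{all} of $S$, including $S\setminus B$. The Nisan--Ronen lemma only lets you lower the root's costs on tasks it holds and raise them on tasks it does not hold. Moving the costs on $S\setminus B$ to $c$, or to something large, generally \emph{raises} costs of held tasks, since $\psi_{i,j}(s_j)-\varepsilon$ can be as small as about $1/K$. So ``the tasks in $S\setminus B$ are irrelevant'' is exactly what cannot be claimed. Leaving the box in those coordinates can cross a bundling facet of $R_S$ and make the root drop everything.

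Nothing in the hypotheses you use excludes the following example. On $S=\{b,d_1,d_2,d_3\}$ let the root receive the set $R$ minimizing $\sum_{e_j\in R}t_j+c_R$, with
$c_S=0$, $c_{S\setminus\{d_k\}}=0.01$, $c_{\{b,d_k\}}=0.14$, $c_{\{b\}}=0.6$, and $c_R=1.03-0.01|R|$ for $R\subseteq\{d_1,d_2,d_3\}$ (including $R=\emptyset$). Take $c=0.45$.
\begin{itemize}
\item The critical values with the other star costs at $0$ are $\psi_b=1$ and $\psi_{d_k}=0.01$.
\item The corner $(1-\varepsilon,0.01-\varepsilon,0.01-\varepsilon,0.01-\varepsilon)$ lies strictly inside $R_S$, so $S$ is a box, and $B=\{b\}$ provided $\psi_{i,b}(1-\varepsilon)\ge 1-\varepsilon$.
\item Yet at $t^*=(c,c,c,c)$, at your $t'=(1-2\varepsilon,c,c,c)$, and also with large costs on the $d_k$, the unique minimizer is $R=\emptyset$.
\end{itemize}
So the root loses $b$ at $t'$, and your ``even stronger'' conclusion $B\subseteq T$ is false. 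That should have been a warning sign: with $c$ close to $1$ it would give a lower bound of about $(n-1)/2$.

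The fallback does not help. Your $t'$ and $t^*$ differ only on $B$, so the utility form of WMON is literally the same inequality $(1-2\varepsilon-c)|B\setminus T|\le 0$. It still presupposes $A'_i\cap S\supseteq B$, and the displayed rearrangement is not derived from anything.

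\textbf{What is missing.} Your overall plan is the right one: a single WMON comparison between $\mathcal I^*$ and a profile in which the root provably holds a known set. The missing idea is to choose that profile \emph{below} the corner: $t'_j=1-2\varepsilon$ for $e_j\in B$ and $t'_j=0$ (not $c$) for $e_j\in S\setminus B$.
\begin{itemize}
\item For $e_j\in B$, $1-2\varepsilon\le\psi_{i,j}(1-\varepsilon)-\varepsilon\le\psi_{i,j}(s_j)-\varepsilon$.
\item For $e_j\notin B$, $0<\frac{1-\varepsilon}{K}-\varepsilon<\psi_{i,j}(s_j)-\varepsilon$, by Lemma~\ref{obs:alphasNew} and $\varepsilon<\frac{1}{K+1}$.
\end{itemize}
So only costs of held tasks decrease, and the root receives all of $S$.

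WMON between $t'$ (root gets $S$) and $t^*$ (root gets $T$) gives $\sum_{e_j\in S\setminus T}(t'_j-t^*_j)\le 0$. That is, $(1-2\varepsilon-c)|B\setminus T|\le c\,|(S\setminus B)\setminus T|$, hence $(1-2\varepsilon)|B\setminus T|\le c\,(n-1-|T|)$. Adding $(1-2\varepsilon)|B\cap T|\le |T|$ gives exactly $(1-2\varepsilon)|B|-c(n-1)\le (1-c)|T|$.

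The $c(n-1)$ loss you were trying to explain is the right-hand side produced by the coordinates of $S\setminus B$, where $t'$ and $t^*$ differ by $c$. It is the price of staying inside the box. This is the paper's proof, which phrases the same inequality through the affine-minimizer constants $c_R$.
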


\begin{proof} For convenience, let $B := B^{\mathcal I}_{i,S}$ and $T := T^{\mathcal I^*} $. We consider the weakly monotone allocation $A$ to the root player $i,$ restricted to the $n-1$ tasks of the star $S,$ \emph{as a function of the values of the root $(t_j)_{j\neq i}$ only}. The input for every other task is fixed, as well as the costs $(s_j)_{j\neq i}$ of the other players on $S.$ It is known \cite{Vidali2009, ChristodoulouKK26} that depending only on $(t_j)_{j\neq i},$ a WMON allocation to player $i$ is determined by minimizing over linear expressions of the form $\sum_{e_j \in R} t_j + c_R$ for some constants $c_R,$ one for each subset $R\subseteq S.$ More precisely, there exist constants $c_R \in \mathbb R$  for every $R \subseteq S$ (determined by the payments to player $i$ for the sets $R$) so that $A$ allocates a subset $R \subseteq S$ to the root $i$ such that
	\[
		\sum_{e_j \in R} t_j + c_R
	\]
	is minimized (see also Subsection~\ref{sec:region-r_p}). W.l.o.g. we assume for simplicity that $c_S=0$ for the complete task set $S.$

	We will consider two different settings of $(t_j)_{j\neq i}.$ One of them is $t^*$ of the instance $\mathcal I^*;$ the other (called $t'$) will be one of the 'corners' of the box $S,$ namely we increase to $1-2\varepsilon$ every $t_j$ in those dimensions $j$ where the box is large: for all tasks of $B.$\footnote{ Geometrically, $\psi^{e_j}_{i,j}(s_j)$ is the length of the box in dimension $j$; and we know that for these tasks $\psi^{e_j}_{i,j}(s_j)\geq 1-\varepsilon.$ } In particular we  define $\mathcal I'$ by modifying $\mathcal I$ as follows:\footnote{Strictly speaking, $t'_j$ should be infinitesimally smaller than $1-2\varepsilon$ to avoid tie-breaking issues.} For every $e_j \in S$, let 
	\[
		t_j' := 
		\begin{cases} 
			1-2\varepsilon &\, \text{ if } e_j \in B,\\
			0 &\, \text{ if } e_j \in S\setminus B.
		\end{cases}
	\]
	The rest of the instance $\mathcal I'$ is the same as for $\mathcal I.$ Mind that the $c_R$ are the same constants for $\mathcal I,\, \mathcal I',$ and $\mathcal I^*,$ since they are independent of $(t_j)_{j\neq i}.$ They will help compare $|B|$ and $|T|.$ 
	
	We start by lower bounding every $c_R$ in terms of $|B|.$ By assumption, $S$ is an $\varepsilon$-box, meaning that $A$ allocates all edges of $S$ to root $i$ when every $t_j$ is set to at most $\psi^{e_j}_{i,j}(s_j)-\varepsilon$ at the same time. In $\mathcal I'$ this holds for every $e_j\in B,$ since $t'_j= 1-2\varepsilon = 1-\varepsilon -\varepsilon \leq \psi^{e_j}_{i,j}(1-\varepsilon)-\varepsilon\leq \psi^{e_j}_{i,j}(s_j)-\varepsilon,$ by definition of $B$ and monotonicity of $\psi_{i,j}.$ It also holds for every $e_j\in S\setminus B,$ because then $t'_j=0=\varepsilon-\varepsilon<\frac{1-\varepsilon}{K}-\varepsilon< \psi_{i,j}^{e_j}(1-\varepsilon)-\varepsilon\leq \psi_{i,j}^{e_j}(s_j)-\varepsilon.$ Here the first inequality holds due to $\varepsilon < \frac{1}{K+1};$ the second holds by Lemma~\ref{obs:alphasNew}; the third by monotonicity of $\psi_{i,j}.$
	
	This means that for input $\mathcal I'$ allocating the set $S$ to root $i$ minimizes $\,\sum_{e_j \in R} t_j' + c_R\,$ over all ${R\subseteq S},$ that is, for every $R\subseteq S$ we have, using $c_S=0$ that
	\begin{align*}
		&& \sum_{e_j \in S} t_j' + c_S &\le \sum_{e_j \in R} t_j' + c_R &&\\
		\quad &\Leftrightarrow& \quad \sum_{e_j \in S} t_j' - \sum_{e_j \in R} t_j' &\le c_R &&\\
		\quad &\Leftrightarrow& \quad \sum_{e_j \in B \setminus R} t_j' &\le c_R \;. &&
	\end{align*}
	Using the definition of $t_j'$ again, this is equivalent to
	\begin{equation}
		\label{eqn:affine-minimizer-offset}
		(1-2\varepsilon) \cdot |B \setminus R| \le c_R \;.
	\end{equation}
	
	Now consider the allocation for the instance $\mathcal I^*.$
	By definition, $T \subseteq S$ is the subset of edges that is allocated to root $i$ in this instance, and weak monotonicity implies

	\begin{align*}
		&& \sum_{e_j \in T} t_j^\star + c_T &\le \sum_{e_j \in S} t_j^\star + c_S &&\\
		&\Leftrightarrow& \quad c \cdot |T| + c_T &\le c \cdot (n-1) &&\\
		&\Leftrightarrow& \quad c_T &\le c \cdot \left( (n-1) - |T| \right).&&
	\end{align*}
	We used that $t_j^\star = c$ holds for all $e_j \in S$, and $|S| = n-1$. 
	Since Inequality~(\ref{eqn:affine-minimizer-offset}) holds particularly for $c_T,$ it follows
	\[
		(1-2\varepsilon) \cdot |B \setminus T| \le c \cdot \left( (n-1) - |T| \right).
	\]
	Observe that $(1-2\varepsilon) \cdot |B| - |T| \le (1-2\varepsilon) \cdot |B \setminus T|$, which implies
	\begin{align*}
		&& (1-2\varepsilon) \cdot |B| &\le |T| + c \cdot \left( (n-1) - |T| \right) &&\\
		&\Leftrightarrow& \quad (1-2\varepsilon) \cdot |B| - c \cdot (n-1) &\le |T| \cdot (1 - c),&&
	\end{align*}
	and the lemma follows.
	    \end{proof}

\subsection{Expectations for Random Stars}
\label{sec:random-star}

Pick  a random multi-clique instance $\mathcal I \sim \mathcal G(n, \ell, \varepsilon);$ then choose a random root player $i$ and  a random  star $S$ of incident edges $e_j$ out of those with processing time $t_j=0$ for the player $i$ (from the sets $E_{ij}$).
We lower bound the expectation  of $|B|$  the number of edges of the star with 'large' value of $\psi_{i,j}^{e_j}$ (over the random choice of $\mathcal I,\, i,$ and $S$). %
Intuitively it seems clear that $\psi(1-\varepsilon)\geq 1-\varepsilon$ for at least half of all possible $\psi_{i,j}^{e}$ functions (for given $A$ and $\mathcal I$), since if the costs of an edge are set to $1-\varepsilon$ for both of its endpoints, it must be allocated to one of them. However, it is not obvious why $\mathbb E[\,|B|\,]\geq \frac{n-1}{2}$ should hold for a random star, because for given center $i$ the star edges $e_j$ are picked from the $E_{i,j},$ precisely the complementary edge sets to $E_{j,i}.$ Therefore the lower bound $\frac{n-1}{2}$ holds only in expectation taken (also) over $\mathcal I \sim \mathcal G(n, \ell, \varepsilon),$ in particular over the random directions of $0 - 1$ costs over the edges. Recall that $Q_{\{i,j\}}$ denotes the set of all parallel edges between $i$ and $j.$ The proof uses the following crucial observation from \cite{ChristodoulouKK26}:

\begin{proposition}
\label{prop:indepParallel}
If the mechanism has finite approximation ratio, then w.l.o.g. $\varepsilon$ can be chosen so (in fact, a randomly perturbed $\varepsilon$ will do), that the critical value $\psi_{i,j}^{e_j,\mathcal I}(1-\varepsilon)$ does not depend on the running times of edge $e_j,$ nor does it depend on the running times on any parallel edge $e\in Q_{\{i,j\}}.$ 
\end{proposition}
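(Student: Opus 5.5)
The plan is to use the graph structure, weak monotonicity for the pair $\{i,j\}$, and a choice of $\varepsilon$ that avoids discontinuities. Fix $A$ with finite ratio $K$, a trivial multi-clique instance $\mathcal I$, and an edge $e\in Q_{\{i,j\}}$. The critical value $\psi_{i,j}^{e,\mathcal I}(t_j)$ is defined with all other tasks fixed, so by definition it does not depend on the running times of $e$ itself, except through the argument $t_j=1-\varepsilon$. So the content of the statement is independence from the other parallel edges $e'\in Q_{\{i,j\}}\setminus\{e\}$. Each such $e'$ is worth at most $K^2$-type costs for anyone except $i$ and $j$.

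The first step handles a single parallel edge $e'$, changing its pair of costs $(t_i^{e'},t_j^{e'})$ within the trivial domain (one coordinate $0$, the other in $(1-\varepsilon,1)$). I would show that $\psi_{i,j}^{e}(1-\varepsilon)$ does not change. The tool is the two-player view. With everything outside $Q_{\{i,j\}}$ fixed, the allocation of the edges in $Q_{\{i,j\}}$ effectively involves only players $i$ and $j$. This holds because any other assignment costs $K^2$ and violates ratio $K$ on these trivial instances, where $\mathrm{Opt}\le 1$. By Lemma~\ref{lem:restricted-subset-WMON}, the restriction to $\{e,e'\}$ is weakly monotone. For two players and two tasks, the allocation regions of player $i$ in the $(t_i^e,t_i^{e'})$ plane are separated by boundaries of slope $0$, $\infty$ or $-1$. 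This is the standard two-task characterization used in \cite{ChristodoulouKK26}.

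I would then argue that a dependence of $\psi^e$ on $e'$ would force a diagonal (slope $-1$) segment, i.e. a jump of $\psi^e$ as the costs of $e'$ move. Lemma~\ref{obs:alphasNew} pins $\psi$ to $0$ at $0$ and keeps it within factor $K$. Because the $\psi$ functions are monotone, the set of $t_j$ values for which such a dependence occurs is only countable. Concretely, the points where $\psi^e_{i,j}(\cdot)$ could depend on $e'$ should coincide with the discontinuity points of the relevant boundary functions.

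The next step chooses $\varepsilon$. Each of the finitely many instances (finite support $\mathcal G^*$, finitely many $A$ in the support, finitely many edges) contributes only countably many bad values of $1-\varepsilon$. Picking $\varepsilon$ randomly perturbed in $(0,1/(K+1))$ avoids all of them almost surely. This plays the same role as the continuity requirement in Definition~\ref{def:continuity}. Finally, iterating over the parallel edges one at a time gives independence from all of $Q_{\{i,j\}}$ simultaneously.

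I expect the main obstacle to be the first step: rigorously showing that dependence of $\psi^e(1-\varepsilon)$ on a parallel edge can occur only at countably many $\varepsilon$. First I would try to exhibit the two-player, two-task allocation as determined by affine minimizers. In the spirit of the $c_R$ representation used in the proof of Lemma~\ref{lem:BTrelation}, applied to player $i$ over the sets $R\subseteq\{e,e'\}$, the constants $c_R$ depend on $t_j$ monotonically. The threshold for $e$ is then $c_{\emptyset}-c_{\{e\}}$ or $c_{\{e'\}}-c_{\{e,e'\}}$, depending on whether $i$ holds $e'$, and these two thresholds agree except where the $c$'s jump.
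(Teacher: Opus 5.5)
Your outer structure (edge $e_j$ itself is irrelevant by definition, handle one sibling $e'$ at a time, then perturb $\varepsilon$ to avoid countably many bad points over finitely many instances) matches the paper. The core step, however, does not go through.

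\textbf{The countability step fails.} You infer that $\psi^e$ can depend on $e'$ only at countably many $t_j$ ``because the $\psi$ functions are monotone''. This is false for weakly monotone allocations. A slope $-1$ boundary is not a jump of $\psi^e$, and it can be present for \emph{every} value of $j$'s costs. Take the $2\times 2$ affine minimizer that minimizes total cost minus a bonus $\beta>0$ whenever both tasks go to the same machine. It is truthful, and for $s^{e'}>\beta$ it gives
\begin{itemize}
\item $\psi^e(t_j^e)=t_j^e+\beta$ when $(t_i^{e'},t_j^{e'})=(0,s^{e'})$;
\item $\psi^e(t_j^e)=\max(0,t_j^e-\beta)$ when $(t_i^{e'},t_j^{e'})=(s^{e'},0)$.
\end{itemize}
Both functions are continuous, yet $\psi^e(1-\varepsilon)$ depends on the orientation of $e'$ for every $\varepsilon$.

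The same example refutes your fallback. In the $c_R$ form for player $i$ one gets $c_\emptyset=t_j^e+t_j^{e'}-\beta$, $c_{\{e\}}=t_j^{e'}$, $c_{\{e'\}}=t_j^e$, and $c_{\{e,e'\}}=-\beta$. None of these ever jump, yet $c_\emptyset-c_{\{e\}}$ and $c_{\{e'\}}-c_{\{e,e'\}}$ differ by $2\beta$ everywhere. The two thresholds coincide only in the crossing case, and that is exactly what has to be proved.

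\textbf{The missing idea is where the finite approximation ratio enters.} The toy example has unbounded ratio, which is the point. The paper fixes the trivial input outside $\{e,e'\}$, so the $(e,e')$-slice is a $2\times 2$ mechanism of bounded ratio. It then applies the characterization (Theorem~\ref{theo:addchar} and Observation~\ref{obs:possibleFigures}). If the slice is not relaxed task-independent, it is a relaxed affine minimizer (or a constant or bundling mechanism). Such a slice is non-crossing for \emph{all} values of $j$'s costs, with a slanted boundary of fixed length when these are large. Lemma~\ref{prop:quasi} plays this rigid diagonal against $t_j/K<\psi<Kt_j$ (Lemma~\ref{obs:alphasNew}) by driving $j$'s values towards $0$. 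This produces inputs whose optimum tends to $0$ while the makespan stays bounded away from $0$. Merely citing Lemma~\ref{obs:alphasNew}, without this persistence of the diagonal, does not exclude these slices.

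Only once the slice is known to be relaxed task-independent can $\psi^e(1-\varepsilon)$ change with the sibling's costs solely at a jump discontinuity of $\psi^e$. This must cover $t_j^{e'}$ as well as $t_i^{e'}$, since reorienting $e'$ changes both. Your fixed-cost picture with slopes $0,\infty,-1$ speaks only about $t_i^{e'}$; independence from $t_j^{e'}$ away from jumps again comes from the characterization. With that in place, your countability-and-perturbation step for $\varepsilon$ finishes the proof exactly as in the paper.
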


\begin{proof} Trivially, by definition the value $\psi_{i,j}^{e_j,\mathcal I}(1-\varepsilon)$ does not depend on the actual costs on edge $e_j$ in $\mathcal I.$

For the independence from the costs on parallel edges (so called \emph{siblings} of $e_j$), the argument is essentially the same as for Theorem~\ref{thm:sibling-independence} in Section~\ref{sec:facts-about-two}: Since the instance $\mathcal I$ is trivial (i.e., every task has at least one zero entry), if we fix this trivial input on all tasks except for a pair $(e, e')$ of sibling tasks, the mechanism restricted to these two tasks -- the so called \emph{$(e, e')$-slice mechanism} -- is a truthful mechanism for two players and two tasks, that must have finite approximation ratio, otherwise the whole mechanism would have infinite ratio. By the characterization of $2\times 2$ mechanisms (see Theorem~\ref{theo:addchar} in Section~\ref{sec:facts-about-two}) the slice mechanism is essentially a \emph{relaxed affine minimizer} or a \emph{relaxed task-independent} mechanism. If it were a (non task-independent) relaxed affine minimizer (or a \emph{constant-} or  \emph{bundling-mechanism}), then the approximation factor would be infinite (see Lemma~\ref{prop:quasi}). 

Given that, consequently, the mechanism is relaxed task-independent, the only possibility for the critical value $\psi^e(1-\varepsilon)$ to change by changing a running time on the sibling task $e',$ would be if the $\psi^{e}$ had a jump-discontinuity in the point $1-\varepsilon$ (see, e.g., \cite{ChristodoulouKoutsoupiasKovacs2020Submodular}). Recall that the $\psi^e$ functions are increasing, and thus have only countably many discontinuity points. For each given \emph{fixed} set $\mathcal I_s$ of perturbed costs, we can select a perturbed $\varepsilon$ so that for every setting of $\mathcal I$ (of the $0-1$ directions) on the finitely many edges, the $1-\varepsilon$ is no discontinuity point of  $\psi^e$ for any edge $e\in E.$ (Note that the instance $\mathcal I$ itself does not depend on $\varepsilon,$ nor do the \emph{functions} $\psi^e.$)
\end{proof}

\begin{lemma} Select a random instance $\mathcal I \sim \mathcal G(n, \ell, \varepsilon),$ and then select uniformly at random a root $i\in N$ and finally pick a star $S$ by choosing uniformly an edge $e_j\in E_{ij}$ for each $j\neq i.$ Denote $B^{\mathcal I}_{i,S} := \left\{e_j \in S ~\Big\vert~ 1{-}\varepsilon \leq  \psi_{i,j}^{e_j,\mathcal I}(1-\varepsilon) \right\}.$ For the expected size of $B$ holds that
	 $$\mathbb E[\,|B^{\mathcal I}_{i,S}| \, ] \geq \frac {n-1}{2}.$$
\end{lemma}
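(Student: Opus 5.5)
By linearity of expectation, it suffices to show that for every ordered pair of distinct players $(i,j)$, conditioned on root $i$, the probability that $e_j\in B$ is at least $1/2$. Here the randomness is over $\mathcal I\sim\mathcal G(n,\ell,\varepsilon)$ and the uniform choice of $e_j\in E_{i,j}$. Summing over the $n-1$ leaves then gives the claim for each fixed root, hence also for a random root.

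So fix $i\neq j$ and write $Q=Q_{\{i,j\}}$, with $|Q|=\ell$. Choosing $E_{i,j}$ as a uniform half of $Q$ and then $e_j$ uniform in $E_{i,j}$ is the same as choosing $e_j$ uniform in $Q$ and then the remaining $\ell/2-1$ members of $E_{i,j}$ uniformly from $Q\setminus\{e_j\}$. By Proposition~\ref{prop:indepParallel}, the value $\psi_{i,j}^{e,\mathcal I}(1-\varepsilon)$ does not depend on the costs of $e$ or of any sibling in $Q$. It therefore depends only on the configuration of $\mathcal I$ outside $Q$, which is independent of how $Q$ is split into $E_{i,j}$ and $E_{j,i}$. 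So let $\mathcal I_{-Q}$ be this outside configuration together with a fixed orientation of $Q$, and define for each $e\in Q$ the number $\alpha_e:=\psi_{i,j}^{e,\mathcal I}(1-\varepsilon)$. Each $\alpha_e$ is a function of $\mathcal I_{-Q}$ and $e$ alone. The event $e_j\in B$ is the event $\alpha_{e_j}\ge 1-\varepsilon$.

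The key step is a symmetry between the two directions on the same edge. Since $\psi_{j,i}\equiv\psi_{i,j}^{-1}$ up to discontinuities, an edge $e$ whose costs are set to $1-\varepsilon$ for both $i$ and $j$ must go to one of them. Hence either $\psi_{i,j}^{e}(1-\varepsilon)\ge 1-\varepsilon$ or $\psi_{j,i}^{e}(1-\varepsilon)\ge1-\varepsilon$. This uses that $1-\varepsilon$ is not a discontinuity point, which Proposition~\ref{prop:indepParallel} ensures; otherwise ties would have to be handled with an infinitesimal perturbation. Write $\beta_e:=\psi_{j,i}^{e}(1-\varepsilon)$; by the same proposition it too is a function of $\mathcal I_{-Q}$ only. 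For every $e$, at least one of the indicators $[\alpha_e\ge1-\varepsilon]$ and $[\beta_e\ge1-\varepsilon]$ equals $1$.

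Now for fixed $\mathcal I_{-Q}$, the edge $e_j$ (root $i$, uniform in $Q$) and the leaf-$i$ edge $e_i$ used when $j$ is root are both marginally uniform on $Q$. Therefore
\[
\Pr[e_j\in B\mid \text{root } i]+\Pr[e_i\in B\mid\text{root } j]=\frac1\ell\sum_{e\in Q}\bigl([\alpha_e\ge1-\varepsilon]+[\beta_e\ge1-\varepsilon]\bigr)\ge 1 .
\]
This only gives a bound of $1/2$ on average over the two orientations of the pair, not for each ordered pair separately. That is nonetheless enough: summing over all ordered pairs and dividing by the $n$ choices of root gives
\[
\mathbb E[|B|]=\frac1n\sum_{i}\sum_{j\ne i}\Pr[e_j\in B\mid i]\ \ge\ \frac1n\cdot\frac{n(n-1)}{2}=\frac{n-1}2 .
\]
So the correct plan is to pair up ordered pairs $(i,j),(j,i)$ rather than argue leaf by leaf.

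The main obstacle is justifying that the conditional distribution of $e_j$ given $\mathcal I_{-Q}$ is uniform and independent of the $\alpha_e$. This is exactly where the independence from siblings in Proposition~\ref{prop:indepParallel} is needed: without it, the choice of which half of $Q$ has zero cost for $i$ could correlate with the values $\alpha_e$.
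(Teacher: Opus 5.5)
Your proposal is correct and is essentially the paper's argument. The paper treats $x^e_{ij}$ as a function of the orientations outside $Q_{\{i,j\}}$ (by Proposition~\ref{prop:indepParallel}), independent of $y^e_{ij}$ with $\mathbb E[y^e_{ij}]=1/2$; that is exactly your observation that $e_j$ is uniform on $Q$ given $\mathcal I_{-Q}$. The paper's pairing $x^e_{ij}+x^e_{ji}\ge 1$ is your $[\alpha_e\ge 1-\varepsilon]+[\beta_e\ge 1-\varepsilon]\ge 1$.

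Two things to tidy. First, drop the opening per-root, per-pair plan, as you yourself note; averaging over the root is essential. Second, the discontinuity caveat is unnecessary: if $\psi_{i,j}(1-\varepsilon)<1-\varepsilon$, then $j$ receives the edge at $(1-\varepsilon,1-\varepsilon)$, which already forces $\psi_{j,i}(1-\varepsilon)\ge 1-\varepsilon$.
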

\begin{proof}

For a given root $i$ let us call a star $S=\{\,e_{j,S}\,|\,j\in N,\; j\neq i\,\}$  \emph{proper}, if $e_{j,S}\in E_{ij}^{\mathcal I}$ for every $j\neq i.$  We denote by $\mathcal S_i^{\mathcal I}$  the set of all proper stars with root $i,$ as well as (abusing notation) the uniform distribution over stars in $\mathcal S_i^{\mathcal I}.$  
 Similarly, we write $i\sim N$ for a uniformly random selection of a player $i.$ We use $r:=\ell/2.$ Recall that $|E_{i,j}|=r$ holds for every $\mathcal I, i,$ and $j.$ \footnote{We will omit  $\mathcal I$  from the notation of $\mathcal S_i^{\mathcal I}, B^{\mathcal I}_{i,S},\, \psi_{i,j}^{e_j,\mathcal I},\,E_{i,j}^{\mathcal I},...$ etc. whenever $\mathcal I$ is clear from the context.}

\emph{Consider first a fixed multi-clique instance $\mathcal I$ from the support of $\mathcal G(n, \ell, \varepsilon).$} For this $\mathcal I$ we will express $\mathbb E_{i\sim N,\, S\sim \mathcal S_i}[\,|B_{i,S}|\,]$ with the help of indicator variables. For any edge $e\in Q_{\{i,j\}}$  let the variable $x^e_{ij}$ indicate whether $\psi_{i,j}^{e}(1-\varepsilon)\geq 1-\varepsilon\,$ holds, i.e., $x_{ij}^e=1$ if $\psi_{i,j}^{e}(1-\varepsilon)\geq 1-\varepsilon,\,$ and $x_{ij}^e=0$ otherwise. By definition $|B_{i,S}|= \sum_{j\,|\,j\neq i} x^{e_{j,S}}_{ij}$ holds. Observe that, since one of $i$ or $j$ gets the edge if their processing times are set to $(1-\varepsilon,1-\varepsilon),$ for the sum $x_{ij}^e+x_{ji}^e\geq 1$ holds (in fact, $x_{ij}^e+x_{ji}^e= 1$ unless $\psi(1-\varepsilon)=1-\varepsilon$ for both players).    

We also introduce another type of indicator:  For given $\mathcal I$ and arbitrary edge $e\in Q_{\{i,j\}}$ let $y_{ij}^e=1$ if $e\in E_{i,j},$ and $y_{ij}^e=0$ if $e\in E_{j,i}.$ Now  we have

	\begin{align*}
		\mathbb E_{i\sim N,\, S\sim \mathcal S_i}[\,|B_{i,S}|\,] 
		&=\sum_{i\in N} \frac{1}{n}\sum_{S\in \mathcal S_i}\frac{1}{r^{n-1}}|B_{i,S}|\\
		&=\sum_{i\in N} \frac{1}{n}\sum_{S\in \mathcal S_i}\frac{1}{r^{n-1}}\sum_{j\,:\,j\neq i} x^{e_{j,S}}_{ij}\\
		&= \frac{1}{n\cdot r^{n-1}}\sum_{i\in N}\sum_{j\,:\,j\neq i}\sum_{e\in E_{i,j}} x^{e}_{ij}\cdot\left |\{S\in \mathcal S_i \, |\, e \in S\} \right |\\	
		&= \frac{1}{n\cdot r^{n-1}}\sum_{i\in N}\sum_{j\,:\,j\neq i}\sum_{e\in E_{i,j}} x^{e}_{ij}\cdot r^{n-2}\\
		&= \frac{1}{n\cdot r}\sum_{i\in N}\sum_{j\,:\,j\neq i}\sum_{e\in E_{i,j}} x^{e}_{ij}\\ 
		&= \frac{1}{n\cdot r}\sum_{i\in N}\sum_{j\,:\,j\neq i}\sum_{e\in Q_{\{i,j\}}} y^e_{ij} x^{e}_{ij}.\\ 
	\end{align*}
	
It remains to bound the expectation of the last expression  w.r.t. the random choice of the multi-clique instance $\mathcal I\sim \mathcal G(n, \ell, \varepsilon).$ Recall that we consider each instance  as a pair  $\mathcal I_s\times \mathcal I$ of independent random choices as follows. We fixed $\mathcal I_s$ that assigned (randomly perturbed) values from $(1-\varepsilon,1)$ to the edges, one such value per edge;  $\mathcal I$ stands for the random selection  of the bipartition $(E_{i,j}, E_{j,i})$ with $|E_{i,j}|=\ell/2,$ for each pair of players $\{i,j\},$ deciding which one of them gets the $0$ running time and which the (perturbed) $s^e$ for each edge of $e\in Q_{\{i,j\}}.$ We assumed that $\mathcal I_s$ is selected and fixed \emph{before} $\mathcal I$ is picked. The next argument treats expectation w.r.t. $\mathcal I\sim \mathcal G(n, \ell, \varepsilon),$ and holds for every fixed $\mathcal I_s.$ 

 In the last sum of the above calculation each unordered pair $\{i,j\}$ of vertices occurs twice. We calculate the expectation over $\mathcal I$ for each unordered pair $\{i,j\}$ separately: We  divide $\mathcal I$  to independently selected parts $\mathcal I=\mathcal I_{\{i,j\}}\times \mathcal I_{-\{i,j\}},$ where
 $\mathcal I_{\{i,j\}}$ denotes the setting of orientations of $(0,s^e)$ on the $\ell$ parallel edges of $e\in Q_{\{i,j\}}.$ The orientation on all \emph{other} edges will be denoted $\mathcal I_{-\{i,j\}}.$ Now, with respect to the distribution of $\mathcal I_{\{i,j\}}\times \mathcal I_{-\{i,j\}}$ the $x^e_{ij}$ and $y^e_{ij}$ are \emph{random} variables. The key observation is that the $x^e_{ij}$ do not depend on $\mathcal I_{\{i,j\}}$ according to Proposition~\ref{prop:indepParallel}. The $x^e_{ij}$ do depend on $\mathcal I_{-\{i,j\}},$ because the values on edges other than $Q_{\{i,j\}}$ might influence the allocation functions $\psi_{i,j}$ on $Q_{\{i,j\}}.$ In fact, we can fix  $\mathcal I_{-\{i,j\}};$ for the given $A$ this determines $\psi^e_{i,j}$ (and thus $x^e_{ij}$) on each edge of $Q_{\{i,j\}},$ and we may choose $\mathcal I_{\{i,j\}}$ only afterwards. For the $y^e_{ij}$ it is the other way around: obviously these depend only on $\mathcal I_{\{i,j\}}$ but not on $\mathcal I_{-\{i,j\}}.$ By definition, for $\mathcal I_{\{i,j\}}$ all possible selections of $r$ edges for $E_{ij}$ have equal probability and thus $\mathbb E_{\mathcal I}[y^e_{ij}]=\mathbb E_{\mathcal I_{\{i,j\}}}[y^e_{ij}]=1/2.$  Since $\mathcal I_{\{i,j\}}$ and $\mathcal I_{-\{i,j\}}$ are independent,  $x^e_{ij}$ and $y^e_{ij}$ are independent variables.

  Let $[N]^2$ denote the set of all unordered pairs of players, then 
  
  \begin{align*}
		\mathbb E_{\mathcal I}\left [\sum_{i\in N}\sum_{j\,:\,j\neq i}\sum_{e\in Q_{\{i,j\}}} y^e_{ij} x^{e}_{ij}\right ] 
		&=\mathbb E_{\mathcal I}\left [\sum_{\{i,j\}\in [N]^2}\sum_{e\in Q_{\{i,j\}}} (y^e_{ij} x^{e}_{ij}+ y^e_{ji} x^{e}_{ji})\right ]\\
		&=\sum_{\{i,j\}\in [N]^2}\mathbb E_{\mathcal I} \left [\sum_{e\in Q_{\{i,j\}}} (y^e_{ij} x^{e}_{ij}+ y^e_{ji} x^{e}_{ji} )\right]\\
		&=\sum_{\{i,j\}\in [N]^2} \sum_{e\in Q_{\{i,j\}}} (\mathbb E_{\mathcal I }[y^e_{ij}]\cdot \mathbb E_{\mathcal I }[x^{e}_{ij}]+ \mathbb E_{\mathcal I }[y^e_{ji}]\cdot \mathbb E_{\mathcal I }[x^{e}_{ji}] )\\
		&=\sum_{\{i,j\}\in [N]^2} \sum_{e\in Q_{\{i,j\}}} \left (\frac{1}{2}\cdot \mathbb E_{\mathcal I }[x^{e}_{ij}]+ \frac{1}{2}\cdot \mathbb E_{\mathcal I }[x^{e}_{ji}] \right)\\
		&=\frac{1}{2}\sum_{\{i,j\}\in [N]^2} \sum_{e\in Q_{\{i,j\}}} \mathbb E_{\mathcal I }[x^{e}_{ij}+ x^{e}_{ji}] \\
		&\geq \frac{1}{2}\sum_{\{i,j\}\in [N]^2} \sum_{e\in Q_{\{i,j\}}} \mathbb E_{\mathcal I } [1] = \frac{1}{2}\cdot\frac{n(n-1)}{2}\cdot\ell\cdot 1=\frac{n(n-1)r}{2}.\\
			\end{align*}
			The third equality uses the independence of $x_{ij}$ and $y_{ij};$ the fourth uses that $\mathbb E_{\mathcal I}[y^e_{ij}]=1/2;$ finally the inequality holds due to $x_{ij}^e+x_{ji}^e\geq 1.$
			Plugging this result into the formula for $\mathbb E_{i\sim N,\, S\sim \mathcal S_i}[\,|B_{i,S}|\,], $ we obtain 
			$$\mathbb E_{\mathcal I\sim \mathcal G, \,i\sim N,\, S\sim \mathcal S_i}[\,|B^{\mathcal I}_{i,S}|\,]\geq \frac{1}{n\cdot r}\frac{n(n-1)r}{2}=\frac{(n-1)}{2}. $$\end{proof}

Since the star $S$ is an $\varepsilon$-box with arbitrarily high probability $1-\Delta,$ we obtain expectation on condition that $S$ is a box, arbitrarily close to $\frac{n-1}{2},$ as formalized in the corollary: 

\begin{corollary}\label{cor:BoundB} Let $\mathcal I$ be a random instance $\mathcal I \sim \mathcal G(n, \ell, \varepsilon).$  Select uniformly at random a root $i\in N$ and then a star $S$ of edges $e_j\in E_{ij}$ and let $B := \left\{e_j \in S ~\Big\vert~ 1{-}\varepsilon \leq  \psi_{i,j}^{e_j,\mathcal I}(1-\varepsilon) \right\}.$ For the conditional expectation of the size of $B$ holds that
	 $$\mathbb E_{\mathcal I\sim \mathcal G,\, i\sim N, S\sim \mathcal S_i}\left[|B| \;\Big\vert\; S \textnormal{ is an $\varepsilon$-box}\, \right] \geq \frac {n-1} 2\cdot (1-2\Delta),$$
	 given that the probability of the star $S$ of not being an $\varepsilon$-box is at most $\Delta.$ 
\end{corollary}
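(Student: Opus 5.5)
We derive this directly from the preceding lemma, which gives $\mathbb E[|B|]\geq \frac{n-1}{2}$ unconditionally, together with the trivial bound $0\le |B|\le n-1$. The plan is to split the unconditional expectation according to the event $X$ that $S$ is an $\varepsilon$-box and to bound the contribution of the complementary event $\bar X$ crudely.

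First, write
\[
\mathbb E[|B|] \;=\; \mathbb P(X)\,\mathbb E[|B|\mid X] \;+\; \mathbb P(\bar X)\,\mathbb E[|B|\mid \bar X].
\]
Use $|B|\le |S| = n-1$ to bound the second term by $\Delta\,(n-1)$. Use $\mathbb P(X)\le 1$ on the first term, assuming as we may that $\mathbb E[|B|\mid X]\ge 0$. Together with the lemma this gives
\[
\mathbb E[|B|\mid X] \;\ge\; \mathbb E[|B|] - \Delta (n-1) \;\ge\; \frac{n-1}{2} - \Delta(n-1) \;=\; \frac{n-1}{2}(1-2\Delta).
\]
That is exactly the claimed bound. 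For the conditioning to be well defined we need $\mathbb P(X)\ge 1-\Delta>0$, which holds for $\Delta<1$; this is guaranteed for large enough $\ell$ by the Box Theorem (Theorem~\ref{thm:box-restated}).

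A slightly sharper bound, dividing by $\mathbb P(X)\le 1$ rather than just dropping it, is not needed. The only subtlety is making sure that the probability space in the corollary is the same joint distribution over $(\mathcal I, i, S)$ as in the preceding lemma. We also need the bound $\Delta$ on the non-box probability to be stated with respect to this same joint distribution. The Box Theorem is applied per fixed multi-clique and root, so its bound then averages to the joint one. There is no real obstacle beyond this bookkeeping.
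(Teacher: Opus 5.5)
Your proof is correct and is the paper's argument. Both decompose $\mathbb E[|B|]$ over the box event, use the three estimates $\mathbb P(X)\le 1$, $|B|\le n-1$ and $\mathbb P(\bar X)\le\Delta$, and combine them with the preceding lemma's bound $\mathbb E[|B|]\ge\frac{n-1}{2}$. The paper merely phrases this as a proof by contradiction, whereas you rearrange the inequality directly, which is equivalent.
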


\begin{proof}  We simply write $\mathbb E[\,\,] $ instead of  $\mathbb E_{\mathcal I\sim \mathcal G,\, i\sim N, S\sim \mathcal S_i}[\,\,].$ Assume for contradiction that \linebreak $\mathbb E[\;|B|\;|\,S \textnormal{ is an $\varepsilon$-box} ]<\frac {n-1} 2\cdot (1-2\Delta).$ Then 

\begin{align*}\frac{n-1}{2}\leq \mathbb E[\,|B|\,]
&=\mathbb E[\;|B|\;|\;S \textnormal{ is an $\varepsilon$-box} ]\cdot \mathbb P[S \textnormal{ is an $\varepsilon$-box }]\\
&+\mathbb E[\;|B|\;|\;S \textnormal{ is not an $\varepsilon$-box} ]\cdot \mathbb P[S \textnormal{ is not an $\varepsilon$-box }]\\
&<\frac {n-1} 2\cdot (1-2\Delta)\cdot 1 + (n-1)\cdot \Delta=\frac{n-1}{2},\\					\end{align*}
a contradiction. For the inequality we used that $\mathbb E[\,|B|\,|\,S \textnormal{ is an $\varepsilon$-box} ]<\frac {n-1} 2\cdot (1-2\Delta)$ by the indirect assumption; $\mathbb P[S \textnormal{ is an $\varepsilon$-box }]\leq 1;$ $\quad \mathbb E[\,|B|\,|\,S \textnormal{ is not an $\varepsilon$-box} ]\leq n-1$ since $|B|\leq | S|= n-1,$ and  $\mathbb P[S \textnormal{ is not an $\varepsilon$-box }]\leq \Delta$ (for an appropriate $\Delta$ to be specified later) according to Theorem~\ref{thm:box-restated}. (The latter probability holds w.r.t. $S\sim \mathcal S_i$ for every fixed $\mathcal I$ in the support of $\mathcal G$, and every $i\in N.$) \end{proof}

Recall that for any parameters $c\in (0, \frac{1}{2}),$ $\varepsilon\in (0,\frac{1}{K+1}),$ and  $\ell \in \mathbb N$ a random instance $\mathcal I^* \sim \mathcal G^*(n, \ell, \varepsilon, c)$ is generated by picking a random clique-instance $\mathcal I \sim \mathcal G(n, \ell, \varepsilon),$ then selecting a random root player $i\sim N,$ a random star $S$ of edges from the $E_{i,j}$ sets, and finally increasing from $t_j=0$ to $t^*_j=c$ the running times of the tasks/edges $e_j$ of the star $S$ on the root machine $i.$  

Let $A$ be a  deterministic truthful allocation algorithm $A \in \mathcal A$ with approximation ratio less than $K.$ 
Combining Corollary~\ref{cor:BoundB} with Lemma~\ref{lem:BTrelation}, yields the following lower bound for the expected makespan of $A$: 

\begin{lemma}\label{lem:LBMakespan} Let $\Delta=\Delta(n,\ell,K,\varepsilon)$ denote the upper bound on the probability\footnote{$\Delta(n,\ell,K,\varepsilon)=\left(\frac{5K4^{n-1}}{\varepsilon}\right)^{n-1}\frac{1}{\sqrt{\ell}}$} that a random star in a multi-clique $\mathcal I$ is not an $\varepsilon$-box as given by the Box Theorem~\ref{thm:box-restated}. Then the expected makespan of a random instance $\mathcal I^* \sim \mathcal G^*(n, \ell, \varepsilon, c)$ is lower bounded  by

	$$
		\mathbb E_{\mathcal I^*\sim \mathcal G^*}[\Mech{A}{\mathcal I^*}] 
		\ge \frac{c\,(n-1)}{2(1-c)}\cdot\,	\left[1-2c-2(\varepsilon+\Delta)\right]\,\left(1-\Delta\right);
	$$
	 this lower bound goes to $
		 \; \frac{c\,(1-2c)\,}{2\,(1-c)}\cdot (n-1)
	$
	when $\varepsilon, \,\Delta \,\rightarrow \, 0;$ that is, for $\varepsilon \, \rightarrow \, 0, $ and (for given $\varepsilon$) $\ell \,\rightarrow \,\infty.$

\end{lemma}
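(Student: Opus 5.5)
We combine the Observation, Lemma~\ref{lem:BTrelation} and Corollary~\ref{cor:BoundB} by conditioning on whether the random star is an $\varepsilon$-box. Write $\mathcal I^*$ as generated by $(\mathcal I, i, S)$. Makespan is nonnegative, so we lower bound the expectation by dropping the event that $S$ is not an $\varepsilon$-box:
\[
\mathbb E[\Mech{A}{\mathcal I^*}] \ge \mathbb P[S \text{ is an $\varepsilon$-box}]\cdot \mathbb E\bigl[\Mech{A}{\mathcal I^*}\,\big|\, S \text{ is an $\varepsilon$-box}\bigr]\ge (1-\Delta)\,\mathbb E\bigl[c\,|T^{\mathcal I^*}|\,\big|\,\text{box}\bigr].
\]
Here we use the Box Theorem~\ref{thm:box-restated}. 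It bounds the failure probability by $\Delta$ for every fixed $\mathcal I$ and root $i$, so the bound also holds after averaging over $\mathcal I$ and $i$. The second inequality is the Observation $\Mech{A}{\mathcal I^*}\ge c\,|T^{\mathcal I^*}|$.

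On the box event, Lemma~\ref{lem:BTrelation} applies pointwise. Its right-hand side is affine in $|B|$, so linearity of conditional expectation gives
\[
\mathbb E\bigl[|T^{\mathcal I^*}|\,\big|\,\text{box}\bigr]\ge \frac{(1-2\varepsilon)\,\mathbb E[|B|\mid \text{box}]-c(n-1)}{1-c}.
\]
We then insert Corollary~\ref{cor:BoundB}, $\mathbb E[|B|\mid\text{box}]\ge \frac{n-1}{2}(1-2\Delta)$. This step needs $1-2\varepsilon>0$, which holds since $\varepsilon<1/(K+1)$. The result is
\[
\frac{n-1}{2(1-c)}\bigl[(1-2\varepsilon)(1-2\Delta)-2c\bigr]\ge \frac{n-1}{2(1-c)}\bigl[1-2c-2(\varepsilon+\Delta)\bigr],
\]
because $(1-2\varepsilon)(1-2\Delta)=1-2\varepsilon-2\Delta+4\varepsilon\Delta\ge 1-2\varepsilon-2\Delta$. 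Multiplying by $c(1-\Delta)$ gives the stated bound.

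There is one subtlety if the bracket is negative. The chain is still valid in that case, since $|T|\ge 0$ trivially and every step above is a lower bound. If needed, we take the maximum with $0$, and the inequality remains true.

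The main point to check is consistency of the probability spaces. Corollary~\ref{cor:BoundB} and the box probability are taken over the same joint distribution of $(\mathcal I,i,S)$ that generates $\mathcal G^*$, so the conditioning on the box event is the same event in both places. We also need the box property and $B$ to refer to $\mathcal I$, not $\mathcal I^*$; the footnote before Lemma~\ref{lem:BTrelation} guarantees this.

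For the limit, $\Delta(n,\ell,K,\varepsilon)=\bigl(5K4^{n-1}/\varepsilon\bigr)^{n-1}\ell^{-1/2}\to 0$ as $\ell\to\infty$ for fixed $\varepsilon$. Letting $\varepsilon\to0$ afterwards, the bound tends to $\frac{c(1-2c)}{2(1-c)}(n-1)$.
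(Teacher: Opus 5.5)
Your proposal is correct and follows the paper's proof essentially step for step: you condition on the $\varepsilon$-box event, apply the Observation and Lemma~\ref{lem:BTrelation} pointwise, insert Corollary~\ref{cor:BoundB}, and expand $(1-2\varepsilon)(1-2\Delta)\ge 1-2\varepsilon-2\Delta$. Your ordering is slightly more careful than the paper's: you apply $\mathbb P[\text{$S$ is an $\varepsilon$-box}]\ge 1-\Delta$ to the nonnegative quantity $\mathbb E[c\,|T^{\mathcal I^*}|\mid \text{box}]$ before inserting the lower bounds, which keeps the chain valid even when the bracket is negative, provided $\Delta\le 1$ (the only meaningful regime for a probability bound, and the one where the statement makes sense at all).
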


\begin{proof} Let $\mathcal I^*$ be an instance in the support of $G^*(n, \ell, \varepsilon, c),$ let $\mathcal I$ be the corresponding multi-clique instance (i.e., before increasing the $t_j=0$ values on a star), and $S$ be the star with root $i$ selected for $\mathcal I^*.$ Assume that (for $A$ and $\mathcal I$) the star $S$ is an $\varepsilon$-box. Then by the definition of $T^{\mathcal I^*}$ and by Lemma~\ref{lem:BTrelation} we have 
\[ \Mech{A}{\mathcal I^*} \ge c \cdot |T^{\mathcal I^*}|\geq \, c\cdot
		\frac {(1-2\varepsilon) \cdot |B^{\mathcal I}_{i,S}| - c \cdot (n-1)} {1 - c} .
	\]

	Since this lower bound on the makespan holds for every single  instance $\mathcal I^*,$ therefore it also holds in expectation, on condition that $S$ is an $\varepsilon$-box:
	
	\begin{align*}
	\mathbb E_{\mathcal I^*}[\Mech{A}{\mathcal I^*}] 
	&\ge  \mathbb E_{\mathcal I^*}\left[ c \cdot | T^{\mathcal I^*} | \; \right]\\
	&\ge  \mathbb E_{\mathcal I^*}\left[ c \cdot | T^{\mathcal I^*} | \;\vert \; S \text{ is an $\varepsilon$-box}\right]\,\cdot\,\mathbb P\left [S \text{ is an $\varepsilon$-box}\right]\\
	&\ge  \frac{c}{1-c}\cdot\,\left((1-2\varepsilon)\cdot \mathbb E_{\mathcal I^*}\left[ \, | B^{\mathcal I}_{i,S} | \, \;\vert \; S \text{ is an $\varepsilon$-box}\right] - c\cdot (n-1)\right)\,\cdot\,(1-\Delta)\\
	&\ge  \frac{c}{1-c}\cdot\,\left((1-2\varepsilon)\cdot \frac {n-1} 2\cdot (1-2\Delta) - c\cdot (n-1)\right)\,\cdot\,(1-\Delta)\\
	&\ge  \frac{c (n-1)}{2(1-c)}\cdot \,\left((1-2\varepsilon)\cdot  (1-2\Delta) - 2c\right)\,\cdot\,(1-\Delta)\\
	&\ge \frac{c\,(n-1)}{2(1-c)}\cdot\,	\left(1-2c-2(\varepsilon+\Delta)\right)\,\left(1-\Delta\right).\\
	\end{align*}
	The third inequality uses Lemma~\ref{lem:BTrelation} and the fourth uses Corollary~\ref{cor:BoundB}.
	\end{proof}

\begin{theorem}\label{thm:LB} Let  $c\in (0, \frac{1}{2}).$ For the distribution $\mathcal I^* \sim \mathcal G^*(n, \ell, \varepsilon, c)$  for arbitrary deterministic weakly monotone allocation algorithm $A\in \mathcal A_K$ holds that for arbitrary $\delta>0$

$$\frac{\mathbb E_{\mathcal I^* \sim \mathcal G^*}[\Mech{A}{\mathcal I^*}]}{\mathbb E_{\mathcal I^* \sim \mathcal G^*}[\Opt{\mathcal I^*}]}\; > \;\frac{c\,(1-2c)\,}{2\,(1-c)}\cdot (n-1)-\delta \,$$
for small enough $\varepsilon >0$ and large enough multiplicity of parallel tasks $\ell(\varepsilon) \in \mathbb N.$ 

\end{theorem}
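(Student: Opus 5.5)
The plan is to bound the numerator with Lemma~\ref{lem:LBMakespan}, bound the denominator by $1$ directly, and then pick the parameters to make the error terms small.

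\textbf{Denominator.} For every instance $\mathcal I^*$ in the support of $\mathcal G^*(n,\ell,\varepsilon,c)$ we have $\Opt{\mathcal I^*}\le 1$. To see this, allocate each star edge $e_j$ to its leaf $j$, at cost $s_j<1$. Allocate every other edge to an endpoint whose cost is $0$; such an endpoint exists because the multi-clique instance is trivial off the star. Each leaf $j$ then receives exactly one positive-cost task, so it has load $s_j<1$, and the root has load $0$. Hence $\mathbb E_{\mathcal I^*\sim\mathcal G^*}[\Opt{\mathcal I^*}]\le 1$. It is also positive: every star edge costs at least $c$ on either endpoint and at least $K^2$ elsewhere, so $\Opt{\mathcal I^*}\ge c>0$. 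The ratio is therefore well defined, and it is at least the expected makespan.

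\textbf{Numerator.} Lemma~\ref{lem:LBMakespan} gives
\[
\mathbb E[\Mech{A}{\mathcal I^*}]\ \ge\ \frac{c(n-1)}{2(1-c)}\,\bigl(1-2c-2(\varepsilon+\Delta)\bigr)(1-\Delta).
\]
Here $\Delta=\Delta(n,\ell,K,\varepsilon)$ is the Box Theorem bound, which tends to $0$ as $\ell\to\infty$ for fixed $n,K,\varepsilon$. Expanding, the right-hand side equals
\[
\frac{c(1-2c)}{2(1-c)}(n-1)\;-\;\frac{c(n-1)}{2(1-c)}\,O(\varepsilon+\Delta),
\]
with a constant independent of $A$.

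\textbf{Choosing parameters.} Given $\delta>0$, I first choose $\varepsilon\in(0,\tfrac1{K+1})$ small enough that $\frac{c(n-1)}{1-c}\cdot 2\varepsilon<\delta/2$. This choice must be perturbed as Proposition~\ref{prop:indepParallel} requires; since only countably many values are excluded, arbitrarily small admissible $\varepsilon$ still exist. I then take the even number $\ell(\varepsilon)$ large enough that the $\Delta$-terms contribute less than $\delta/2$. This is possible because $\Delta\propto 1/\sqrt{\ell}$.

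\textbf{Main obstacle.} The delicate point is uniformity over $A\in\mathcal A_K$. Yao's principle needs a single distribution that works for every $A$, so the parameters must not depend on $A$. They do not: the Box Theorem bound $\Delta$ depends only on $n,\ell,K,\varepsilon$, and Lemma~\ref{lem:LBMakespan} holds for every $A$ with ratio less than $K$. The fixed perturbation $\mathcal I_s$ and the choice of $\varepsilon$ avoiding discontinuities also have to be compatible with all $A$. Since only the finitely many mechanisms in the support of the given $R$ matter, and each excludes at most countably many values, a common admissible $\varepsilon$ exists.

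Combining the three steps gives the ratio strictly greater than $\frac{c(1-2c)}{2(1-c)}(n-1)-\delta$.
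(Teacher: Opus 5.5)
Your proof is correct and follows the paper's argument: the numerator is bounded by Lemma~\ref{lem:LBMakespan} with $\varepsilon$ small and $\ell(\varepsilon)$ large, and the denominator by $\Opt{\mathcal I^*}\le \max_{j\neq i} s_j<1$, obtained by giving each star edge to its leaf and every other edge to a zero-cost endpoint. Your explicit error bookkeeping and your uniformity remark make precise a point the paper leaves implicit: $\Delta$ depends only on $n,\ell,K,\varepsilon$, and a common admissible $\varepsilon$ and $\mathcal I_s$ exist for the finitely many mechanisms in $\operatorname{supp}(R)$ (for $\mathcal I_s$ the excluded set is a null set rather than a countable one, but a finite union of null sets is still null).
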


\begin{proof} $${\mathbb E_{\mathcal I^* \sim \mathcal G^*}[\Mech{A}{\mathcal I^*}]}\, > \,\frac{c\,(1-2c)\,}{2\,(1-c)}\cdot (n-1)-\delta\,$$ holds for $\varepsilon$ small enough and  $\ell=\ell(\varepsilon)$ large enough, by the previous lemma.

Consider now $\mathbb E_{\mathcal I^* \sim \mathcal G^*}[\Opt{\mathcal I^*}].$ For every instance $\mathcal I^*$ in the support of $\mathcal G^*$ holds that $\Opt{\mathcal I^*}\leq \max_{j| j\neq i} s_j< 1,$ since the edges of the star $S$ can be allocated to the respective leaf vertices, and all other tasks are trivial. Thus $\frac{\mathbb E_{\mathcal I^*}[\Mech{A}{\mathcal I^*}]}{\mathbb E_{\mathcal I^* }[\Opt{\mathcal I^*}]}> \frac{\mathbb E_{\mathcal I^*}[\Mech{A}{\mathcal I^*}]}{1}> \,\frac{c\,(1-2c)\,}{2\,(1-c)}\cdot (n-1)-\delta\,$ holds for the ratio as well. 
\end{proof}
Using the Yao Principle,\footnote{The Yao principle is based on the minimax Theorem for zero-sum games (between algorithm designer and the adversary). The entries of the game matrix  are here not the approximation ratios $\rho(A,\mathcal I),$ but for any fixed $\beta,$ the $\Mech{A}{\mathcal I}-\beta\cdot\Opt{\mathcal I}.$} we obtain the main lower-bound result:

\begin{theorem}\label{thm:mainLB} Let $n$ denote the number of machines and let $c\in (0, \frac{1}{2}).$  Then $\frac{c\,(1-2c)\,}{2\,(1-c)}\cdot (n-1)$ is a lower bound on the approximation factor $\rho(R)$ of any (discrete) randomized universally truthful mechanism $R.$ 
By setting  $c=1/4$ or $c=1/3,$ we obtain the lower bound $$\rho(R)\geq \frac{(n-1)}{12}\approx 0.0833\cdot (n-1).$$ 
The maximum value of $\frac{c\,(1-2c)\,}{2\,(1-c)}$ is $(3/2-\sqrt{2})\approx 0.0858$ achieved for $c=1-\frac{\sqrt{2}}{2}.$ The latter yields the bound $$\rho(R)\geq (3/2-\sqrt{2})(n-1)\approx 0.0858\cdot(n-1).$$

\end{theorem}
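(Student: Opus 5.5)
The plan is to combine Theorem~\ref{thm:LB} with Yao's principle, applied to the finite zero-sum game with shifted payoffs described in the footnote, and then to extend the result to arbitrary discrete distributions via Proposition~\ref{prop:discrete}.

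\emph{Setup.} Fix $n$, $c\in(0,\frac12)$, and a target $\beta<\frac{c(1-2c)}{2(1-c)}(n-1)$. Choose $\delta>0$ with $\beta<\frac{c(1-2c)}{2(1-c)}(n-1)-\delta$. Fix also $K>n+1$ and a mechanism $R\in\mathcal R_K$, so $R$ has finite support contained in $\mathcal A_K$. For this $K$, pick $\varepsilon$ small enough, $\ell(\varepsilon)$ large enough, and the perturbation $\mathcal I_s$ (together with a perturbed $\varepsilon$ as in Proposition~\ref{prop:indepParallel}), so that Theorem~\ref{thm:LB} applies uniformly to every $A\in\mathcal A_K$. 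The support of $\mathcal G^*$ is finite: there are finitely many orientations, roots and stars.

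\emph{Yao step.} For every $A\in\operatorname{supp}(R)$, Theorem~\ref{thm:LB} gives
$$\E_{\mathcal I^*}[\Mech{A}{\mathcal I^*}-\beta\,\Opt{\mathcal I^*}]>0,$$
because the ratio of expectations exceeds $\beta$. Averaging over $A\sim R$ and exchanging the two finite expectations yields
$$\E_{\mathcal I^*}\bigl[\Mech{R}{\mathcal I^*}-\beta\,\Opt{\mathcal I^*}\bigr]>0.$$
So some instance $\mathcal I_R$ in the support satisfies $\Mech{R}{\mathcal I_R}>\beta\,\Opt{\mathcal I_R}$. Each such instance has $\Opt{\mathcal I_R}>0$: by the proof of Theorem~\ref{thm:LB} it is at least $\min(c,\min_j s_j)>0$, since every star task costs $c$ on the root and about $1$ on its leaf, and every other machine has cost $K^2$. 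Hence $\rho(R,\mathcal I_R)=\Mech{R}{\mathcal I_R}/\Opt{\mathcal I_R}>\beta$. Only this direct averaging direction is needed, not the full minimax theorem.

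\emph{Extension to general discrete $R$.} The Yao step shows that every $R\in\mathcal R_K$, for every $K>n+1$, has approximation ratio larger than $\beta$. Proposition~\ref{prop:discrete} then gives $\rho(R)\ge\beta$ for every discrete universally truthful $R$. If some deterministic mechanism in the support had unbounded ratio, the argument in Section~\ref{sec: Yao} already gives an infinite ratio. Letting $\beta$ tend to $\frac{c(1-2c)}{2(1-c)}(n-1)$ gives the claimed bound.

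\emph{Calculus.} With $f(c)=\frac{c(1-2c)}{2(1-c)}$ we have $f(1/4)=f(1/3)=\frac1{12}$. The derivative of $\frac{c-2c^2}{1-c}$ has numerator $(1-4c)(1-c)+(c-2c^2)=2c^2-4c+1$. Its root in $(0,\frac12)$ is $c=1-\frac{\sqrt2}{2}$, and substituting gives $f=\frac32-\sqrt2$.

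\emph{Main obstacle.} The delicate point is the order of quantifiers. The instance distribution depends on $K$ through $\varepsilon$ and $\ell$, and also on the fixed $\mathcal I_s$ and the perturbed $\varepsilon$. These choices must work simultaneously for all $A\in\mathcal A_K$, or at least for the finitely many $A$ in $\operatorname{supp}(R)$. I would handle this by choosing them after fixing $R$: since the support is finite, only countably many discontinuity points must be avoided, which remains possible.
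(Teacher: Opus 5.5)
Your proposal is correct and follows the paper's proof. You apply Theorem~\ref{thm:LB} to each $A\in\mathcal A_K$, use the averaging direction of Yao's principle with the shifted payoffs $\Mech{A}{\mathcal I^*}-\beta\,\Opt{\mathcal I^*}$ to obtain a bad instance for each $R\in\mathcal R_K$, and pass to general discrete distributions via Proposition~\ref{prop:discrete}. The paper leaves several details implicit that you supply correctly: the check that $\Opt{\mathcal I_R}>0$, the calculus for $\frac{c(1-2c)}{2(1-c)}$, and fixing $\mathcal I_s$ and the perturbed $\varepsilon$ only after the finite support of $R$ is known.
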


\begin{proof} For arbitrary given $K\geq n+1$ and for every deterministic mechanism $A\in \mathcal A_K,$ we obtained the lower bound  of Theorem~\ref{thm:LB} for the given distribution over multi-clique instances.  Applying Yao's Principle~\cite{Yao77,MotwaniR95}, every randomized universally truthful mechanism $R\in \mathcal R_K$ for unrelated machine scheduling has expected approximation ratio $\rho(R)\geq \frac{c\,(1-2c)\,}{2\,(1-c)}\cdot (n-1).$%

By Proposition~\ref{prop:discrete}, this further implies the same approximation lower bound for every universally truthful mechanism that is a discrete distribution over (countably many) deterministic truthful mechanisms. 
\end{proof}

	\section{Upper Bound}
\label{sec:upperbound}

In this section, we complement our negative result, by proposing a
universally truthful mechanism, which we call
\textsc{Exp-Bounded-Square}, that achieves an expected approximation
ratio of at most $(n+1)/2 + O(\sqrt{n \ln n}) $
(Corollary~\ref{cor:upper-bound}). This is an asymptotic improvement
over the linear coefficient of the previous best bound of $0.837n$
\cite{LuYu2008Improved} and it asymptotically matches the $(n+1)/2$
bound for
fractional~\cite{ChristodoulouKoutsoupiasKovacs2010Fractional} as well as the
$(n+5)/2$ bound for truthful-in-expectation
mechanisms~\cite{LuYu2008Randomized}.  Our bound also asymptotically
matches the best possible bound that can be achieved by
\emph{task-independent}
mechanisms~\cite{ChristodoulouKoutsoupiasKovacs2010Fractional,LuYu2008Randomized}.

In Section~\ref{sec:exp-bounded-square}, we formally define
\textsc{Exp-Bounded-Square} and derive the task assignment
probabilities as a function of the machines' bids.  Next,
Section~\ref{sec:universal-truthfulness} establishes the universal
truthfulness of the mechanism, and Section~\ref{sec:approximation}
presents the approximation guarantee.

\subsection{The \textsc{Exp-Bounded-Square} mechanism}
\label{sec:exp-bounded-square}
\textsc{Exp-Bounded-Square}, presented in
Algorithm~\ref{alg:exp-bounded-square}, is an adaptation of the
truthful-in-expectation \textsc{Bounded-Square} mechanism by Lu and
Yu~\cite{LuYu2008Randomized}.

\begin{algorithm}[ht]
\DontPrintSemicolon
\caption{\textsc{Exp-Bounded-Square}}\label{alg:exp-bounded-square}
\KwIn{Non-negative processing times $t = (t_{ij})_{i \in N, j \in M}$.}
\KwOut{Random allocation of tasks (set $M$) to machines (set $N$).}
\For{$j = 1, \dots, m$}{
	\For{$i = 1, \dots, n$} {
		Draw $E_{ij} \sim \text{Exp}(1)$ independently.\;
	}
	$t_j^{\min} \gets \min_{i \in N} \{ t_{ij} \}$ \;
	$N_j \gets \{i \in N ~\colon~ t_{ij} \le 2 \cdot t_j^{\min} \}$. \;
	
	\eIf{$t_j^{\min} = 0$}{
		Assign $j$ to a machine $i^*$ drawn uniformly at random from $N_j$.\;
	} {
		Assign $j$ to machine $i^* = \arg \min_{i \in N_j} \left\{ E_{ij} \cdot (t_{ij})^2 \right\}$.\;
	}
}
\end{algorithm}

The mechanism processes tasks independently using a two-stage randomized assignment rule. 
First, it restricts the allocation to an eligible set of machines $N_j$ whose processing times are within a factor of $2$ of the minimum, $t_j^{\min} = \min_{i} t_{ij}$. 
This pruning step prevents the mechanism from assigning a task to a highly inefficient machine. 
Then, if $t_j^{\min} = 0$, the task is allocated uniformly at random among those machines. Otherwise, it is assigned to the machine that minimizes the \emph{exponentially scaled quadratic} processing time, defined as $i^* = \arg \min_{i \in N_j} \{ E_{ij} \cdot (t_{ij})^2 \}$, where $E_{ij} \sim \text{Exp}(1)$ are independent random weights drawn from the exponential distribution with rate parameter 1. 
This quadratic scaling biases the allocation toward faster machines and was first used for the fractional case in~\cite{ChristodoulouKoutsoupiasKovacs2010Fractional}. The independent random weights drawn from the exponential distribution
serve to guarantee universal truthfulness. Because the weights are drawn from a continuous probability distribution, the algorithm has continuous support. We assume an arbitrary but fixed tie-breaking rule that is independent of the input $t$.

We first show in the next lemma that the assignment of each task follows
the exact same probability distribution as the fractional solution
computed by \textsc{Bounded-Square}~\cite{LuYu2008Randomized}. Given any
instance $t$, for every task $j \in M$, let
$t_j^{\min} := \min_{i \in N} \{ t_{ij} \}$ denote the minimum
processing time for task $j$, and let
\[
	N_j := \{i \in N \;\colon\; t_{ij} \le 2 \cdot t_j^{\min} \}
\]
denote the set of \emph{eligible machines} for task $j$, as described in Algorithm~\ref{alg:exp-bounded-square}.

\begin{lemma}
	\label{lem:probabilities-exp-bounded-square}
	Given any instance $t$, \textsc{Exp-Bounded-Square} allocates every task $j \in M$ independently to an eligible machine $i \in N_j$ with probability
	\[
		\begin{cases}
			\frac {1/(t_{ij})^2} {\sum_{s \in N_j} 1/(t_{sj})^2} &, ~\text{if}~ t_j^{\min} > 0 ~,\\
			\frac {1} {|N_j|} &, ~\text{else}~,
		\end{cases}
	\]
	and to a non-eligible machine $i \notin N_j$ with probability 0.
\end{lemma}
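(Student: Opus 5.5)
Independence across tasks comes straight from the algorithm: the allocation of task $j$ depends only on the column $(t_{ij})_{i\in N}$ and on the fresh draws $(E_{ij})_{i\in N}$, which are independent of the draws used for any other task. It therefore suffices to fix one task $j$ and compute the distribution of $i^*$. Every branch of the algorithm picks $i^*$ from $N_j$, so a non-eligible machine gets $j$ with probability $0$. The case $t_j^{\min}=0$ is also immediate, since the algorithm then draws $i^*$ uniformly from $N_j$. Note that here $N_j=\{i: t_{ij}=0\}$, but the argument does not need this.

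The main case is $t_j^{\min}>0$. Then every $i\in N_j$ has $0<t_{ij}\le 2t_j^{\min}$, so we may define $X_i := E_{ij}\,(t_{ij})^2$. The standard scaling property of the exponential distribution gives: if $E\sim\mathrm{Exp}(1)$ and $a>0$, then $aE\sim\mathrm{Exp}(1/a)$. Hence the $X_i$ for $i\in N_j$ are independent exponentials with rates $\lambda_i := 1/(t_{ij})^2$.

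We then use the classical competing-exponentials fact. For independent $X_i\sim\mathrm{Exp}(\lambda_i)$,
\[
\mathbb P\bigl[X_i=\min_{s\in N_j} X_s\bigr]=\frac{\lambda_i}{\sum_{s\in N_j}\lambda_s}.
\]
It follows by conditioning on $X_i=x$ and integrating:
\[
\int_0^\infty \lambda_i e^{-\lambda_i x}\prod_{s\ne i}e^{-\lambda_s x}\,dx=\frac{\lambda_i}{\sum_s\lambda_s}.
\]
Substituting $\lambda_i=1/(t_{ij})^2$ gives exactly the claimed formula.

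The only remaining point is tie-breaking. Because the $X_s$ are independent and continuously distributed, $\mathbb P[X_s=X_{s'}]=0$ for $s\ne s'$. Ties, and hence the fixed tie-breaking rule, occur with probability zero and do not affect the probabilities. Nothing here is a real obstacle; the only care needed is to apply the scaling argument only to eligible machines, which all have strictly positive processing time in this case.
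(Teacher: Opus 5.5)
Your proposal is correct and follows the paper's proof. Independence across tasks and the $t_j^{\min}=0$ case are read off the algorithm; for $t_j^{\min}>0$ you rescale to independent exponentials $X_i=E_{ij}(t_{ij})^2$ with rates $1/(t_{ij})^2$ and apply the competing-exponentials formula. You also verify that formula by integration and note that ties have probability zero, both of which the paper leaves implicit, and you correctly call the $X_i$ independent where the paper writes ``i.i.d.''
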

\begin{proof}
	It follows directly from Algorithm~\ref{alg:exp-bounded-square} that the allocation is task-independent, and every task $j \in M$ with $t_j^{\min} = 0$ is allocated to an eligible machine uniformly at random.
	Therefore, we consider any task $j \in M$ with $t_j^{\min} > 0$.
	Let $E_{ij} \sim \text{Exp}(1)$ (i.i.d.) and $X_i := E_{ij} \cdot (t_{ij})^2$ for every $i \in N_j$.
	By description, Algorithm~\ref{alg:exp-bounded-square} allocates $j$ to machine $i^* = \arg\min_{i \in N_j} \{X_i\}$.
	It follows 
	\[
		X_i \overset{\text{i.i.d.}}\sim \text{Exp}(1/(t_{ij})^2)
	\]
	directly from the definition of the exponential distribution, and subsequently
	\[
		\mathbb P \left( X_{i^*} = \min_{i \in N_j} \{	X_i \} \right) = \frac {1/(t_{i^*j})^2} {\sum_{i \in N_j} 1/(t_{ij})^2} \enspace.
	\]
\end{proof}

\subsection{Universal Truthfulness}
\label{sec:universal-truthfulness}

Next, we establish that the mechanism is indeed universally truthful.
Although not needed for the proof, we discuss a possible payment scheme in \Cref{sec:payment-scheme-upperbound} for comprehensiveness of the mechanism.

\begin{lemma}
	\textsc{Exp-Bounded-Square} is a universally truthful mechanism.
\end{lemma}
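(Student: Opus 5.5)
Fix an arbitrary realization of the random weights $E=(E_{ij})_{i\in N, j\in M}$ (all strictly positive, which happens with probability 1). Also fix, for tasks with $t_j^{\min}=0$, a realization of the uniform choice. We interpret this as a fixed random permutation $\pi_j$ of $N$, with the task going to the first machine of $N_j$ in the order $\pi_j$; this reproduces the uniform distribution on $N_j$. Each such realization yields a deterministic allocation rule $A^{E,\pi}$, and the mechanism is a distribution over these rules. By the characterization of Saks and Yu recalled in the Preliminaries, it suffices to show that every $A^{E,\pi}$ is weakly monotone. Since $A^{E,\pi}$ is task-independent, the allocation of task $j$ depends only on column $j$ of $t$.

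For task-independent rules, weak monotonicity follows once each single-task rule is monotone in the following sense. Fix $t_{-i}$ and the column entries of the other machines. Then there must be a threshold $\theta_{ij}(t_{-i,j})\in[0,\infty]$ such that $i$ receives $j$ when $t_{ij}<\theta_{ij}$ and does not when $t_{ij}>\theta_{ij}$. Indeed, if every task has such a threshold, then $a_{ij}$ is non-increasing in $t_{ij}$. Hence each summand $(a_{ij}(t')-a_{ij}(t))(t'_{ij}-t_{ij})$ in the WMON definition is $\le 0$, and so is the sum. With the threshold in hand, truthful payments are simply the sums of the thresholds over won tasks.

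The main step, and the only real obstacle, is to verify this single-task monotonicity for fixed $E$. Write $m_{-i}=\min_{s\neq i}t_{sj}$. I would check that machine $i$ winning task $j$ is preserved when $t_{ij}$ decreases, distinguishing cases by how the eligibility set and $t_j^{\min}$ change.
\begin{enumerate}
\item[(a)] Suppose $t_{ij}\ge m_{-i}>0$. Then $i$ wins iff $t_{ij}\le 2m_{-i}$ and $E_{ij}t_{ij}^2<\min_{s\in N_j\setminus\{i\}}E_{sj}t_{sj}^2$, where $N_j\setminus\{i\}$ is independent of $t_{ij}$ in this range. Both conditions are preserved by lowering $t_{ij}$.
\item[(b)] Suppose $0<t_{ij}<m_{-i}$. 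Then $i$ is the unique minimum and $N_j=\{i\}\cup\{s: t_{sj}\le 2t_{ij}\}$. Lowering $t_{ij}$ only shrinks the set of competitors, and it also lowers $E_{ij}t_{ij}^2$, so $i$ still wins. Moving from case (a) into case (b) also preserves winning: at the boundary $t_{ij}=m_{-i}$ the competitor set only shrinks as we cross into (b).
\item[(c)] Suppose $t_{ij}=0$. If $m_{-i}>0$, then $N_j=\{i\}$ and $i$ wins with certainty. If $m_{-i}=0$, then $i$ wins only if it precedes the other zero-machines in $\pi_j$, and this outcome does not depend on $t_{ij}$ once $t_{ij}=0$.
\end{enumerate}

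The remaining delicate point is the transition into the case $t_{ij}=0$ when $m_{-i}=0$. If some other machine has zero cost, then for every $t_{ij}>0$ we have $N_j=\{s:t_{sj}=0\}\not\ni i$, so $i$ never wins. At $t_{ij}=0$, machine $i$ wins with some fixed $0/1$ value. Either way the win-set is of the form $\{t_{ij}<\theta\}$ or $\{t_{ij}\le\theta\}$, which is a threshold rule.

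Finally, ties in $\arg\min$ occur only on a measure-zero set of $E$ for given positive costs. Alternatively, the fixed input-independent tie-breaking rule makes the win condition a closed-versus-open boundary, which does not affect the threshold structure. Combining the cases, each $A^{E,\pi}$ is a threshold (hence WMON) task-independent rule, so \textsc{Exp-Bounded-Square} is a probability distribution over truthful deterministic mechanisms, i.e., universally truthful.
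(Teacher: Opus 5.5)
Your proposal is correct and follows the same route as the paper: fix the random weights, use task-independence to reduce to a single task, and observe that lowering $t_{ij}$ only shrinks the eligible set while decreasing $E_{ij}t_{ij}^2$, so a winner keeps the task. Your version is more careful than the paper's one-paragraph argument: you explicitly derandomize the uniform choice when $t_j^{\min}=0$ and you handle ties. For ties, your second justification is the one that matters for a fixed realization, not the measure-zero remark: ties do occur at specific inputs once $E$ is fixed, but after lowering a winning bid $t_{ij}>0$, machine $i$ becomes the strict minimizer.
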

\begin{proof}
	We consider an arbitrary but fixed instance, and an arbitrary but fixed allocation rule.
	Thus, for each machine $i \in N$ and each task $j \in M$, the random draw of $E_{ij}$ is fixed to an arbitrary realization.
	Moreover, due to task-independence of the mechanism, it suffices to consider the allocation of a fixed task $j \in M$.
	The reported processing times for other tasks do not influence the allocation of $j$.

	Suppose $j$ is allocated to machine $i^* \in N$ under some reported
	processing time $t_{i^*j} > 0$.  Then it holds
	$i^* = \arg \min_{i \in N_j} \left\{ E_{ij} \cdot (t_{ij})^2
	\right\}$.  When reporting $t_{i^*j}' < t_{i^*j}$ while all
	other processing times $t_{ij}$ reported by machines
	$i \neq i^*$ remain fixed, the new set of \emph{eligible
	  machines} is a (weak) subset of $N_j$, since $t_j^{\min}$
	might be decreased.  However $i^*$ is still an eligible
	machine, and therefore $E_{i^*j} \cdot (t_{i^*j})^2$ is still
	the minimum value that can be achieved with eligible machines.
	Thus task $j$ will still be allocated to $i^*$, and the
	allocation is weakly monotone.
	As stated in~\Cref{sec:prelim}, this implies truthfulness.
\end{proof}

\subsection{Approximation}
\label{sec:approximation}

Here we show an upper bound on the expected makespan of \textsc{Exp-Bounded-Square}.
For any given instance $t$, let $R_i(t)$ denote the random set of tasks that is allocated to machine $i \in N$ by \textsc{Bounded-Square}, and let $\theta_i(t) := \sum_{j \in R_i(t)} t_{ij}$ denote the total (random) processing time of machine $i$.
Moreover, let 
\[
	\Mech{\textsc{EBS}}{t} := \max_{i \in N} \theta_i(t)
\] 
denote the makespan of the random allocation obtained by \textsc{Exp-Bounded-Square} under $t$, and let
\[
	\mu(t) := \frac {n+1} 2 \cdot \Opt{t} \enspace,
\]
where $\Opt{t}$ is the optimal makespan under $t$.

We formally denote two simple observations.
Every task $j \in M$ has a processing time of at least $t_j^{\min} = \min_{i \in N} \{ t_{ij} \}$ on every machine.
Hence $\max_{j \in M} t_j^{\min} \le \Opt{t}$.
Consequently, even when every task would be allocated to a machine that has minimum processing time for it, the average total processing time of the machines would be $\frac 1 n \cdot \sum_{j \in M} t_j^{\min}$, and the maximum total processing time in an optimal allocation cannot be better, which implies $\sum_{j \in M} t_j^{\min} \le n \cdot \Opt{t}$. 
We may also assume that there exists a task $j \in M$ with $t_j^{\min} > 0$.
Otherwise, \textsc{Exp-Bounded-Square} allocates every task $j \in M$ to a machine $i \in N$ that has processing time $t_{ij} = 0$ for it, and it holds $\Mech{\textsc{EBS}}{t} = 0$ with probability 1.
In this case, the allocation would be optimal.
	
\begin{observation}
	\label{observation:opt-bounds}
	It holds $0 < \max_{j \in M} t_j^{\min} \le \Opt{t}$
	and $0 < \sum_{j \in M} t_j^{\min} \le n \cdot \Opt{t}$.
\end{observation}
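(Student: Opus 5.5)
The observation has two parts: the positivity claims, and the two bounds $\max_j t_j^{\min}\le\Opt{t}$ and $\sum_j t_j^{\min}\le n\cdot\Opt{t}$. I will prove the bounds first and then the positivity.

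For the first bound, I fix an optimal allocation $A^*$ with makespan $\Opt{t}$ and an arbitrary task $j$. Task $j$ is assigned to some machine $i$ under $A^*$. All processing times are nonnegative, so
\[
\Opt{t}\ \ge\ t_i(A^*_i)\ \ge\ t_{ij}\ \ge\ t_j^{\min}.
\]
Maximizing over $j$ gives $\max_j t_j^{\min}\le\Opt{t}$.

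For the second bound, I sum the completion times of all machines under $A^*$. Each task contributes its cost on the machine that receives it, and that cost is at least $t_j^{\min}$. Hence
\[
\sum_{j\in M} t_j^{\min}\ \le\ \sum_{i\in N} t_i(A^*_i)\ \le\ n\cdot\max_i t_i(A^*_i)\ =\ n\cdot\Opt{t}.
\]
This is just the averaging argument stated informally in the paper.

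For the strict positivity, I use the standing assumption made just before the observation: there is some task $j$ with $t_j^{\min}>0$. The remaining case, where every $t_j^{\min}=0$, was handled separately, since the mechanism is then optimal with makespan $0$. Under this assumption, $\max_j t_j^{\min}$ is positive, and so is $\sum_j t_j^{\min}$, because every other summand is nonnegative. Chaining with the two bounds also gives $\Opt{t}>0$, so the later ratios are well defined.

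No step is a real obstacle. The only point needing care is to invoke the nontriviality assumption explicitly for the strict inequalities, since the two upper bounds hold even without it.
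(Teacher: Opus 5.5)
Your proof is correct and follows the paper's argument. You bound each $t_j^{\min}$ by its cost under an optimal allocation, bound $\sum_j t_j^{\min}$ by the total load $\sum_i t_i(A^*_i)\le n\cdot\Opt{t}$, and get strict positivity from the standing assumption that some task has $t_j^{\min}>0$ (the all-zero case being handled separately); you just make the paper's informal averaging step explicit.
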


Lu and Yu \cite{LuYu2008Randomized} gave a bound on the maximum expected processing time of the machines, which also holds for our algorithm due to identical (fractional) probabilities of the allocation.
For completeness, we provide a formal proof in~\Cref{sec:expected-load-lu-yu-upperbound}.
Note that this result does not directly imply a bound for the {\em expected makespan} $\mathbb E[\Mech{\textsc{EBS}}{t}]$, which we aim to obtain.
\begin{lemma}[\cite{LuYu2008Randomized}, Lemma 3]
	\label{lem:lu_yu_max_expected_load}
	For any instance $t$, it holds
	\[
		\mathbb E[\theta_i(t)] \le \mu(t) = \left(\frac{n+1}{2}\right)\cdot \Opt{t}
	\] 
	for every machine $i \in N$.
\end{lemma}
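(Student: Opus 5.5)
We need to bound $\mathbb E[\theta_i(t)]=\sum_{j} p_{ij}\,t_{ij}$, where $p_{ij}$ are the probabilities given by Lemma~\ref{lem:probabilities-exp-bounded-square}. Tasks with $t_j^{\min}=0$ contribute nothing: machine $i$ receives such a task only if $i\in N_j$, which forces $t_{ij}\le 2\cdot 0=0$. So we restrict attention to tasks with $t_j^{\min}>0$. Fix an optimal allocation $O=(O_1,\dots,O_n)$. We split the tasks into $O_i$, the tasks the optimum gives to $i$, and $M\setminus O_i$. The bound on the first part should be roughly $\Opt{t}$, and the bound on the second part roughly $\frac{n-1}{2}\Opt{t}$.

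\textbf{Tasks in $O_i$.} Here we use the trivial bound $p_{ij}\le 1$, giving $\sum_{j\in O_i}p_{ij}t_{ij}\le \sum_{j\in O_i}t_{ij}\le \Opt{t}$.

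\textbf{Tasks in $M\setminus O_i$.} For $j\in O_k$ with $k\neq i$, we may assume $i\in N_j$, since otherwise $p_{ij}=0$. Write $x=t_{ij}$ and $y=t_{kj}$. If $k\in N_j$, then
\[
p_{ij}\,x=\frac{x^{-1}}{\sum_{s\in N_j}t_{sj}^{-2}}\le \frac{x^{-1}}{x^{-2}+y^{-2}}=\frac{xy^2}{x^2+y^2}\le \frac{y}{2},
\]
using AM--GM, $x^2+y^2\ge 2xy$. If $k\notin N_j$, then $y>2t_j^{\min}\ge x$, and trivially $p_{ij}x\le x< y$. This second case is the main obstacle: it loses the factor $1/2$. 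I would handle it with the pruning rule. Since $t_j^{\min}\ge y/2$ would contradict $k\notin N_j$, the minimizer $m_j$ has $t_{m_j j}<y/2$, and it lies in $N_j$. Comparing $i$ with $m_j$ gives
\[
p_{ij}x\le \frac{x^{-1}}{x^{-2}+(t_j^{\min})^{-2}}\le \frac{t_j^{\min}}{2}<\frac{y}{2}.
\]
So in every case $p_{ij}t_{ij}\le \tfrac12 t_{kj}$. (Dropping this case split and using only $t_j^{\min}$ would give $p_{ij}t_{ij}\le t_j^{\min}/2$ directly, which is the same conclusion.)

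\textbf{Summing.} Using $p_{ij}t_{ij}\le \tfrac12 t_j^{\min}$ for $j\notin O_i$ together with Observation~\ref{observation:opt-bounds}, a naive estimate gives $\tfrac12\sum_j t_j^{\min}\le \tfrac n2\Opt{t}$. That yields a total of $(1+\tfrac n2)\Opt{t}$, which is too weak. The correct accounting sums $\tfrac12 t_{kj}$ over $j\in O_k$ for each $k\ne i$, giving $\sum_{k\neq i}\tfrac12 t_k(O_k)\le \frac{n-1}{2}\Opt{t}$. Adding the $O_i$ part then gives
\[
\mathbb E[\theta_i(t)]\le \Opt{t}+\frac{n-1}{2}\Opt{t}=\frac{n+1}{2}\Opt{t}=\mu(t),
\]
as claimed.
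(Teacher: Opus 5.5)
Your proof follows the paper's argument. You fix an optimal allocation, bound the $O_i$ part trivially by $\Opt{t}$, prove the per-task inequality $p_{ij}t_{ij}\le \frac12 t_{kj}$ for $j\in O_k$, $k\neq i$, by the same AM--GM comparison, and sum over $k\neq i$. However, one subcase is not covered.

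When $k\notin N_j$, your step ``comparing $i$ with $m_j$'' assumes $m_j\neq i$. If $i$ is itself the unique machine attaining $t_j^{\min}$, then $x^{-2}=(t_j^{\min})^{-2}$ appears only once in $\sum_{s\in N_j}t_{sj}^{-2}$. In that case the bound $p_{ij}x\le x^{-1}/(x^{-2}+(t_j^{\min})^{-2})$ is false: for $N_j=\{i\}$ you get $p_{ij}x=x$, not $x/2$. This is exactly the paper's case (3), and it closes in one line: $p_{ij}x\le x=t_j^{\min}<y/2$.

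For the same reason, your parenthetical remark that $p_{ij}t_{ij}\le t_j^{\min}/2$ holds in general is wrong. With $t_{ij}=t_j^{\min}=1$ and a single other eligible machine of time $2$, you get $p_{ij}t_{ij}=4/5$. Your final accounting only uses $p_{ij}t_{ij}\le\frac12 t_{kj}$, which remains true, so adding this case completes the proof.
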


In the following lemma, we show that the completion time of a machine deviates above its expected value with exponentially small probability.
\begin{lemma}
	\label{lem:upper-bound-machine-i}
	Consider any instance $t$.
	For every machine $i \in N$ and for every $q \in \mathbb R_{>0}$, it holds
	\[
		\mathbb P \left( \theta_i(t) \ge \mu(t) + q \cdot \Opt{t} \right) \le e^{ - q^2 / (2 n) } \enspace .
	\]
\end{lemma}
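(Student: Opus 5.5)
We would prove this with a Hoeffding-type bound. The starting point is the task-independence of \textsc{Exp-Bounded-Square} established in Lemma~\ref{lem:probabilities-exp-bounded-square}. Fix $t$ and a machine $i$. Then
\[
\theta_i(t)=\sum_{j\in M} X_j, \qquad X_j := t_{ij}\cdot \mathbf 1[j \text{ allocated to } i],
\]
and the $X_j$ are independent. Moreover, $X_j$ can be nonzero only if $i\in N_j$, which means $t_{ij}\le 2t_j^{\min}$. Hence each $X_j$ lies in the interval $[0,a_j]$ with
\[
a_j := t_{ij}\,\mathbf 1[i\in N_j]\le 2t_j^{\min}.
\]
By Lemma~\ref{lem:lu_yu_max_expected_load}, $\mathbb E[\theta_i(t)]\le \mu(t)$. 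It therefore suffices to bound $\mathbb P(\theta_i-\mathbb E[\theta_i]\ge q\,\Opt{t})$.

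Hoeffding's inequality gives
\[
\mathbb P\bigl(\theta_i-\mathbb E\theta_i\ge \lambda\bigr)\le \exp\Bigl(-2\lambda^2\Big/\textstyle\sum_j a_j^2\Bigr).
\]
The key step is to bound $\sum_j a_j^2$ using Observation~\ref{observation:opt-bounds}. We have
\[
\sum_j a_j^2\le 4\sum_j (t_j^{\min})^2\le 4\,\Bigl(\max_j t_j^{\min}\Bigr)\sum_j t_j^{\min}\le 4\,\Opt{t}\cdot n\,\Opt{t}=4n\,\Opt{t}^2 .
\]
Plugging in $\lambda=q\,\Opt{t}$ yields
\[
\exp\bigl(-2q^2\Opt{t}^2/(4n\Opt{t}^2)\bigr)=e^{-q^2/(2n)},
\]
which is exactly the claimed bound. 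We may assume $\Opt{t}>0$, as justified just before Observation~\ref{observation:opt-bounds}; otherwise the makespan is $0$ almost surely and the statement is trivial.

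The main point needing care is the range of $X_j$. This should go smoothly once we note that a task with $t_j^{\min}=0$ is placed only on machines with $t_{ij}\le 0$, so it contributes $0$ to every load. For tasks with $t_j^{\min}>0$, eligibility caps the contribution at $2t_j^{\min}$.

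We should also double-check that the constant works out with the sharp Hoeffding constant $2$ in the exponent. It does: $(2t_j^{\min})^2$ contributes the factor $4$, and $2/4=1/2$.

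Finally, independence across tasks must hold for the randomized allocation itself, not just for its marginals. This follows because the exponential draws $E_{ij}$ are independent across tasks and each task's assignment depends only on column $j$.
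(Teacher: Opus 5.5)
Your proof is correct and follows the paper's argument essentially step for step: independent per-task contributions $X_j\in[0,2t_j^{\min}]$, Hoeffding's inequality, the estimate $\sum_j (t_j^{\min})^2\le \max_j t_j^{\min}\cdot\sum_j t_j^{\min}\le n\,(\Opt{t})^2$ from Observation~\ref{observation:opt-bounds}, and $\mathbb E[\theta_i(t)]\le\mu(t)$ from Lemma~\ref{lem:lu_yu_max_expected_load}. One small correction: the case $\Opt{t}=0$ is not trivial but actually violates the inequality, since then $\theta_i(t)=0=\mu(t)+q\,\Opt{t}$ almost surely and the event has probability $1>e^{-q^2/(2n)}$; it therefore has to be excluded by the standing assumption $\Opt{t}>0$, exactly as the paper does, rather than dismissed as trivial.
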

\begin{proof}
	Consider a fixed machine $i \in N$ for a fixed instance $t$.
	For every task $j \in M$, let 
	\[
		X_j := 
		\begin{cases} 
			t_{ij} 	&,~\text{if}~j \in R_i(t),\\
			0		&,~\text{else}.
		\end{cases}
	\]
	By definition of \textsc{Exp-Bounded-Square} (Algorithm~\ref{alg:exp-bounded-square}), the random variables $X_1, \dots, X_m$ are independent, and it holds $0 \le X_j \le 2 \cdot t_j^{\min}$ for all $j \in M$, where $t_j^{\min}$ is the minimum processing time that any machine has for task $j$.
	Then it holds $\theta_i(t) = \sum_{j \in M} X_j$, and by assumption, $\Opt{t} > 0$. For every $q \in \mathbb R_{>0}$, Hoeffding's inequality~\cite{Hoeffding1963} yields
	\begin{equation}
		\label{ineq:hoeffding}
		\mathbb P \Big( \theta_i(t) - \mathbb E \left[ \theta_i(t) \right] \ge q \cdot \Opt{t} \Big) 
		\le \exp \left( - \frac {2 (q \cdot \Opt{t})^2} {\sum_{j \in M} (2 t_j^{\min})^2} \right)
		=  \exp \left( - \frac {q^2 \cdot (\Opt{t})^2} {2 \cdot \sum_{j \in M} (t_j^{\min})^2} \right) \enspace.
	\end{equation}
	Clearly, for any $j \in M$, it holds $t_j^{\min} \le \max_{j' \in M} t_{j'}^{\min}$.
	Thus,
	\begin{equation}
		\label{ineq:t_j_min_square_sum}
		\sum_{j \in M} (t_j^{\min})^2 
		\le \left(\max_{j \in M} t_j^{\min}\right) \cdot \sum_{j \in M} t_j^{\min} 
		\le \Opt{t} \cdot n \cdot \Opt{t} \enspace,
	\end{equation}
	where we used~\Cref{observation:opt-bounds}.
	Since $\mathbb E[\theta_i(t)] \le \mu(t)$ holds for every machine $i \in N$ due to \Cref{lem:lu_yu_max_expected_load}, we obtain
	\begin{align*}
		\mathbb P \left( \theta_i(t) \ge \mu(t) + q \cdot \Opt{t} \right) 
		&\le 
		\mathbb P \Big( \theta_i(t) \ge \mathbb E[\theta_i(t)] + q \cdot \Opt{t} \Big)\\
		&= 
		\mathbb P \Big( \theta_i(t) - \mathbb E[\theta_i(t)] \ge q \cdot \Opt{t} \Big)\\
		&\overset{(\ref{ineq:hoeffding})}\le
		\exp \left( - \frac {q^2 \cdot (\Opt{t})^2} {2 \sum_{j \in M} (t_j^{\min})^2} \right)\\
		&\overset{(\ref{ineq:t_j_min_square_sum})}\le
		\exp \left( - \frac {q^2 \cdot (\Opt{t})^2} {2 n \cdot (\Opt{t})^2} \right)\\
		&= e^{ - q^2 / (2 n) } \enspace.
	\end{align*}
\end{proof}

By applying the union bound we obtain the following bound on the expected makespan of \textsc{Exp-Bounded-Square}. 
\begin{theorem}
	\label{thm:upper-bound}
	For every instance $t$ and every $q \in \mathbb R_{> 0}$, \textsc{Exp-Bounded-Square} computes a random allocation with an expected makespan of
	\[
		\mathbb E[\Mech{\textsc{EBS}}{t}] \le \Opt{t} \cdot \left( \frac {n+1} 2 + q + 2 n^2 \cdot e^{ - {q^2} / {2 n} } \right) \enspace,
	\]
	where $\Opt{t}$ is the optimal makespan of instance $t$.
\end{theorem}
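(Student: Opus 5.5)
We write $X := \Mech{\textsc{EBS}}{t} = \max_{i \in N} \theta_i(t)$ and bound $\mathbb E[X]$ by splitting its range at the threshold $\tau := \mu(t) + q \cdot \Opt{t}$:
\[
\mathbb E[X] \le \tau + \mathbb E\big[(X-\tau)^+\big] = \tau + \int_0^\infty \mathbb P(X \ge \tau + s)\, ds .
\]
The first term already equals $\Opt{t}\cdot(\frac{n+1}{2} + q)$. So it remains to show that the tail integral is at most $2n^2 e^{-q^2/(2n)} \cdot \Opt{t}$.

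For the tail, we take the union bound over the $n$ machines and apply Lemma~\ref{lem:upper-bound-machine-i} with parameter $q + s/\Opt{t}$:
\[
\mathbb P(X \ge \tau + s) \le \sum_{i\in N} \mathbb P\big(\theta_i(t) \ge \mu(t) + (q + s/\Opt{t})\,\Opt{t}\big) \le n\, e^{-(q+s/\Opt{t})^2/(2n)} .
\]
Substituting $u = s/\Opt{t}$ gives
\[
\int_0^\infty \mathbb P(X \ge \tau+s)\,ds \le n \cdot \Opt{t} \int_0^\infty e^{-(q+u)^2/(2n)}\,du .
\]
We then use $(q+u)^2 \ge q^2 + u^2$, which holds since $q,u \ge 0$. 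This yields
\[
\int_0^\infty e^{-(q+u)^2/(2n)}\,du \le e^{-q^2/(2n)} \int_0^\infty e^{-u^2/(2n)}\,du = e^{-q^2/(2n)} \sqrt{\pi n/2}.
\]
The resulting tail contribution is $n\sqrt{\pi n/2}\, e^{-q^2/(2n)}\,\Opt{t}$. Since $\sqrt{\pi n/2} \le 2n$ for all $n \ge 1$, this is at most $2n^2 e^{-q^2/(2n)}\,\Opt{t}$, as required.

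There is an alternative that avoids the integral. One can use the crude deterministic bound $X \le \sum_j 2t_j^{\min} \le 2n \cdot \Opt{t}$, which follows from eligibility and Observation~\ref{observation:opt-bounds}. Combined with $\mathbb P(X \ge \tau) \le n e^{-q^2/(2n)}$, it gives
\[
\mathbb E[X] \le \tau + 2n\,\Opt{t} \cdot n e^{-q^2/(2n)} = \tau + 2n^2 e^{-q^2/(2n)}\,\Opt{t},
\]
which is exactly the stated constant. This second route is probably the intended one and is cleaner, so I would present it. The only point that needs care is the deterministic bound. Every task is allocated to an eligible machine, so it contributes at most $2t_j^{\min}$ to whichever machine receives it. Hence any machine's load is at most $\sum_j 2 t_j^{\min} \le 2n\,\Opt{t}$ by Observation~\ref{observation:opt-bounds}. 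Tasks with $t_j^{\min}=0$ go to zero-cost machines and contribute nothing. Finally, the degenerate case with all $t_j^{\min}=0$ was already handled, since the makespan is then $0$. I expect no real obstacle beyond verifying this deterministic cap.
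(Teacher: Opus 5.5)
Your proposal is correct. The route you say you would present is exactly the paper's proof: the deterministic cap $X\le\sum_j 2t_j^{\min}\le 2n\,\Opt{t}$ combined with the union-bound estimate $\mathbb P(X\ge\tau)\le n e^{-q^2/(2n)}$. Your first, integral-based route is also valid, and it gives the sharper tail term $\sqrt{\pi/2}\,n^{3/2}e^{-q^2/(2n)}\,\Opt{t}$ without needing the deterministic cap.
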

\begin{proof}
	By a union bound, \Cref{lem:upper-bound-machine-i} implies for every $q \in \mathbb R_{>0}$ that
	\begin{equation}
		\label{ineq:union-bound-makespan}
		\mathbb P \left( \max_{i \in N} \theta_i(t) \ge \mu(t) + q \cdot \Opt{t} \right) \le n \cdot e^{ - q^2 / (2 n) } \enspace.
	\end{equation}
	By definition of \textsc{Exp-Bounded-Square}, it follows
	\[
		\theta_i(t) = \sum_{j \in R_i(t)} t_{ij} \le  \sum_{j \in R_i(t)} 2 t_j^{\min} \enspace.
	\]
	\Cref{observation:opt-bounds} yields a trivial upper bound on the random makespan of \textsc{Exp-Bounded-Square}:
	\[
		\Mech{\textsc{EBS}}{t} = \max_{i \in N} \theta_i(t) \le
		\sum_{j \in M} 2 t_j^{\min}
		\le 2 n \cdot \Opt{t} \enspace.
	\]
	We use this trivial upper bound for the event $\Mech{\textsc{EBS}}{t} \ge \mu(t) +  q \cdot \Opt{t}$.
	Recall that we bounded the probability of that event already in~\Cref{ineq:union-bound-makespan}.
	Therefore,
	\begin{align*}
		\mathbb E[\Mech{\textsc{EBS}}{t}] 
		\le &~ \mathbb P\left (\Mech{\textsc{EBS}}{t} < \mu(t) + q \cdot \Opt{t} \right) \cdot \left(\mu(t) + q \cdot \Opt{t} \right)\\
		& + \mathbb P\left (\Mech{\textsc{EBS}}{t} \ge \mu(t) + q \cdot \Opt{t} \right) \cdot 2n \cdot \Opt{t}\\
		\le &~ \mu(t) + q \cdot \Opt{t} + \mathbb P\left (\Mech{\textsc{EBS}}{t} \ge \mu(t) + q \cdot \Opt{t} \right) \cdot 2n \cdot \Opt{t}\\
		\le &~ \mu(t) + q \cdot \Opt{t} + n \cdot e^{ - q^2 / (2 n) } \cdot 2n \cdot \Opt{t}\\
		= &~  \Opt{t} \cdot \left(  \frac {\mu(t)} {\Opt{t}} + q + 2 n^2 \cdot e^{ - q^2 / (2 n) } \right) \enspace.
	\end{align*}
	Since $\mu(t) = \frac {n+1} 2 \cdot \Opt{t}$, the theorem follows.
\end{proof}

By rewriting $2n^2 \cdot e^{ - q^2 / (2 n) } = 2e^{2 \ln n - q^2 / (2n)}$, we finally obtain the claimed upper bound on the (expected) approximation ratio as a corollary, by selecting $q = 2 \sqrt{n \ln n}$.

\begin{corollary}\label{cor:upper-bound}
	For every instance $t$, it holds
	\[
		\mathbb E[\Mech{\textsc{EBS}}{t}] \le \Opt{t} \cdot \left( \frac {n+1} 2 + 2 \sqrt{n \ln n} + 2 \right) \enspace.
	\]
\end{corollary}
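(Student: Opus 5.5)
The plan is to specialize Theorem~\ref{thm:upper-bound} to $q = 2\sqrt{n\ln n}$ and bound the tail term $2n^2 e^{-q^2/(2n)}$ by the constant $2$. Since Theorem~\ref{thm:upper-bound} holds for every $q\in\mathbb R_{>0}$ and every instance $t$, nothing beyond this substitution and one elementary estimate is needed.

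First, for $n\ge 2$ we have $\ln n>0$, so $q=2\sqrt{n\ln n}>0$ is admissible. Computing the exponent gives $q^2/(2n) = 4n\ln n/(2n) = 2\ln n$. Using the rewriting $2n^2 e^{-q^2/(2n)} = 2e^{2\ln n - q^2/(2n)}$ noted before the corollary, the tail term becomes $2e^{0}=2$. Substituting into Theorem~\ref{thm:upper-bound} then yields
\[
\mathbb E[\Mech{\textsc{EBS}}{t}] \le \Opt{t}\cdot\left(\frac{n+1}{2} + 2\sqrt{n\ln n} + 2\right),
\]
which is exactly the claim.

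Second, the degenerate case $n=1$ has to be handled separately, because there $q=0$ is not admissible in Theorem~\ref{thm:upper-bound}. With a single machine the allocation is forced, so the makespan equals $\Opt{t}$. The bound then reads $\Opt{t}\cdot(1+0+2)\ge\Opt{t}$, so it holds trivially. The same remark covers instances with $\Opt{t}=0$: by the discussion preceding Observation~\ref{observation:opt-bounds}, the mechanism achieves makespan $0$ almost surely in that case, so both sides vanish.

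There is no real obstacle in this argument. The only points requiring care are that $q$ must be strictly positive, which is why $n=1$ is treated on its own, and that the exponent cancels exactly to $2\ln n$.
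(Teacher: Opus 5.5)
Your proposal is correct and matches the paper's argument: substitute $q = 2\sqrt{n\ln n}$ into Theorem~\ref{thm:upper-bound}, so that $q^2/(2n) = 2\ln n$ and the tail term $2n^2 e^{-q^2/(2n)}$ becomes exactly $2$. Your separate handling of $n=1$, where $q=0$ is not admissible, is a small piece of extra care that the paper does not spell out. The remark on $\Opt{t}=0$ is harmless but redundant, since Theorem~\ref{thm:upper-bound} already covers every instance.
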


	\bibliography{main}
	
	\appendix
	\newcommand{\slicedoff}{chopped off}
\newcommand{\Slicedoff}{Chopped off}
\newcommand{\slicedoffx}{strictly chopped off}
\newcommand{\Description}[1]{\vspace{1em}#1}
\newcommand{\opt}{\ensuremath{\textsc{Opt}}}
\newcommand{\mech}{\ensuremath{\textsc{Mech}}}

\section{Adaptation of the Box Theorem of \cite{ChristodoulouKK26}}

\label{sec:box}

This section contains the precise statement and adapted proof of a generalized version of the \emph{Box Theorem} that appeared in \cite{ChristodoulouKK26}. The original Box Theorem has been used for lower bounding the approximability of unrelated scheduling by deterministic truthful mechanisms. The original proof exploits at several points that the arbitrary hypothetical deterministic weakly monotone allocation algorithm at hand has approximation ratio lower than $n$ (otherwise the lower bound proof is done). Since our result considers randomized mechanisms, we need to prove the Box Theorem for \emph{deterministic}  weakly monotone allocations having a \emph{fixed} arbitrarily high upper bound $K$ on the approximation ratio (cf. Section~\ref{sec:prelim}).
 
Although the necessary modifications in the proof are straightforward, given the length of the proof, we include here the adaptation of the Box Theorem for completeness. We provide here most parts (e.g., notation, intuition) of the proof from  \cite{ChristodoulouKK26} that we found necessary for consistency and readability independently of the original paper. Many parts of the original proof (e.g., the base case of the induction)  that exploit only \emph{bounded} approximation (vs. $n$-approximation) hold without change, and we omit these parts. Roughly, we recite those lemmas and proofs where \emph{some} occurrences of the parameter $n$ have to be replaced by the parameter $K.$ Apart from omissions, we change the original text as little as possible, for comparability with \cite{ChristodoulouKK26}.

\begin{figure}[h]
\newcommand{\Depth}{2}
\newcommand{\Height}{2}
\newcommand{\Width}{2}
\newcommand{\Fraction}{0.5}
\newcommand{\deltaX}{0.2}
\newcommand{\deltaY}{0.2}
\newcommand{\psiStar}{\psi^*}

  \centering

  \begin{tikzpicture}[scale=0.48]
    \draw[draw=blue!60!black, fill=blue!10, ultra thick] (0,0) -- (3,0) -- (3,3) -- (0,3) -- cycle;
    
    \draw[->, thick, black!70] (0,0) -- (4.5,0) node[anchor=north, text=black] {$t_{1}$};
    \draw[->, thick, black!70] (0,0) -- (0,4.5) node[anchor=east, text=black] {$t_{2}$};

    \draw[thick, dashed, red!80!black] (3-\deltaY,0)  -- (3-\deltaY,3-\deltaY) -- (0,3-\deltaY) ;
    \fill[red!80!black] (3-\deltaY,3-\deltaY) circle (3pt) node[anchor=north east] {$\mathbf{o}$};

    \draw[blue!80!black] (3,0) node[anchor=north] {$\psiStar_1$};
    \draw[blue!80!black] (0,3) node[anchor=east] {$\psiStar_2$};
    \draw (1.5,-1) node[anchor=north] {(a)};
  \end{tikzpicture}\hspace{1cm}
  \begin{tikzpicture}[scale=0.48]
    \draw[draw=blue!60!black, fill=blue!10, ultra thick] (0,0) -- (4,0) -- (4,1.58) -- (1.9,3.68) -- (0,3.68) -- cycle;
    
    \draw[->, thick, black!70] (0,0) -- (5.5,0) node[anchor=north, text=black] {$t_{1}$};
    \draw[->, thick, black!70] (0,0) -- (0,5.5) node[anchor=east, text=black] {$t_{2}$};

    \draw[thick, dashed, red!80!black] (4-\deltaY,0)  -- (4-\deltaY,3.68-\deltaY) -- (0,3.68-\deltaY) ;
    \fill[red!80!black] (4-\deltaY,3.68-\deltaY) circle (3pt) node[anchor=south west] {$\mathbf{o}$};

    \draw[blue!80!black] (4,0) node[anchor=north] {$\psiStar_1$};
    \draw[blue!80!black] (0,3.68) node[anchor=east] {$\psiStar_2$};
    \draw (2,-1) node[anchor=north] {(b)};
  \end{tikzpicture}

  \vspace{0.5cm}

  \begin{tikzpicture}[scale=0.8, line join=round, line cap=round]
    \coordinate (O) at (0,0,0);
    \coordinate (A) at (0,\Width,0);
    \coordinate (B) at (0,\Width,\Height);
    \coordinate (C) at (0,0,\Height);
    \coordinate (D) at (\Depth,0,0);
    \coordinate (E) at (\Depth,\Width,0);
    \coordinate (F) at (\Depth,\Width,\Height);
    \coordinate (G) at (\Depth,0,\Height);

    \coordinate (O1) at (0,0,0);
    \coordinate (A1) at (0,\Width-\deltaX,0);
    \coordinate (B1) at (0,\Width-\deltaX,\Height-\deltaX);
    \coordinate (C1) at (0,0,\Height-\deltaX);
    \coordinate (D1) at (\Depth-\deltaX,0,0);
    \coordinate (E1) at (\Depth-\deltaX,\Width-\deltaX,0);
    \coordinate (F1) at (\Depth-\deltaX,\Width-\deltaX,\Height-\deltaX);
    \coordinate (G1) at (\Depth-\deltaX,0,\Height-\deltaX);

    \draw[draw=blue!60!black, fill=blue!5] (O) -- (C) -- (G) -- (D) -- cycle;
    \draw[draw=blue!60!black, fill=blue!15] (O) -- (A) -- (E) -- (D) -- cycle;
    \draw[draw=blue!60!black, fill=blue!25] (O) -- (A) -- (B) -- (C) -- cycle;
    \draw[draw=blue!60!black, fill=blue!10, opacity=0.85] (D) -- (E) -- (F) -- (G) -- cycle;
    \draw[draw=blue!60!black, fill=blue!20, opacity=0.75] (C) -- (B) -- (F) -- (G) -- cycle;
    \draw[draw=blue!60!black, fill=blue!30, opacity=0.85] (A) -- (B) -- (F) -- (E) -- cycle;

    \draw[thick, dashed, red!80!black] (O1) -- (D1) -- (G1) -- (C1) -- cycle; %
    \draw[thick, dashed, red!80!black] (A1) -- (E1) -- (F1) -- (B1) -- cycle; %
    \draw[thick, dashed, red!80!black] (O1) -- (A1) (D1) -- (E1) (G1) -- (F1) (C1) -- (B1); %

    \coordinate (AA) at (0,1.4*\Width,0);
    \coordinate (CC) at (0,0,1.4*\Height);
    \coordinate (DD) at (1.4*\Depth,0,0);

    \draw[->, thick, black!70] (O) -- (AA);
    \draw[->, thick, black!70] (O) -- (CC);
    \draw[->, thick, black!70] (O) -- (DD);

    \draw (2.8,0,0) node[anchor=north] {$t_1$};
    \draw (0,2.8,0) node[anchor=west] {$t_3$};
    \draw (0,0,2.8) node[anchor=north] {$t_2$};

    \draw[blue!80!black] (2,0,0) node[anchor=north] {$\psiStar_1$};
    \draw[blue!80!black] (0,2.2,0) node[anchor=east] {$\psiStar_3$};
    \draw[blue!80!black] (0,0,2) node[anchor=south east] {$\psiStar_2$};

    \fill[red!80!black] (1.9,1.9,1.9) circle (2pt) node[anchor=north east] {$\mathbf{o}$};

    \draw (1.5,-0.5, 2) node[anchor=north] {(c)};
  \end{tikzpicture}\hspace{0.5cm}
  \begin{tikzpicture}[scale=0.8, line join=round, line cap=round]
    \coordinate (O) at (0,0,0);
    \coordinate (A) at (0,\Width,0);
    \coordinate (B) at (0,\Width,\Height);
    \coordinate (C) at (0,0,\Height);
    \coordinate (D) at (\Depth,0,0);
    \coordinate (E) at (\Depth,\Width,0);
    \coordinate (F) at (\Depth,\Width,\Height);
    \coordinate (G) at (\Depth,0,\Height);

    \coordinate (O1) at (0,0,0);
    \coordinate (A1) at (0,\Width-\deltaX,0);
    \coordinate (B1) at (0,\Width-\deltaX,\Height-\deltaX);
    \coordinate (C1) at (0,0,\Height-\deltaX);
    \coordinate (D1) at (\Depth-\deltaX,0,0);
    \coordinate (E1) at (\Depth-\deltaX,\Width-\deltaX,0);
    \coordinate (F1) at (\Depth-\deltaX,\Width-\deltaX,\Height-\deltaX);
    \coordinate (G1) at (\Depth-\deltaX,0,\Height-\deltaX);

    \coordinate (BF) at (\Fraction*\Depth,\Width,\Height);
    \coordinate (EF) at (\Depth,\Width,\Fraction*\Height);
    \coordinate (GF) at (\Depth,\Fraction*\Width,\Height);

    \draw[draw=blue!60!black, fill=blue!5] (O) -- (C) -- (G) -- (D) -- cycle;
    \draw[draw=blue!60!black, fill=blue!15] (O) -- (A) -- (E) -- (D) -- cycle;
    \draw[draw=blue!60!black, fill=blue!25] (O) -- (A) -- (B) -- (C) -- cycle;
    \draw[draw=blue!60!black, fill=blue!10, opacity=0.85] (D) -- (E) -- (EF) -- (GF) -- (G) -- cycle;
    \draw[draw=blue!60!black, fill=blue!20, opacity=0.75] (C) -- (B) -- (BF) -- (GF) -- (G) -- cycle;
    \draw[draw=blue!60!black, fill=blue!30, opacity=0.85] (A) -- (B) -- (BF) -- (EF) -- (E) -- cycle;
    
    \draw[draw=blue!80!black, fill=red!20, opacity=0.95] (BF) -- (EF) -- (GF) -- cycle;

    \draw[thick, dashed, red!80!black] (O1) -- (D1) -- (G1) -- (C1) -- cycle; %
    \draw[thick, dashed, red!80!black] (A1) -- (E1) -- (F1) -- (B1) -- cycle; %
    \draw[thick, dashed, red!80!black] (O1) -- (A1) (D1) -- (E1) (G1) -- (F1) (C1) -- (B1); %

    \coordinate (AA) at (0,1.4*\Width,0);
    \coordinate (CC) at (0,0,1.4*\Height);
    \coordinate (DD) at (1.4*\Depth,0,0);

    \draw[->, thick, black!70] (O) -- (AA);
    \draw[->, thick, black!70] (O) -- (CC);
    \draw[->, thick, black!70] (O) -- (DD);

    \draw (2.8,0,0) node[anchor=north] {$t_1$};
    \draw (0,2.8,0) node[anchor=west] {$t_3$};
    \draw (0,0,2.8) node[anchor=north] {$t_2$};

    \draw[blue!80!black] (2,0,0) node[anchor=north] {$\psiStar_1$};
    \draw[blue!80!black] (0,2.2,0) node[anchor=east] {$\psiStar_3$};
    \draw[blue!80!black] (0,0,2) node[anchor=south east] {$\psiStar_2$};

    \fill[red!80!black] (1.9,1.9,1.9) circle (2pt) node[anchor=north east] {$\mathbf{o}$};

    \draw (1.5,-0.5, 2) node[anchor=north] {(d)};

  \end{tikzpicture}

  \caption{\small \textbf{Box Definition.} Allocation regions where the root gets all tasks in a star network with $2$ leaves (a--b) and $3$ leaves (c--d). For root $i$ and leaves $j$, $t_j$ denotes the root's values, and $\psi_j^*=\psi_{j}(s_j)$ represents the boundary values. The dashed red lines correspond to the thresholds $\psiStar_j-\delta$. Cases (a) and (c) are valid boxes because the corner $\mathbf{o}$ lies inside the target allocation region. Conversely, cases (b) and (d) are not valid boxes since $\mathbf{o}$ falls outside this region.}
  \label{fig:box}
\end{figure}

The Box Theorem~\ref{thm:box-restated} states
that in all multi-stars with sufficiently high multiplicity  the probability that a random star is a \emph{box} tends to 1, as the multiplicity tends to infinity. Roughly, given a deterministic mechanism, and a trivial\footnote{that is, all tasks have a $0$ entry} multi-graph instance, a \emph{box} is a star $S$ for which
when we fix the values of its leaves, %
the allocation region $R_S$ of the
root for getting all the tasks in $S$ is rectangular (see
Figure~\ref{fig:box} for an illustration, Definition~\ref{def:box} for
the precise definition, and Section~\ref{sec:region-r_p} for a precise
definition and properties of the allocation region $R_S$).
We start this section by introducing the basic notation, formulating the box theorem and giving an outline of  the proof.
Most of the following definitions depend on the mechanism at hand, therefore it will be convenient to \emph{fix an arbitrary deterministic truthful mechanism  throughout the section}. Let $K\geq n+1$ be an arbitrary fixed upper bound on the approximation ratio of the mechanism.  %

\subsection{Statement of the Box Theorem}
 
We consider multi-stars with $n$ nodes. Since we are dealing with multi-stars and not general multi-graphs, it will be convenient to simplify the notation: 
the root is node 0 and the leaves are nodes $1,\ldots, n-1$. We name the edges $1,2,\ldots, m$, where $m=\ell(n-1)$. Edges correspond to tasks; two parallel edges are called \emph{siblings}. The set of tasks for leaf $i$ is denoted by $C_i$, and we use the term \emph{leaf} for both $i$ and $C_i$.  The multiplicity of every edge is $\ell$, i.e., $|C_i|=\ell$, for every $i\in [n-1]$.

An edge $j\in [m]$
between node 0 and leaf $i$ has two nonnegative values, %
which represent the processing time for the two players, $t_j$ for the root and $s_j$ for the leaf player. 
The set of values for all edges $T=(t,s)=(t_j, s_j)_{j\in [m]}$ is called an instance.
All the instances in this section satisfy $s_j\in [0,B)$ for some arbitrarily high value $B$, although in most of the argument $s_j$ takes values in $[0,1]$.\footnote{We need to fix $B$ and $K$ among others, to make sure that the mechanism allocates the edges to the dedicated players, corresponding to the two incident vertices, for example by setting the values of other players to at least $K\cdot B$ for these tasks.} In particular, the Box Theorem is about multi-stars with values  $s_j\in(\xi,1)$, where $\xi$ can be any strictly positive value, in particular we can set  $\xi=1-\varepsilon$ according to the notation of Section~\ref{sec:lb}. Naturally the argument applies to any interval of values not only to $(\xi, 1).$

For a given instance $T=(t,s)$, we denote the boundary function
of the root for edge $j\in [m]$ by
$\psi_j(s_j)$. Recall that the interpretation is that, having fixed the values of
all other edges, \emph{edge $j$ is given to the root if $t_j<
  \psi_j(s_j)$, and it is given to the leaf when
  $t_j>\psi_j(s_j)$}. Since the argument sometimes considers more than
one instance, it will be useful to extend the notation to
$\psi_j(t_{-j}, s_{-j}, s_j)$ to explicitly indicate the values of the
other edges. We also write it as $\psi_j[t_{-j},s_{-j}](s_j)$ when we
want to treat it as a function of $s_j$ only. Since the values of most
other edges can be inferred from the context, we only indicate the
values that have changed inside the optional part (the part inside the
square brackets). For example, for an instance $T=(t, s)$, which can
be inferred from the context, if we change $t_1$ from its current
value to $x$, the new boundary function of edge 2 is written as
$\psi_2[t_1=x](s_2)$ or simply $\psi_2[x](s_2)$.

Recall that the instances of the argument are selected randomly so that they satisfy the \emph{continuity requirement} almost surely (although their values do not have to be rational). %
We use the following fixed parameters.

\begin{definition}[Parameters] \label{def:parameters} The values of these parameters are assumed to be rational numbers.
  \begin{itemize}
  \item $\xi\in (0,1);$
  \item $\nu\in (0,\frac{\xi}{nK\cdot 4^n})$, a very small fixed strictly positive value; %
  \item $\nu'=\nu/4$.
  \end{itemize}
\end{definition}

\begin{definition}[Box]\label{def:box}
  For a fixed $\nu$, and instance $(t,s),$ a star $S$ with $k$ leaves is called a \emph{$\delta$-box or simply box}, if the mechanism allocates all edges to the root, when we set $t_j=\psi_j(s_j)-\delta$ for every leaf $j$ of $S$, where $\delta=4^{k}\nu$.
\end{definition}

The aim is to prove that a multi-star $(t,s)$ with $t_j=0$ and $s_j\in (\xi, 1)$, for all $j\in[m]$ has many boxes (in fact, almost all stars are boxes) with parameter $\delta=4^n\nu$. To apply induction, we need the definition of box for smaller stars.  Note that $\delta$ depends on the size of the star.

The following related instances are used many times in what follows. %

\begin{definition}[Instance $T_\nu(S)$]\label{def:nu}
  For a given instance $T=(t,s)$, let $T_\nu(S)=(t^\nu, s)$ be the instance that agrees with $T$ everywhere, except for tasks in a star $S$ with $k$ leaves for which $t_j = \psi_j(s_j)-4^k\nu$.  
\end{definition}

We can now state the Box Theorem.

\begin{theorem}[Box] \label{thm:box-restated} Fix a mechanism with approximation ratio less than $K.$ For every $\nu$ and $\xi$
 that satisfy the requirements of
  Definition~\ref{def:parameters}, in a multi-star instance $T=(t,\bar
  s)$ with values $t_j=0$, $\bar s_j\in [\xi,1],$ and multiplicity $\ell$ for each leaf,  that satisfies the continuity requirement, 
   the probability that a random star of $n-1$ leaves is not a $\delta$-box, is at most $\left(\frac{5K}{\nu}\right)^{n-1}\frac{1}{\sqrt{\ell}},$
  where $\delta=4^{n-1}\nu$.

\end{theorem}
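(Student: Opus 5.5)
I would prove the Box Theorem by induction on the number $k$ of leaves of the star, following the structure of the original argument in \cite{ChristodoulouKK26}. At each step I would track where the bound $n$ on the approximation ratio entered the original proof and replace it by $K$. The inductive statement is stronger than the theorem. For every $k\le n-1$, a random star $S$ on a random set of $k$ leaves of the multi-star (one edge per chosen leaf, chosen uniformly from the $\ell$ siblings) fails to be a $4^k\nu$-box with probability at most $(5K/\nu)^k/\sqrt{\ell}$. This should hold uniformly over all trivial instances satisfying the continuity requirement with $t_j=0$ outside $S$.

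The base case $k=1$ is immediate. By definition of the boundary function, setting $t_j=\psi_j(s_j)-4\nu$ gives the edge to the root, since $\psi_j(s_j)>s_j/K\geq \xi/K>4\nu$ by Lemma~\ref{obs:alphasNew} and the choice of $\nu$ in Definition~\ref{def:parameters}. So the failure probability is $0$.

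For the inductive step I would split $S$ into a star $S'$ with $k-1$ leaves and one additional edge $j$. By induction, $S'$ is a $4^{k-1}\nu$-box with high probability. Consider the instance $T_\nu(S')$, in which the root's values on $S'$ sit just inside the corner of the box. I would show two things. First, the boundary function of $j$ in $T_\nu(S')$ is still at least $\psi_j(s_j)$ minus a small loss. Second, lowering $t_j$ to $\psi_j(s_j)-4^k\nu$ keeps all of $S'$ with the root. This requires working with the allocation region $R_S$ and the affine-minimizer representation with constants $c_R$, as in the proof of Lemma~\ref{lem:BTrelation}.

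The key quantitative step is an averaging or "few bad siblings" argument. Changing $t$ on $S'$ from $0$ to the corner values can shift $\psi_j$ by more than $\nu'$ only for a small fraction of the $\ell$ siblings of $j$. The reason is that the total shift, summed over siblings, is bounded by a telescoping or potential argument. In that argument, each large shift costs a constant fraction of a quantity bounded by the approximation ratio, via Lemma~\ref{obs:alphasNew}(i), namely $\psi<K\cdot s\le K$. This is where $K$ replaces $n$. Sibling independence (Proposition~\ref{prop:indepParallel}, resting on the $2\times 2$ characterization) and the continuity requirement are needed to make the shifts well defined.

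Choosing the edge at random among $\ell$ siblings then gives a failure probability of $O(K/(\nu\sqrt{\ell}))$ per step, with the $\sqrt{\ell}$ coming from balancing the threshold against the count. Multiplying the accumulated error factors and taking a union bound over the $k$ steps gives the $(5K/\nu)^k/\sqrt{\ell}$ bound.

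The main obstacle is this per-step sibling-counting lemma. It must bound how many siblings can have their boundary significantly lowered when the root's values on other leaves move to the box corner. The original proof uses the $n$-approximation there to control the root's total load, and I expect the replacement by $K$ to be straightforward but delicate. I would first verify that every use is of the form "$\psi\le K\cdot s$", and that the constraint $\nu<\xi/(nK4^n)$ absorbs the $4^k$ blow-up across levels. The remaining parts, namely the geometric region lemmas and the base case, carry over unchanged.
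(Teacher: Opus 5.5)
\paragraph{Verdict.} There is a genuine gap, in the inductive step you flag as the main obstacle: the mechanism you sketch for it does not exist.

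\paragraph{The inductive step.} A potential or telescoping bound on the total shift of $\psi$ over the siblings of $j$ has nothing to telescope. Each sibling's boundary is only separately bounded by $\psi<Ks\le K$, so summing gives $K\ell$ and controls nothing. To bound the number of bad siblings you need a single input at which many of them are \emph{simultaneously} forced onto the leaf, and that happens only in one special structural case. Also, sibling independence (Proposition~\ref{prop:indepParallel}) holds only when all root values are $0$. At $T_\nu(S')$ siblings can be coupled, so this is a case to be handled, not a tool.

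The paper instead restricts to \emph{chopped off boxes}, which have two properties. First, all $(k-1)$-substars are boxes, so the only possible obstruction is a full bundling facet $\sum_{i\in[k]}t_{p_i}\le c$. Second, $P_{-k}$ stays a box when $\bar s_{p_k}$ is lowered by $q\nu/(4K)$, $q=1,\dots,4K/\nu$. The union bound over these $k+4K/\nu$ events is where the factor $5K/\nu-1$ comes from, not accumulated per-step errors.

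For a strictly chopped off $P$ and a sibling $p_k'$, the $2\times 2$ characterization classifies the $(p_k,p_k')$-slice at points between $T_\nu$ and $T_{\nu'}$ into three cases.
\begin{itemize}
\item If it is crossing, $(P_{-k},p_k')$ is a box (Lemma~\ref{lem:crossing}).
\item If $\psi_{p_k}$ is truncated-linear in $s_{p_k}$, lowering $s_{p_k}$ slides the bundling facet rigidly onto $t_{p_k}=0$; this uses $0\le\gamma<\lambda<2K$. That makes $P_{-k}$ a non-box at some grid value $s^*_{p_k}$, contradicting the strengthened hypothesis (Lemma~\ref{lemma:linearBoundary}).
\item If it is half-bundling at $p_k$, all such siblings go to the \emph{leaf} at one input, so fewer than $K/\xi$ of them exist (Lemma~\ref{lem:noHalfBundling}). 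This is a bound on the leaf's load, not the root's.
\end{itemize}
These three cases, together with $\psi<Ks$, $\lambda<2K$ and the grid $\nu/(4K)$, are where $K$ replaces $n$. They yield the per-level additive term $2nK/(\xi\ell)$.

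\paragraph{The base case.} Starting at $k=1$ does not work either. For $k=2$ the facet-sliding contradiction is vacuous, because a chopped one-dimensional interval is still an interval. The paper therefore proves separately that every sub-multistar with two leaves and two edges per leaf contains a box. It converts this into the global bound $b_2\le 2/\sqrt{\ell}$, which is the actual source of the $1/\sqrt{\ell}$, rather than any threshold-versus-count balancing.

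This base-case structure gives no control for a fixed $S'$. It is consistent with one edge of the first leaf forming a non-box with every edge of the second leaf. So the per-$S'$ \emph{few bad siblings} statement your induction relies on is not supplied by any of the available tools at $k=2$.
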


\subsubsection{Intuition.} 
The proof of the Box Theorem is by induction on the number of leaves $k$. The case of one leaf is trivial --- all stars are boxes. However, we don't use this as the base case of the induction, because proving the theorem for $k=2$ requires different handling than the general case.%

Roughly speaking the inductive step goes as follows: suppose that we have established that most stars of multi-stars with $k-1$ leaves are boxes. Then we can find a star $S$ of $k$ leaves --- actually many such stars --- in which all its sub-stars of $k-1$ leaves are boxes. Now $S$ may be a box itself,  or a box with a single corner cut off by a diagonal cut (see Figure~\ref{fig:box}). Let's call this shape ``\slicedoff\ box''.

The core of the argument relies on the characterization of $2\times 2$ (2 players, 2 tasks) mechanisms (see Section~\ref{sec:facts-about-two}). This characterization is crucial for establishing that either the \slicedoff\ box is actually a box, or that we can obtain a box by replacing one of the tasks in $S$ with a sibling task (i.e., a task from the same leaf). To show this, we select a task $p$ from the star $S$ and a sibling $p'$ of $p$, and consider the $(p,p')$-slice mechanism, that is, the mechanism for these two tasks when the values of all other tasks are fixed. By the characterization of $2\times 2$ mechanisms, the slice mechanism is either a (relaxed) affine minimizer or a (relaxed) task-independent mechanism.

If the slice mechanism is task-independent the proof is relatively straightforward. The fact that tasks $p$ and $p'$ are independent implies by geometry of the allocation region that if we replace task $p$ by $p',$ %
we obtain a \emph{box} that includes $p'$ and the remaining $k-1$ tasks. 
The only real complication for this case would arise at discontinuity points, but we have excluded them by the continuity requirement.

The other case is when the $(p,p')$-slice mechanism is an affine minimizer and the allocation boundaries are linear functions. The key idea is to exploit this linearity. We keep only the linearity property for $p$ and completely ignore task $p'$ and focus on the original box. Linearity is used to show that the allocation boundaries move rectilinearly, when we change the leaf value of task $p$. In other words, the whole upper envelope of the \slicedoff\ box moves rectilinearly. But then if it is moved sufficiently close to the side, the \slicedoff\ piece will reach the boundary and create a \slicedoff\ box on it (see Figure 5 in \cite{ChristodoulouKK26}). This would contradict the inductive hypothesis that the sub-stars are boxes or equivalently that the sides are lower dimensional boxes, except for the fact that we have changed the leaf value of task $p$. To actually reach a contradiction, we use a stronger inductive hypothesis in which the side (sub-star) is a box for many values of the leaf (see the definition of \slicedoff\ box, below).

Notice however that the above argument fails for $k=2$, because a \slicedoff\ 1-dimensional interval is still an interval, that is, a box. Thus the case of $k=2$ needs to be handled differently. For this, we consider a star with two leaves and two edges per leaf and show that at least one of its four stars is a box. This is done by an argument similar to the argument for the inductive step, only that linearity is exploited for moving boundaries in two distinct directions. The fact that a multi-star with two leaves and two edges per leaf contains a box, can be used %
 to show that almost all stars are boxes.

The rest of this section makes the above outline precise. In particular, in Subsection~\ref{sec:def-notation} we  provide further definitions,  and useful facts and lemmas. In Section~\ref{sec:box_main_argument} we present the main argument. In Section~\ref{sec:inductionbase} we provide the proof of the base case, in Section~\ref{sec:induction_step} the proof of the induction step, and finally in Section~\ref{sub:existence} we wrap up to show the existence of a box.

\subsection{Further notation and preliminaries.} 
\label{sec:def-notation}

Usually we fix the values of most tasks and consider the allocation of the mechanism on the remaining tasks. In particular, we do this when we employ the characterization of $2\times 2$ mechanisms. The next definition formalizes this.

\begin{definition}[Slice and slice mechanism] \label{def:slice} Fix an instance $T$ and two tasks $p$ and $p'$. The set of instances that agree with $T$ on all tasks except for the tasks $p$ and $p'$ is called a \emph{$(p,p')$ slice}. The allocation function of these two tasks by the mechanism is called the $(p,p')$ \emph{slice mechanism} for the given values of the other tasks. %
  Similarly, we define slice mechanisms for larger sets of tasks.
\end{definition}

\begin{definition}[Trivial leaf]
  A leaf $i$ with set of edges $C_i$ is called \emph{trivial} for a given instance $T$ if $t_j=0$ for every task $j\in C_i.$ We also call a task $j$ \emph{trivial} if $t_j=0.$
\end{definition}

We now provide the definition of the main type of instances that we consider throughout.

\begin{definition}[Standard instance] \label{def:standard} %
  For a given truthful mechanism, an instance $T=(t,\bar s)$ is a \emph{standard instance for a set of leaves} $\cal C$ if the following conditions hold
  \begin{itemize}
  \item $t_j=0$, for every $j\in [m]$
  \item $\bar s_j\in (\xi,1)$, for every task $j$ of these leaves, i.e., $j\in \cup_{i\in \cal C}C_i$,
  \item $T$ satisfies the continuity requirement. 
  \end{itemize}
  The leaves in $\cal C$ are then called \emph{standard leaves} for the instance $T.$ 
\end{definition}

Henceforth the notation $\bar s$ will denote some fixed standard instance (clear from the context). We reserve the notation $s$ for the variable, which can take any value. Note that instance $T=(t,\bar s)$ in the statement of Theorem~\ref{thm:box-restated} is standard for the set of all leaves.

The central part of the argument is an induction on the number of
leaves $k$. In the induction step from $(k-1)$ to $k$, the remaining
leaves are trivial with fixed values, and therefore play no role; in
particular they do not affect the approximation ratio. %

\begin{definition}[critical values $\alpha_j$ for singletons]\label{def:alphai}
  For standard instances $T=(t, \bar s)$, we will use the shorthand $\alpha_j$ for $\psi_j(\bar s_j),$ if $\bar s$ is fixed and clear from context.
\end{definition}

\begin{lemma} \label{obs:alphas} Let $T=(t,s)$ be an instance so that all leaves are trivial. Assuming that the approximation factor is less than $K$,  for every task $j\in [m]$  it holds that 
\begin{enumerate}
\item[(i)] $s_j/K<\psi_j(s_j)<K s_j;$%
\item[(ii)] $\lim_{s_j\rightarrow 0}\psi_j(s_j)=\psi_j(0)=0;$ 
\end{enumerate}
\end{lemma}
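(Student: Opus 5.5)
The plan is to derive both parts from the approximation bound $K$, applied to a few carefully chosen instances, together with the definition of the boundary function $\psi_j$. We fix task $j$ on leaf $i$, where all leaves are trivial, so every task other than $j$ has a zero entry. The other players have cost at least $K\cdot B$ on $j$, so $j$ always goes to the root or to leaf $i$.

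For (i), consider the instance in which the root's value on $j$ is $t_j = K s_j$ (or slightly more). The optimal allocation gives $j$ to leaf $i$ and every other task to a player with zero cost, so $\Opt{} \le s_j$. If the mechanism gave $j$ to the root, the makespan would be at least $K s_j$, giving ratio at least $K$, which contradicts the assumption. Hence the leaf receives $j$ for every $t_j \ge K s_j$, and so $\psi_j(s_j) \le K s_j$.

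Symmetrically, take $t_j = s_j/K$ (or slightly less). Now $\Opt{} \le s_j/K$. If the leaf received $j$, its load would be at least $s_j$, again giving ratio at least $K$. So the root receives $j$ for all $t_j \le s_j/K$, and $\psi_j(s_j) \ge s_j/K$.

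To make both inequalities strict, I will use the strict inequality "ratio less than $K$" together with a small slack. Suppose $\psi_j(s_j) = K s_j$ exactly. Then for $t_j$ slightly below $K s_j$ the root gets $j$, and the ratio $t_j/s_j$ can be made arbitrarily close to $K$. That by itself is not a contradiction. Instead, I will use that the set of ratios is bounded away from $K$: a fixed deterministic mechanism has $\sup \rho < K$. I expect this is the intended reading of "approximation factor less than $K$", namely that there is some $K' < K$ bounding every ratio. Running the argument with $K'$ then gives the strict bounds.

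The case $s_j = 0$ needs separate care. Here $\Opt{} = 0$, so the mechanism must give $j$ to the leaf for every $t_j > 0$, otherwise the ratio is infinite. This gives $\psi_j(0) = 0$, which is the second equality in (ii).

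For the limit in (ii), I will squeeze with (i): $0 \le \psi_j(s_j) < K s_j \to 0$ as $s_j \to 0$.

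The only delicate step is the strictness in (i) and the tie-breaking at the boundary; everything else is routine. These are exactly the facts of Lemma~\ref{obs:alphasNew} restricted to a star.
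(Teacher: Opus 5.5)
Your proposal is correct and follows essentially the same route as the paper. You place the root's value just past the boundary, compare the resulting makespan with an optimum of at most $s_j$ (respectively at most $t_j$), and obtain strictness from the approximation factor being a supremum strictly below $K$; your auxiliary bound $K'<K$ is equivalent to the paper's $\eta\to 0$ limit. Your separate treatment of $s_j=0$ via a zero optimum is slightly more careful than the paper's remark that (ii) follows from (i), since (i) yields only the limit and not $\psi_j(0)=0$ itself.
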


\begin{proof}

 If for some $x$, $\psi_{j}(x)\geq K\cdot x$, by setting the values of task $j$ for nodes $i$ and $0$ to $x$ and $\psi_{j}(x)-\eta x$, respectively, for some $\eta>0$, the approximation ratio is at least $K-\eta$. The proposition follows by letting $\eta\rightarrow 0$. The proof of the lower bound is analogous. 
 Finally, the statement (ii) follows from (i).
\end{proof}

\subsubsection{Region $R_P$}
\label{sec:region-r_p}

The next definition of region %
$R_P$ for a given mechanism and given instance $T$ concerns a $k$-dimensional slice defined by a task set $P.\,$ The other tasks $[m]\setminus P$ have fixed values and for simplicity they are not treated in the definition. In most cases when $R_P$ is considered, $P$ will be a star, or a pair of siblings, and  the tasks $[m]\setminus P$ will be trivial.

\begin{definition}[region $R_P$ and $R_{\emptyset|P}$]\label{def:regionR} Let $T=(t,s)$ be a given instance and $P=\{p_1,\ldots,p_k\}$ be a set of tasks. The allocation region  $R^T_P\subset \mathbb R^k_{\geq 0}$ consists of all vectors $t'_P$ such that for input $T'=((t'_P, t_{-P}), s),$ all tasks of $P$ are allocated to the root. Similarly, $R^T_{\emptyset|P}$ consists of all the $t'_P$ so that for $((t'_P, t_{-P}), s),$ all tasks of $P$ are allocated to the leaves. $T$ is omitted from the notation when it is clear from the context.
\end{definition}

It follows directly by the Weak Monotonicity property that $R_P $ is a
(possibly degenerate) $k$-dimensional polyhedron defined by linear
constraints of the form $\sum_{i\in I} t_{p_i}\leq c_I,$ for some
$I\subset [k]$, and some $c_I\in \mathbb R_+$. In particular, for
$k=2$, $R_P$ is either a rectangle or a rectangle from which we cut
off a piece by a $-45^o$ cut. For higher dimensions, it is a
hyperrectangle (orthotope) with pieces cut off by specific
hypercuts. See Figure~\ref{fig:box} for examples of $R_P$ for $k=2$
and $k=3$ and see also~\cite{Vidali2009} for a geometric interpretation of
truthfulness. Note that $R_P\subseteq \times_{i=1}^k[0,\alpha_{p_i}]$,
where $\alpha_{p_i}=\psi_{p_i}(s_{p_i})$.

Two particular shapes of $R_P$ play significant role in the argument,
boxes (complete orthotopes), and \slicedoff\ boxes, which are boxes
with a single corner removed by a diagonal cut of the form $\sum_{i
  \in [k]} t_{p_i} = c_{[k]}$. Strictly speaking, in the definition of
box and \slicedoff\ box, we allow thin pieces of width at most
$\delta$ to be missing from its faces.

More generally, when the facet $\sum_{i\in [k]} t_{p_i}=c_{[k]}$ of
$R_P$ exists, we call it \emph{bundling facet}. When the bundling
boundary facet exists, it separates the regions $R_P$ and
$R_{\emptyset|P}$.

The following two lemmas will be useful later.

\begin{lemma}\label{obs:Rnonempty}(\cite{ChristodoulouKK26} Lemma 32.) Let $T=(t,s)$ be an instance with
  $t_j=0$ for all $j$, and assume that for some set of tasks
  $P=\{p_1,p_2,\ldots, p_k\}$ the region $R_P$ has a full dimensional
  bundling facet that, in particular, contains a point with strictly positive
  coordinates. Then $\psi_{p_j}(s_{p_j})>0$ and $s_{p_j}>0$ for every
  task $p_j\in P$.
\end{lemma}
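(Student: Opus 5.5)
The plan is to use the affine-minimizer description of the root's allocation on $P$, exactly as in the proof of Lemma~\ref{lem:BTrelation}. We fix all tasks outside $P$ and all leaf values. By weak monotonicity there are constants $c_R$, one for each $R\subseteq P$, such that the root receives a set $R$ minimizing $\sum_{p\in R} t'_p + c_R$. We normalize $c_P=0$. Then $R_P=\{t'_P : \sum_{p\in P\setminus R} t'_p \le c_R \ \forall R\}$. The bundling facet is the face where the constraint for $R=\emptyset$, namely $\sum_{p\in P} t'_p = c_\emptyset$, is tight. There, allocating $P$ and allocating $\emptyset$ to the root tie, and these two options separate $R_P$ from $R_{\emptyset|P}$.

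Let $x$ be the strictly positive point on the full-dimensional bundling facet. I would argue coordinate by coordinate; fix $p_j\in P$.

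\textbf{Step 1: $\psi_{p_j}(s_{p_j})>0$.} Since $x\in R_P$, we have $x_{p_j}>0$, and the root receives $p_j$ on an open neighbourhood of $x$ intersected with $R_P$. Full dimensionality lets us perturb to points just inside $R_P$. From such a point, lower all other coordinates $t'_p$, $p\neq p_j$, to $0$. By monotonicity (the Nisan–Ronen lemma, decreasing values on tasks the root already holds), the root still gets $p_j$ at a value $t'_{p_j}\approx x_{p_j}>0$, with the other tasks of $P$ at their original value $0$. That instance is exactly $T$ with only $t_{p_j}$ changed, so the threshold satisfies $\psi_{p_j}(s_{p_j})\ge x_{p_j}>0$.

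\textbf{Step 2: $s_{p_j}>0$.} Suppose instead $s_{p_j}=0$. Lemma~\ref{obs:alphas}(ii) then gives $\psi_{p_j}(0)=0$, but that lemma assumes all leaves trivial, which may not be our setting. So I would argue directly from the approximation bound. With $s_{p_j}=0$ and $t_{p_j}>0$, every other task having a zero entry, an optimal allocation has makespan $0$, while giving $p_j$ to the root has positive makespan. Finite approximation therefore forbids this, so $\psi_{p_j}(0)=0$. This contradicts Step 1, provided Step 1's bound was established under the same $s$.

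\textbf{Main obstacle.} The delicate point is Step 1's passage from "on the facet" to "root strictly gets $p_j$". On the facet there is a tie with the empty set. I would handle this by moving slightly into the interior along $-\mathbf{1}$, using full dimensionality to guarantee that interior points of $R_P$ near $x$ still have all coordinates positive. I would also double-check that reducing the other coordinates to $0$ stays within $R_P$: $R_P$ is down-closed, since its constraints have nonnegative coefficients.
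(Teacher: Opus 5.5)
Your argument is correct and is essentially the intended one (the paper only cites \cite{ChristodoulouKK26} for this lemma): a strictly positive point in the closure of $R_P$, together with down-closedness (equivalently $R_P\subseteq\times_{i}[0,\psi_{p_i}(s_{p_i})]$, as noted in the text), gives $\psi_{p_j}(s_{p_j})>0$; bounded approximation forces $\psi_{p_j}(0)=0$, hence $s_{p_j}>0$. Your hesitation in Step 2 is unnecessary: in this paper a leaf is \emph{trivial} when the root's values on its tasks are $0$, so with $t_j=0$ for all $j$ Lemma~\ref{obs:alphas}(ii) applies directly, and your direct approximation argument is just that lemma's proof.
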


\begin{lemma}\label{prop:lipschitz}(\cite{ChristodoulouKK26} Lemma 33.)
  For every truthful mechanism, and arbitrary $T=(t,s),$ the boundary
  function $\psi_r(t_{-r},s)$ is 1-Lipschitz in $t_{-r}$, i.e.,
  $|\psi_r(t_{-r},s)-\psi_r(t_{-r}',s)|\leq |t_{-r}-t_{-r}'|_1$.
  \end{lemma}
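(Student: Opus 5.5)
We prove Lemma~\ref{prop:lipschitz} by reducing it to a single coordinate and using weak monotonicity of the whole allocation, not just the one-task restriction. By the triangle inequality along a coordinate path it suffices to show that changing a single other value $t_q$, $q\neq r$, by $\eta$ changes $\psi_r$ by at most $\eta$. Here $t_q$ is a root value of some other edge, and all leaf values $s$ are kept fixed. Summing over coordinates then gives the $\ell_1$ bound.

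So fix $t_{-r,q}$ and $s$, and let $t_q' = t_q+\eta$ with $\eta>0$. Write $a=\psi_r(t_q)$ and $a'=\psi_r(t_q')$. We need $|a-a'|\le \eta$, which we prove in two directions.

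\textbf{First, $a'\ge a-\eta$.} Suppose instead $a' < a-\eta$. Pick $x$ with $a'<x<a-\eta$. Consider two root bid vectors that differ only in coordinates $r$ and $q$:
\begin{itemize}
\item $u=(t_r = x+\eta,\ t_q)$. Here $x+\eta<a$, so the root receives $r$.
\item $u'=(t_r = x,\ t_q')$. Here $x>a'$, so the root does not receive $r$.
\end{itemize}
The other players' bids are identical in the two profiles, so weak monotonicity applies to the root (player $0$):
\[
\sum_j (a_{0j}(u')-a_{0j}(u))(u'_j-u_j)\le 0 .
\]
Only coordinates $r$ and $q$ contribute.
\begin{itemize}
\item Coordinate $r$: $a_{0r}(u')-a_{0r}(u)=-1$ and $u'_r-u_r=-\eta$, so the term is $+\eta$.
\item Coordinate $q$: the term is $(a_{0q}(u')-a_{0q}(u))\,\eta\ge -\eta$.
\end{itemize}
This gives no contradiction on its own, so we need some slack. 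We keep the margins strict instead: choose $x$ strictly inside the interval and shift $t_r$ in $u$ to $x+\eta'$ with $\eta<\eta'<a-x$. Then the $r$-term becomes $\eta'$, while the $q$-term is still at least $-\eta$. The sum is at least $\eta'-\eta>0$, contradicting WMON. Hence $a'\ge a-\eta$.

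\textbf{Second, $a'\le a+\eta$.} This is symmetric. If $a'>a+\eta$, take $x$ with $a+\eta<x<a'$, compare $(t_r=x,\ t_q)$, where the root does not get $r$, with $(t_r=x-\eta',\ t_q')$, where it does, and obtain the same contradiction.

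The main subtlety is handling ties and discontinuities at boundary points. Choosing $x$ strictly between the relevant values, and the intermediate $\eta'$, avoids any dependence on tie-breaking. For general real $t_{-r}$ differences we iterate over coordinates. Changes of $t_q$ in either direction are covered because the argument is symmetric in the roles of $t_q$ and $t_q'$.
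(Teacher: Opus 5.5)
The paper does not reprove this lemma (it is quoted from \cite{ChristodoulouKK26}), so your argument has to stand on its own. Your first direction is correct and is the standard weak-monotonicity argument. Under $t_q$ the root receives $r$ at the higher bid $x+\eta'$, while under $t_q'$ it loses $r$ at the lower bid $x$. So the $r$-term of the WMON sum is $+\eta'$, and the $q$-term, which is at least $-\eta$, cannot cancel it.

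\textbf{The explicitly written second direction does not give a contradiction.} In your pair, the root does not get $r$ at bid $x$ (under $t_q$) and does get it at the lower bid $x-\eta'$ (under $t_q'$). That is the monotone pattern. The $r$-term is $(1-0)\cdot(-\eta')=-\eta'$, the whole sum is at most $\eta-\eta'<0$, and WMON holds.

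The higher bid must be attached to the profile in which the root wins $r$, i.e.\ the one with $t_q'$. Take $\eta<\eta'<x-a$ and compare two profiles:
\begin{itemize}
\item $(t_r=x-\eta',\,t_q)$: here $x-\eta'>a$, so the root does not get $r$;
\item $(t_r=x,\,t_q')$: here $x<a'$, so the root does get $r$.
\end{itemize}
Now the $r$-term is $+\eta'$ and the contradiction follows as in the first part.

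Equivalently, and this is what your final sentence gestures at, no separate construction is needed. The first part used only $|t_q'-t_q|=\eta$ when bounding the $q$-term by $-\eta$. So it proves $\psi_r(t_q')\ge\psi_r(t_q)-|t_q'-t_q|$ for arbitrary $t_q,t_q'$, and exchanging their roles gives the reverse inequality. With this repair the proof is complete.

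Two minor remarks. First, the coordinate-by-coordinate path is unnecessary: one application of WMON to $(t_r,t_{-r})$ and $(t_r',t'_{-r})$ bounds all non-$r$ terms at once by $-|t_{-r}-t'_{-r}|_1$. Second, an equally short route uses the representation from the proof of Lemma~\ref{lem:BTrelation}, where the root's bundle minimizes $\sum_{j\in R}t_j+c_R$. This gives
$\psi_r(t_{-r})=\min_{R\not\ni r}\bigl(\sum_{j\in R}t_j+c_R\bigr)-\min_{R\ni r}\bigl(\sum_{j\in R\setminus\{r\}}t_j+c_R\bigr)$.
Both minima are nondecreasing in each coordinate of $t_{-r}$ and increase by at most $\delta$ when one coordinate increases by $\delta$. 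Hence their difference changes by at most $\delta$.
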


\subsubsection{Facts about two tasks.} \label{sec:facts-about-two}

In this section we summarize known concepts and results for two tasks, i.e., two edges that can belong to the same leaf (sibling tasks) or to two different leaves (star of two tasks). Here we assume that all other tasks are fixed, and omit them from the notation. So let simply $t=(t_1, t_2)$ and $s=(s_1,s_2)$.
First we consider allocation regions for the root depending on his own
bids $(t_1,t_2)$ for fixed values $s=(s_1,s_2)$. Here the $s_1$ and $s_2$ do
not necessarily belong to the same leaf. It is known that in the case
of two tasks, for a fixed $s$ the four possible allocation regions of
the root in a weakly monotone allocation subdivide $\mathbb R_{\geq
  0}^2$ basically in three possible ways summarized in the following definition (see
Figure~6 in \cite{ChristodoulouKK26}).

\begin{definition}\label{def:crossing}
  For given $(s_1,s_2)$ we call the allocation for the root 
  \begin{itemize}
  \item \emph{quasi-bundling}, if there are at least two points $t\neq t'$ on
    the boundary of $R_{\{1,2\}}$ and $R_{\emptyset|\{1,2\}};$ 
  \item \emph{quasi-flipping}, if $R_{\{1,2\}}$ and $R_{\emptyset}$ have no common boundary point;  
  \item \emph{crossing}, otherwise (that is, if there is a unique common boundary point).
  \end{itemize}
  We sometimes refer collectively to both quasi-bundling and quasi-flipping allocations as \emph{non-crossing}.
  \end{definition}

  In case the allocation is quasi-bundling, the boundary between $R_{\{1,2\}}$ and $R_{\emptyset|\{1,2\}}$ is a bundling boundary.  In general the facets defining $R_{\{1,2\}}$ in $\mathbb R^2_{\geq 0}$ are of the possible forms $t_1=c_1,\, t_{2}=c_2,\,$ and $t_1+t_{2}= c.$ A nontrivial facet $t_1+t_{2}= c$ exists if and only if the allocation is quasi-bundling.

  We need to distinguish and treat separately a special sort of
  quasi-bundling allocation, because in this case the boundary
  function $\psi_1$ can become nonlinear function of $s_1$, even in affine
  minimizers  (more precisely, in their generalizations called \emph{relaxed} affine minimizers, see Theorem~\ref{theo:addchar} and \cite{ChristodoulouKoutsoupiasKovacs2020Submodular}).
  If the facet $t_1+t_2= c$ exists but the facet $t_1=c_1$ is missing, then we will call the allocation of the tasks
  half-bundling, as defined next (see
Figure~6.d in \cite{ChristodoulouKK26}).

  \begin{definition}\label{def:half-bundling} The allocation of two tasks $(1,2)$ (siblings or not) is called \emph{half-bundling at task $1$,} if the region $R_{\{1,2\}}$ is nonempty and is defined by at most two facets $t_{2}=c_2$ and $t_1+t_{2}=c,$ but no nontrivial facet $t_1=c_1$ exists i.e., $c_1\geq c$.
    The allocation will be called \emph{fully bundling,} if it is half-bundling at both task $1$ and task $2$ i.e. both $c_1,c_2\geq c$.
\end{definition}

Finally, we include here the characterization result for $2$ machines
and $2$ tasks ($2\times 2$ case).   That is, from here on
we consider two sibling tasks: the input $(t_1,t_2)$ is of the root,
and $(s_1,s_2)$ belong to the same leaf. The characterization assumes
that for fixed values $s$ and sufficiently high values of the root,
both tasks will be allocated to the leaf; otherwise the approximation
ratio is unbounded (even when there exist other players and tasks with
fixed values). Recall that in our instances $T=(t,s)$ the values for
the root can be arbitrarily high, but for the $s$ players they belong in
$[0,B)$. %
The theorem from \cite{ChristodoulouKoutsoupiasKovacs2020Submodular} that we use, characterizes mechanisms
with values exactly in this domain. For a description of these
  mechanisms see \cite{ChristodoulouKoutsoupiasKovacs2020Submodular}. For related characterization
  results see \cite{ChristodoulouKoutsoupiasVidali2008, DobzinskiSundararajan2008}. 

\begin{theorem}[Characterization of $2\times 2$ mechanisms~\cite{ChristodoulouKoutsoupiasKovacs2020Submodular}] \label{theo:addchar} For every $B\in \R_{>0}\cup\{\infty\}$, every weakly monotone allocation for two tasks and two players with values $t\in [0,\infty)\times [0,\infty)$ and $s\in[0,B)\times [0,B)$ --- such that for every $s$ there exists $t$ for which both tasks are allocated to the second player --- belongs to the following classes: (1) relaxed affine minimizers (including the special case of affine minimizers), (2) relaxed task-independent mechanisms (including the special case of task-independent mechanisms), (3) 1-dimensional mechanisms, (4) constant mechanisms.
\end{theorem}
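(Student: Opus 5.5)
This is the characterization theorem for $2\times 2$ weakly monotone mechanisms, which is imported from \cite{ChristodoulouKoutsoupiasKovacs2020Submodular}. My plan is to reconstruct its proof through the geometry of the root's allocation regions and the way they move as the leaf's values $s=(s_1,s_2)$ vary.

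\textbf{Step 1: fix $s$ and describe the partition.} By weak monotonicity, for each fixed $s$ the root's allocation is an affine-minimizer-type partition of $[0,\infty)^2$. There are constants $c_\emptyset(s),c_1(s),c_2(s),c_{12}(s)$, coming from the payments, such that the root receives the set $R$ minimizing $\sum_{j\in R}t_j+c_R(s)$. We normalize $c_\emptyset(s)=0$. The hypothesis that some $t$ gives both tasks to the leaf makes all these constants finite. The shape of the partition is then governed by the sign of
\[
\lambda(s)=c_1(s)+c_2(s)-c_{12}(s).
\]
If $\lambda>0$ the partition is quasi-bundling, if $\lambda<0$ it is quasi-flipping, and if $\lambda=0$ it is crossing. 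This matches Definition~\ref{def:crossing}.

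\textbf{Step 2: impose monotonicity of the leaf.} Applying weak monotonicity to the leaf player constrains how the functions $c_R(s)$ move with $s$. Since the leaf receives the complementary set, the boundaries $\psi_1,\psi_2$ and the bundling/flipping diagonal must be non-decreasing in the appropriate coordinates of $s$. Exploiting the symmetric role of the two players, I would show that the boundary of the leaf's region, as a function of $t$, has the same three-type structure.

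\textbf{Step 3: case split on the sign pattern of $\lambda(s)$ over $[0,B)^2$.}
\begin{itemize}
\item If $\lambda\equiv 0$, each task's threshold depends only on that task's own values, up to relaxed boundary effects. This gives the relaxed task-independent class.
\item If $\lambda\ne 0$ somewhere, a rigidity argument should show that $\lambda$ is constant where nonzero and that the thresholds are linear in $s$, of the form $\psi_j(s)=a\,s_j+\text{const}$. This gives relaxed affine minimizers.
\item If the allocation ignores one coordinate entirely, we get 1-dimensional mechanisms.
\item If it ignores $s$ entirely, we get constant mechanisms.
\end{itemize}

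The main obstacle is the rigidity in the $\lambda\ne0$ case, namely proving that the diagonal facet forces linear, translation-invariant dependence on $s$. I would first try the following. Take two leaf profiles $s,s'$ that differ in one coordinate. Chain the 2-cycle monotonicity inequalities along a cycle that crosses the diagonal facet. The payment differences should then be pinned down, which yields Cauchy-type functional equations for $c_R(s)$. Those equations are solved by affine functions, given monotonicity.

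Handling the domain boundary at $s_j\to B$ and the half-bundling degenerate cases is where the "relaxed" qualifiers arise. I expect these to need separate bookkeeping.
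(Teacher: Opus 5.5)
\textbf{Verdict: there is a genuine gap.} The paper does not prove this theorem; it imports it from \cite{ChristodoulouKoutsoupiasKovacs2020Submodular}, where it is obtained by a detailed analysis of the possible allocation figures. Your proposal therefore has to stand on its own. As written, the decisive steps are only announced, and the case split in Step~3 is wrong.

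\textbf{The case split.} Your branch ``$\lambda(s)\neq 0$ somewhere $\Rightarrow$ $\lambda$ constant where nonzero and thresholds affine in $s$'' is false for relaxed task-independent mechanisms. Take the task-independent rule in which the root gets task $j$ iff $t_j<\psi_j(s_j)$, with $\psi_j(x)=x$ for $x<1$ and $\psi_j(x)=x+1$ for $x\geq 1$. At the single leaf profile $s=(1,1)$, replace the root's figure by the quasi-bundling one with $c_\emptyset=0$, $c_1=c_2=-1$, $c_{12}=-3$. Every term $(s'_j-s_j)(a'_j-a_j)$ of the leaf's monotonicity inequality involving $s=(1,1)$ is non-positive. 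So this rule is weakly monotone with $\lambda(1,1)=1$, yet it is not an affine minimizer; this is exactly the exceptional set in Observation~\ref{obs:possibleFigures}(i). What actually has to be shown is a propagation statement: a non-crossing figure at a point where the thresholds are continuous forces non-crossing figures on a neighbourhood, and then on all of $[0,B)^2$. The slanted boundary may shorten only as $s$ gets small (Observation~\ref{obs:possibleFigures}(ii)).

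Conversely, $\lambda\equiv 0$ does not by itself give task-independence. Crossing at every $s$ only says that the root's threshold for task $1$ does not depend on $t_2$. Independence from $s_2$ needs an argument on the leaf's side, e.g.\ that the leaf's figure at $t$ is crossing for all but exceptional $t$. Only then is the region of $(t_1,s_1)$ where the root gets task $1$ independent of both $t_2$ and $s_2$. Also, the hypothesis only makes $c_\emptyset$ finite. The constants $c_1,c_2,c_{12}$ may be $+\infty$, e.g.\ in one-dimensional bundling mechanisms where the root gets both tasks or none, so your $\lambda$ is not always defined.

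\textbf{The rigidity step.} This is where the theorem lives, and you only say it ``should'' work. The functional-equation idea can be made to work, but the equalities come from reading each boundary piece from both players' sides. Your Step~2 gestures at the leaf's figure but never links its boundary pieces to the root's. The needed link is this: at continuity points, if $t$ lies on the boundary between bundles $R,R'$ in the root's figure at $s$, then $s$ lies on the boundary between the complements $\bar R,\bar R'$ in the leaf's figure at $t$. Apply it to three pieces:
\begin{enumerate}
\item The root's diagonal facet corresponds to a diagonal of the leaf's figure, giving $-c_{12}(s)=\Phi(s_1+s_2)$ locally.
\item The root's boundary $R_{\{2\}}|R_\emptyset$ corresponds to the leaf's horizontal boundary $\{1,2\}|\{1\}$, giving $-c_2(s)=f_2(s_2)$.
\item The root's boundary $R_{\{2\}}|R_{\{1,2\}}$ corresponds to the leaf's vertical boundary $\{1\}|\emptyset$, giving $c_2(s)-c_{12}(s)=g(s_1)$.
\end{enumerate}
Together these give the Pexider equation $\Phi(x+y)=g(x)+f_2(y)$ on an open set, whose monotone solutions are affine with a common slope.

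This argument, however, needs all four regions and these boundary pieces to be present and non-degenerate. In half-bundling or fully bundling configurations (Definition~\ref{def:half-bundling}), or when facets are cut off by $t_j=0$, the constants are not determined by the figure and $\psi_1$ need not be linear. Handling these configurations is exactly what the ``relaxed'' classes encode, and it is precisely the part your proposal defers to ``separate bookkeeping''.
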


We remark that all of these mechanisms are (special cases of) relaxed task-independent or relaxed affine minimizer mechanisms. We summarize the most relevant properties of these  mechanisms in terms of the possible allocation figures (see \cite{ChristodoulouKoutsoupiasKovacs2020Submodular}). Roughly speaking, when there exists a bundling (or flipping) boundary and three or four allocation regions, the boundary functions are affine.

\begin{observation}\label{obs:possibleFigures}
  The following properties hold for a $2\times 2$ truthful mechanism.
\begin{itemize}
\item[(i)] the allocation of a relaxed task-independent mechanism is crossing for every $(s_1,s_2),$ except for countably many points $(s_1',s_2'),$ where both $\psi_1 (s_1)$ has jump discontinuity in $s_1',$ \emph{and} $\psi_2 (s_2)$ has jump discontinuity in $s_2';$

\item[(ii)] the allocation of a (non task-independent) relaxed affine minimizer is non crossing for every $(s_1, s_2);$ %
either it is always quasi-bundling, or always quasi-flipping, with the same length of slanted boundary for every high enough $s$ --- as $s_1$ and/or $s_2$ gets smaller, part of the slanted boundary may disappear, thus it may get shorter (moreover, in degenerate relaxed affine minimizers, the possible length of the slanted boundary may be unbounded); %

\item[(iii)] the boundary function $\psi_1[t_{2},s_{2}](s_1)$ of a relaxed affine minimizer is a truncated linear function for every $(t_2, s_1, s_2),$ unless the allocation is fully bundling: %
\begin{align*}
\psi_1[t_2,s_2](s_1)=\max(0\,, \,\lambda(t_2,s_2)\, s_1 - \gamma(t_2,s_2)\,).
\end{align*}
      
Symmetric statements hold for $\psi_2.$       
\end{itemize}
\label{ref:obs-characterization}
\end{observation}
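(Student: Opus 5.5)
The plan is to derive all three items directly from the explicit descriptions of the mechanism classes in Theorem~\ref{theo:addchar}, as given in \cite{ChristodoulouKoutsoupiasKovacs2020Submodular}. First I reduce to the two relevant families. By the remark after Theorem~\ref{theo:addchar}, 1-dimensional and constant mechanisms are degenerate special cases of relaxed task-independent or relaxed affine minimizer mechanisms. So it suffices to analyse, for fixed $(s_1,s_2)$, the root's allocation regions $R_{\{1,2\}}$, $R_{\{1\}}$, $R_{\{2\}}$, $R_\emptyset$ in the $(t_1,t_2)$-plane for these two families. By weak monotonicity, $R_{\{1,2\}}$ is cut out by facets of the form $t_1=c_1$, $t_2=c_2$ and $t_1+t_2=c$ (Section~\ref{sec:region-r_p}). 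Definition~\ref{def:crossing} is then just a statement about whether the slanted facet $t_1+t_2=c$ exists with positive length (quasi-bundling), whether its mirror image separates $R_{\{1\}}$ from $R_{\{2\}}$ (quasi-flipping), or whether the four regions meet in a single point (crossing).

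For (i), I use that in a relaxed task-independent mechanism the critical value of task $1$ is a function $\psi_1(s_1)$ and that of task $2$ is $\psi_2(s_2)$; each depends only on its own task, except at ties. Whenever $\psi_1$ is continuous at $s_1$, or $\psi_2$ is continuous at $s_2$, the root gets task $1$ iff $t_1<\psi_1(s_1)$ and task $2$ iff $t_2<\psi_2(s_2)$. The four regions are then the four quadrants around $(\psi_1(s_1),\psi_2(s_2))$, which is a crossing allocation. The only freedom of the ``relaxed'' version is at points where both boundaries jump. Since $\psi_1,\psi_2$ are monotone, each has at most countably many jump points, so the exceptional pairs $(s_1',s_2')$ form a countable set.

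For (ii) and (iii), I write a relaxed affine minimizer as minimizing $\lambda_1 t_1 + \lambda_2 t_2 + \ldots$ against the leaf's scaled costs plus additive constants $\gamma_{\cdot}$, one per allocation, with truncation near small $s$. In the non-degenerate region, the offset between the bundled and split allocations is a constant independent of $s$. This gives a slanted facet of fixed length, bundling if the constant has one sign and flipping if it has the other, so the sign determines ``always quasi-bundling'' or ``always quasi-flipping''. When $s_1$ or $s_2$ is small, the truncation at $0$ (from Lemma~\ref{obs:alphas}(ii)) can shorten the facet, but it can never create a single crossing point. Computing the critical $t_1$ at which task $1$ switches, for fixed $(t_2,s_2)$, then gives an affine expression $\lambda s_1-\gamma$ whose coefficients depend on $(t_2,s_2)$ through which facet is active, and this is clipped at $0$. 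This is (iii). The only situation where the formula breaks is the fully bundling case: there $\psi_1$ is given by the bundling facet $t_1=c-t_2$, with $c$ a possibly nonlinear function of $s$.

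The main obstacle is the precise bookkeeping in (ii) for the ``relaxed'' and degenerate variants. Relaxation allows the constants to vary at the boundary of the $s$-domain, so I must argue that the slanted boundary can only get shorter as $s$ decreases, and can be unbounded only in the degenerate case. I would first try to handle this using the monotone structure of the truncation thresholds in \cite{ChristodoulouKoutsoupiasKovacs2020Submodular}, checking that the slanted facet is always the segment of $t_1+t_2=c$ inside a rectangle whose sides shrink with $s$.
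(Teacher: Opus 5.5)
Your route matches the paper's. The paper gives no proof of this observation; it states it as a summary of the classes in Theorem~\ref{theo:addchar}, with a pointer to \cite{ChristodoulouKoutsoupiasKovacs2020Submodular}. Reading (i)--(iii) off the class descriptions there is what you propose, and your (i) is fine as a reading of the definition of relaxed task-independent mechanisms.

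For (ii)--(iii), your computation is a valid check for genuine affine minimizers once it is set up correctly. The weights belong to the two players, not to the tasks. Distinct root weights $\lambda_1\neq\lambda_2$ on $t_1,t_2$ would tilt any slanted facet away from $t_1+t_2=c$, which weak monotonicity of the root forbids. After normalizing, the root receives the set $a\subseteq\{1,2\}$ minimizing $\sum_{j\in a}t_j+\lambda\sum_{j\notin a}s_j+\gamma_a$. The untruncated slanted facet then has length $|\gamma_{\{1\}}+\gamma_{\{2\}}-\gamma_{\{1,2\}}-\gamma_\emptyset|$ for every $s$, and its sign decides bundling versus flipping. Directly,
\begin{align*}
\psi_1[t_2,s_2](s_1)&=\max\bigl(0,\lambda s_1-\gamma(t_2,s_2)\bigr),\\
\gamma(t_2,s_2)&=\min(t_2+\gamma_{\{1,2\}},\,\lambda s_2+\gamma_{\{1\}})-\min(t_2+\gamma_{\{2\}},\,\lambda s_2+\gamma_\emptyset).
\end{align*}

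\textbf{Misapplied lemma.} Do not invoke Lemma~\ref{obs:alphas} here. It assumes approximation ratio below $K$ on instances whose leaves are all trivial. The observation is about arbitrary $2\times 2$ truthful mechanisms, for which $\gamma<0$, i.e.\ $\psi_1(0)>0$, is perfectly possible. The paper obtains $\gamma\ge 0$ only in the specific setting of Claim~(iii) of Lemma~\ref{lemma:linearBoundary}. The $\max(0,\cdot)$ is there simply because $t_1\ge 0$.

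\textbf{What your computation cannot reach.} The formula above is truncated linear for \emph{every} affine minimizer, fully bundling ones included. So ``affine minimizer plus truncation near small $s$'' is not what the relaxed classes add, and it cannot explain the exception in (iii). That exception exists because the relaxed and one-dimensional classes admit non-affine bundling thresholds. For instance, giving both tasks to the root iff $t_1+t_2<f(s_1+s_2)$ is weakly monotone for every nondecreasing $f$, and then $\psi_1[t_2,s_2](s_1)=\max(0,f(s_1+s_2)-t_2)$.

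Hence the claims that are genuinely about relaxation do not follow from your computation:
\begin{itemize}
\item non-affinity of $\psi_1$ is confined to fully bundling allocations;
\item the slanted boundary only shrinks as $s$ decreases and never degenerates into a crossing;
\item its length is unbounded only in degenerate cases.
\end{itemize}
Exactly as in the paper, these must be taken from the definitions in \cite{ChristodoulouKoutsoupiasKovacs2020Submodular}. That is the step your proposal leaves open.
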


The fact that relaxed task-independent mechanisms have discontinuities could create complications, but we avoid them by considering instances that satisfy the continuity requirement.

\subsection{Proof of the main Box Theorem~\ref{thm:box-restated}}
\label{sec:box_main_argument}

\begin{proof}[Proof of Theorem~\ref{thm:box-restated}]
  Fix some standard instance of $n-1$ leaves and sufficiently high multiplicity $\ell$. We show by induction on $k$ that in every subset of $k\in [n-1]$ leaves, \emph{a random star selected uniformly and independently is a box} with arbitrarily high probability.
More precisely, let $1-b_k$ be (a lower bound on) the probability that a random star of $k$ leaves is a box. We establish a recurrence on $b_k$ that shows that $b_{n-1}\rightarrow 0$ when $\ell \rightarrow \infty.$ %

Specifically, Theorem~\ref{cor:b2} establishes the base case of the induction ($k=2$), which shows that $b_2\leq 2/\sqrt{\ell}$. Lemma~\ref{lemma:recurrence} based on the proof of the inductive hypothesis establishes the recurrence $$b_k\leq \left (\frac{5K}{\nu}-1 \right )b_{k-1}+ \frac{2nK}{\xi \sqrt{\ell} },$$ for $k\geq 3$. It follows that for sufficiently large $\ell$, $b_{n-1}$ can be arbitrarily small (see~Corollary~\ref{cor:bound-on-b} for a more precise bound on $b_k$).
\end{proof}

\subsubsection{Preliminary observations}
\label{sec:preliminary}

The following theorem %
says that if the mechanism has bounded approximation ratio, and given that all other tasks are trivial, the allocation of the slice mechanism of two siblings must be crossing. %

\begin{theorem} \label{thm:sibling-independence} Let $T=(t,\bar s)$ be
  a standard instance. If the mechanism has approximation ratio at
  most $K$, the boundary function value $\psi_p[t_{-p},\bar s_{-p}](s_p)$ of
  a task $p$ is independent of the running time values of its
  siblings. %
\end{theorem}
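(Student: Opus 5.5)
Fix a task $p$ of some leaf $i$ and a sibling $p'\in C_i$. The plan is to freeze all other tasks at their values in $T$ and study the $(p,p')$-slice mechanism (Definition~\ref{def:slice}). This is a weakly monotone allocation for two players (root and leaf $i$) and two tasks (Lemma~\ref{lem:restricted-subset-WMON}). Since the other coordinates of the root are arbitrary and its value on any task can be made large, the hypothesis of Theorem~\ref{theo:addchar} holds: for every $s$ some $t$ gives both tasks to the leaf, as otherwise the ratio is unbounded. Hence the slice mechanism is a relaxed affine minimizer (possibly constant or 1-dimensional) or a relaxed task-independent mechanism. The goal is to show the first case is impossible when the ratio is below $K$, and that in the second case the boundary of $p$ does not move with $(t_{p'},s_{p'})$ at the relevant points.

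\textbf{Excluding non task-independent affine minimizers.} By Observation~\ref{obs:possibleFigures}(ii), such a mechanism is non-crossing for every $(s_p,s_{p'})$. It is either quasi-bundling with a slanted facet $t_p+t_{p'}=c$, or quasi-flipping with $R_{\{p,p'\}}$ and $R_\emptyset$ separated. Quasi-flipping means that near some region the root gets exactly one of the two tasks while the complementary assignment changes. I would exploit this with the scaling argument of Lemma~\ref{obs:alphas} (this is the cited Lemma~\ref{prop:quasi}). Take $s_p,s_{p'}$ tiny (or zero), keeping all other tasks trivial so $\Opt{}$ is driven only by this pair. A slanted boundary of length bounded away from zero as $s\to 0$ lets us choose root values of order the slanted length while $\Opt{}\approx\max(s_p,s_{p'})\to 0$. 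This forces an unbounded ratio and contradicts ratio $<K$. The delicate point is that in relaxed affine minimizers the slanted boundary may shrink as $s$ decreases. So I would instead pick $s_{p'}=0$, $t_{p'}=0$ and $s_p$ small and check via Lemma~\ref{obs:alphas}(i) that $\psi_p\in(s_p/K,Ks_p)$. I would then use the truncated-linear form (iii), $\psi_p=\max(0,\lambda s_p-\gamma)$, together with linear dependence on $t_{p'}$, to derive that $\psi_p$ shifts by $t_{p'}$ along the slanted facet. Choosing $t_{p'}\gg Ks_p$ while $s_{p'}=0$ then violates (i). This is the step I expect to be the main obstacle: making the degenerate cases (constant, 1-dimensional, fully bundling) each yield an explicit bad instance.

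\textbf{Task-independent case.} By Observation~\ref{obs:possibleFigures}(i), the allocation is crossing except at countably many $(s_p,s_{p'})$ where both boundary functions jump. In a crossing allocation, $R_{\{p,p'\}}$ is a rectangle $[0,\psi_p)\times[0,\psi_{p'})$, so the threshold for $p$ is the same whether or not $p'$ is taken by the root. A relaxed task-independent mechanism lets $\psi_p$ depend on $(t_{p'},s_{p'})$ only through which player receives $p'$, and this dependence can only show up as a jump of $\psi_p$ at the point $s_p$. The continuity requirement (Definition~\ref{def:continuity}), valid for $T$ and its rational translations, guarantees that $\bar s_p$ is not a discontinuity point of $\psi_p$. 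Therefore $\psi_p[t_{-p},\bar s_{-p}](\bar s_p)$ is the same for all values of the sibling. Applying this to each sibling in turn, while keeping the others fixed, gives independence from all siblings simultaneously.
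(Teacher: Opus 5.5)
Your reduction to the $(p,p')$-slice matches the paper, and so does your relaxed task-independent case: the characterization applies, and the continuity requirement removes the jump points at which $\psi_p$ could depend on the sibling. Your first quasi-flipping sketch is also essentially the paper's argument: at $(s_p,s_{p'})=(\varepsilon,\varepsilon)$ the root receives exactly one task at $(t_p,t_{p'})=(\alpha_{p,p'}/2,\alpha_{p,p'}/2)$, so you did not need to abandon it.

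The gap is the quasi-bundling case of Lemma~\ref{prop:quasi}, where your replacement argument fails. Taking $s_{p'}=0$ destroys the facet you want $\psi_p$ to slide along. By Lemma~\ref{obs:alphas}(ii), $\psi_{p'}[t_p=0](0)=0$. By weak monotonicity, any point of $R_{\{p,p'\}}$ can be moved to $t_p=0$ while staying in $R_{\{p,p'\}}$. Hence $R_{\{p,p'\}}$ is contained in the line $t_{p'}=0$. For $t_{p'}>0$ the root never receives both tasks, so there is no bundling facet for $\psi_p$ to move along, and nothing contradicts Lemma~\ref{obs:alphas}(i).

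The same problem appears for small positive leaf values. $\psi_p$ depends on $t_{p'}$ (slope $-1$ along a bundling facet) only on the bounded range of $t_{p'}$ covered by that facet. That facet lies in $[0,a]\times[0,b]$, where $a<Ks_p$ and $b<Ks_{p'}$ are the thresholds of $p$ and $p'$ when the other root value is $0$. So the facet is necessarily short as soon as either leaf value is small. Pushing $t_{p'}\gg Ks_p$ just moves you off the facet, and Lemma~\ref{obs:alphas}(i) only confirms the shrinking you yourself flagged.

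The missing idea is to keep one leaf value bounded away from zero and make the \emph{root's} values small instead. For large $s$ the slanted boundary of a relaxed affine minimizer has a fixed length $\alpha_{p,p'}$ (Observation~\ref{obs:possibleFigures}(ii)). The paper proceeds in four steps:
\begin{enumerate}
\item Fix $s_p$ with $\psi_p(s_p)=\alpha_{p,p'}+\varepsilon/2$, so the allocation is almost half-bundling at $p'$.
\item Lower only $s_{p'}$ to $\varepsilon/(2K)$; the allocation stays quasi-bundling and $\psi_{p'}(s_{p'})<\varepsilon/2$.
\item Evaluate at $t=(\varepsilon,\varepsilon)$. The bundling boundary sends both tasks to the leaf, so the makespan is at least $s_p>\psi_p(s_p)/K>\alpha_{p,p'}/K$.
\item Compare with the optimum: giving both tasks to the root costs only $2\varepsilon$, so the ratio is unbounded.
\end{enumerate}
Shrinking of the facet is harmless here, because the contradiction comes from the fixed leaf cost $s_p$. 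The degenerate case $\alpha_{p,p'}=\infty$ is handled the same way with $s_p=1$. Without an argument of this kind, non task-independent relaxed affine minimizers are not excluded, so your reduction to the task-independent case is not justified.
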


\begin{proof}
  The theorem is a direct consequence of the following lemma. %
  The lemma states that slice mechanism of two sibling tasks cannot
  have a quasi-bundling or quasi-flipping boundary, when the
  approximation ratio is bounded. By the characterization, the slice
  mechanism must be a relaxed task-independent mechanism. By the
  continuity requirement, there are no discontinuities, so
  $\psi_p[t_{-p},\bar s_{-p}](s_p)$ of a task $p$ is independent of
  the values of its siblings. 
  
  Since $T$ is a standard instance, the other tasks are trivial. So an unbounded approximation ratio for a $(p,p')$ slice would imply unbounded ratio for the whole mechanism.
\end{proof}

\begin{lemma}\label{prop:quasi} Let $T=(t,\bar s)$ be a standard instance, and $Q$ be any leaf. Assume that two sibling tasks $p, p' \in Q$ exist so that the allocation in the $(p, p')$-slice mechanism is quasi-bundling or quasi-flipping. Then the mechanism has infinite approximation ratio.
\end{lemma}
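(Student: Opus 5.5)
We argue in the slice: fix all tasks other than $p,p'$ at their values in the standard instance $T$, so every other task is trivial ($t_j=0$). Then $p,p'$ are two sibling tasks with root values $(t_p,t_{p'})$ and leaf values $(s_p,s_{p'})$, and everything else costs nothing in the optimum. The goal is to exhibit, for every $\beta$, root/leaf values on $\{p,p'\}$ with optimal makespan tending to $0$ while the mechanism's makespan stays bounded away from $0$. This gives ratio $>\beta$, contradicting any finite bound.

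Consider the quasi-bundling case first. The region $R_{\{p,p'\}}$ has a nontrivial bundling facet $t_p+t_{p'}=c$ of positive length. By Lemma~\ref{obs:Rnonempty} it contains points with strictly positive coordinates, and $c>0$ since the facet is nondegenerate. The plan is to use the structure from Theorem~\ref{theo:addchar} and Observation~\ref{obs:possibleFigures}(ii). The slice is then a (non task-independent) relaxed affine minimizer, and the slanted boundary keeps the same length for all sufficiently large $s$. Equivalently, shrinking the leaf values does not shrink it proportionally. I would exploit this by scaling: keep one leaf coordinate tiny and let the other vary. Concretely, choose a point on the bundling facet near an endpoint where $t_{p'}$ is small. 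Then lower $s_{p'}$ toward $0$ while keeping $s_p$ fixed. The opposite choice, dropping $t_p$ as well, should yield an instance where Opt is tiny but the mechanism bundles both tasks to one side at cost about the fixed facet length.

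Cleaner is the standard argument from \cite{ChristodoulouKoutsoupiasKovacs2020Submodular}. Along the bundling facet, the root gets both tasks or neither. Pick $t=(c-\eta,\eta)$ just on the "neither" side, so both tasks go to the leaf. Then set $s_p$ tiny and $s_{p'}$ large, relative to the scale. Optimal here is $p$ to the leaf and $p'$ to the root, with makespan $\max(s_p,\eta)\to 0$. The mechanism pays either $s_{p'}$ or about $c$, which are bounded below. To keep the allocation unchanged while moving $s$, I would invoke Observation~\ref{obs:possibleFigures}(ii)/(iii): the figure stays quasi-bundling with a slanted boundary of the same length, and $\psi$ is truncated-linear in $s$. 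So the facet persists, and I can slide along it to a point where one coordinate is $\eta$.

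The quasi-flipping case is symmetric. Here $R_{\{p\}}$ and $R_{\{p'\}}$ share a slanted boundary $t_p-t_{p'}=\text{const}$ of positive length. Choose root values on that boundary with $t_p$ tiny, where the mechanism gives $p'$ to the root at cost $t_{p'}$ bounded below. Set $s_{p'}$ tiny too, which preserves the flip by linearity. Then Opt is $O(\eta)$ and the ratio is unbounded.

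The main obstacle is the persistence step: showing that as the leaf values change to make Opt small, the non-crossing figure does not degenerate into a crossing one. Degeneration is allowed as $s$ shrinks (part of the slanted boundary may vanish). I would first try to avoid shrinking $s$ at all. I would make Opt small by choosing root values near the axes and using a large/small leaf pair only in the direction where the characterization guarantees constant slanted-boundary length. If that fails, I would fall back on the explicit parametrized forms of relaxed affine minimizers in \cite{ChristodoulouKoutsoupiasKovacs2020Submodular}.
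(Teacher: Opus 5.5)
There is a genuine gap, and it is the persistence step you flag yourself. With the scalings you chose it cannot be repaired, in either case.

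\emph{Quasi-bundling.} Fix the leaf values. Then $R_{\{p,p'\}}\subseteq[0,\psi_p(s_p)]\times[0,\psi_{p'}(s_{p'})]$, and the bundling constant satisfies $c\ge\psi_{p'}(s_{p'})$. By Lemma~\ref{obs:alphas}(i), every facet point therefore has $t_p\le\psi_p(s_p)<Ks_p$ and $t_{p'}\ge c-\psi_p(s_p)>s_{p'}/K-Ks_p$.
\begin{itemize}
\item With $s_p$ tiny and $s_{p'}$ large, the slanted boundary has shrunk to a tiny segment near $(0,c)$. Observation~\ref{obs:possibleFigures}(ii) promises constant length only for large $s$.
\item So there is no facet point $(c-\eta,\eta)$ to slide to.
\item Even if the root took both tasks at such a point, its cost would be $c<Ks_p+\eta$, which is not bounded below.
\end{itemize}
Nothing carries the hypothesis at $\bar s$ over to the instance you want.

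The missing idea is a calibration that uses the $s$-independence of the bundling height $\alpha_{p,p'}$:
\begin{itemize}
\item Keep $s_p$ \emph{moderate}, chosen so that $\psi_p(s_p)=\alpha_{p,p'}+\varepsilon/2$. The full-length facet then starts at $t_p=\varepsilon/2$.
\item Shrink only $s_{p'}$, to $\varepsilon/(2K)$. By the relaxed-affine-minimizer structure the allocation stays quasi-bundling, with $c$ still only $\varepsilon/2$ above $\psi_{p'}(s_{p'})<\varepsilon/2$, so $c<\varepsilon$.
\item At $t=(\varepsilon,\varepsilon)$ both tasks go to the leaf. This gives $\mathrm{Mech}\ge s_p>\alpha_{p,p'}/K$ against $\mathrm{Opt}\le 2\varepsilon$.
\end{itemize}
Your first, abandoned sketch (lower $s_{p'}$, keep $s_p$ fixed) and your closing fallback point in this direction. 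But the calibration of $s_p$ and making \emph{both} root values tiny are the actual content of the proof, and the proposal does not supply them.

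\emph{Quasi-flipping.} Here the instance you describe does not exist. For fixed leaf values, the flipping boundary is the $45^\circ$ segment starting at $(\psi_p(s_p),\psi_{p'}(s_{p'}))$. The root can receive $p'$ only if $t_{p'}\le t_p+\psi_{p'}(s_{p'})<t_p+Ks_{p'}$.
\begin{itemize}
\item Once $t_p$ and $s_{p'}$ are tiny, the root can never hold $p'$ at a cost bounded below.
\item On that boundary $t_p\ge\psi_p(s_p)>s_p/K$, which forces $s_p$ to be tiny as well.
\item Hence every allocation available to the mechanism at your instance is cheap.
\end{itemize}
The correct scaling is the reverse. Make both leaf values tiny, $s_p=s_{p'}=\varepsilon$, so that $\psi_p,\psi_{p'}\le K\varepsilon$. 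Keep the root values at the scale of the constant flipping height, $(t_p,t_{p'})=(\alpha_{p,p'}/2,\alpha_{p,p'}/2)$. That point lies strictly between $R_{\{p,p'\}}$ and $R_{\emptyset|\{p,p'\}}$, so the root gets exactly one task at cost $\alpha_{p,p'}/2$. Giving both tasks to the leaf shows $\mathrm{Opt}\le 2\varepsilon$.
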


\begin{proof} By the characterization result  (Theorem~\ref{theo:addchar} and Observation~\ref{ref:obs-characterization}), the $(p,p')$-slice
  mechanism is a (non task-independent) relaxed affine minimizer.

Consider first the case that the affine minimizer is quasi-flipping (\cite{ChristodoulouKK26} Figure~7 (a)),
with height $\alpha_{p,p'}$ of the flipping boundary (i.e., the height
of the $45^o$ boundary). Lemma~\ref{obs:alphas} implies
$\psi_{p}(s_p)\rightarrow 0$ when $s_p\rightarrow 0,$ and
$\psi_{p'}(s_{p'})\rightarrow 0$ when $s_{p'}\rightarrow 0$, hence for
$(s_p,s_{p'})=(\varepsilon,\varepsilon)$ the allocation is still
quasi-flipping.  In particular, because $0<\psi_p(\varepsilon)\leq
K\varepsilon,$ and $0<\psi_{p'}(\varepsilon)\leq K\varepsilon,$ and the
flipping boundary has height $\alpha_{p,p'},$ exactly one of the tasks
is given to the root for values $(t_p,
t_{p'})=(\alpha_{p,p'}/2,\alpha_{p,p'}/2)$ (\cite{ChristodoulouKK26} Figure~7 (b)). The approximation ratio is
$\mech/\opt>(\alpha_{p,p'}/2)/ 2\varepsilon\rightarrow \infty$ when
$\varepsilon \rightarrow 0.$ Finally, in the quasi-flipping case $\alpha_{pp'}=\infty$ can be excluded by Lemma~\ref{obs:alphas} (i) (to put it simply, both $R_{\{p,p'\}}$ and $R_{\emptyset|\{p,p'\}}$ must be nonempty).

Now consider the case that the relaxed affine minimizer is
quasi-bundling with height of the bundling boundary equal to
$\alpha_{p,p'}$ (\cite{ChristodoulouKK26} Figure~7 (c)). The $2\times 2$ characterization implies that $\alpha_{p,p'}$
is constant (i.e., independent of the $(s,t)$ values when $s$ large enough, see Observation~\ref{ref:obs-characterization}). Set $s_{p}$ so that
$\psi_p(s_p)=\alpha_{p,p'}+\varepsilon/2$ (\cite{ChristodoulouKK26} Figure~7 (d)).  Now for some large $s_{p'}$,
the allocation is almost half-bundling at $p'$, and by the definition
of relaxed affine minimizers, the allocation remains quasi-bundling
when we decrease $s_{p'}.$ If we set $s_{p'}=\varepsilon/(2K)$ and
$t=(\varepsilon, \varepsilon)$, taking into account that
$\psi_{p'}(s_{p'})<K\, s_{p'}=\varepsilon/2$, the existence of the
bundling boundary guarantees that both tasks are given to the leaf (\cite{ChristodoulouKK26} Figure~7 (e)). In
turn this gives unbounded approximation ratio because $\mech\geq
s_p>\psi_p(s_p)/K>\alpha_{p,p'}/K$, while $\opt\leq 2\varepsilon$.

If in the quasi-bundling case $\alpha_{pp'}=\infty$ (e.g., for one-dimensional bundling mechanisms) the argument is analogous. For $s_p=1,\, s_{p'}=\varepsilon/2K$ and  $t=(\varepsilon, \varepsilon),$ both tasks are allocated to the leaves and we obtain unbounded ratio.\end{proof}

\subsection{Base case}
\label{sec:inductionbase}

The base case of the induction proof is the special case of the Box Theorem (Theorem~\ref{thm:box-restated}) for two leaves. In this particular case, multiplicity 2 is sufficient. Theorem~\ref{thm:basecase} establishes that every pair of leaves with two tasks per leaf contains a box of size two. This immediately suggests that there are many boxes of size two. Theorem~\ref{cor:b2} makes it precise.

 \begin{theorem} \label{thm:basecase}(\cite{ChristodoulouKK26} Theorem 41.)
   For a mechanism with bounded approximation ratio, every standard instance with 2 leaves and 2 tasks per leaf contains a box of 2 leaves.   
 \end{theorem}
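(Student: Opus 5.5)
We have two leaves, say leaf $1$ with sibling tasks $p,p'$ and leaf $2$ with sibling tasks $q,q'$, all with $t=0$ and $\bar s\in(\xi,1)$. The four candidate stars are $\{p,q\},\{p,q'\},\{p',q\},\{p',q'\}$. The plan is to assume for contradiction that none of them is a box. Then in each star the region $R_{\{\cdot,\cdot\}}$ is a \slicedoff\ box. By the remark after Definition~\ref{def:regionR}, in dimension two this means it is quasi-bundling: it has a bundling facet $t_a+t_b=c$ that removes the corner $(\alpha_a-\delta,\alpha_b-\delta)$. By Lemma~\ref{obs:Rnonempty} this facet carries strictly positive points. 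Each such two-task slice (tasks from different leaves, all other tasks trivial) is a $2\times 2$-type slice between the root and two leaves. So we can apply Theorem~\ref{theo:addchar} and Observation~\ref{ref:obs-characterization}(ii)–(iii): a non-crossing slice must be a relaxed affine minimizer. Its boundary $\psi_p[t_q,s_q](s_p)$ is then truncated linear in $s_p$, and the slanted boundary has a fixed length for large enough $s$.

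The next step is to exploit linearity in two directions. Siblings are independent by Theorem~\ref{thm:sibling-independence}, so changing $s_{p'}$ or $t_{p'}$ does not move $\psi_p$. Also, by the continuity requirement, the $(p,q)$ and $(p,q')$ slices see the same boundary function for $p$ as a function of $s_p$. Suppose $(p,q)$ is quasi-bundling with a positive-length bundling facet. Then $\psi_p$, as a function of $t_q$, drops with slope $-1$ along that facet. Now vary $s_p$ downwards. The affine form $\lambda s_p-\gamma$ makes the whole upper envelope translate rectilinearly, while the slanted facet keeps its length, until the cut reaches the $t_p$-axis side. I would use this to show that for small $s_p$ (respectively $s_q$) the bundling facet survives. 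This gives, as in Lemma~\ref{prop:quasi}, an instance with $t=(\eta,\eta)$ in which both tasks go to the leaves although $\psi_p,\psi_q<K s$ are tiny. Alternatively, one can contradict Lemma~\ref{obs:alphas}(i), $s_j/K<\psi_j(s_j)$.

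The real combinatorial content is to play the four stars against each other. Because of Lemma~\ref{prop:lipschitz} and sibling independence, the bundling constant $c$ of star $\{p,q\}$ as a function of the values on $q'$ must agree with that of $\{p,q'\}$ when the roles are exchanged. From the four slanted facets I would derive, for suitable strictly positive $(t_p,t_q)$, that $p$ is allocated both to root and to leaf. A cyclic inequality $c_{pq}+c_{p'q'}<c_{pq'}+c_{p'q}$ and its reverse cannot both hold. This should yield the contradiction, so at least one star is a box.

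The main obstacle is this last step: showing that simultaneous chopped corners in all four stars are inconsistent. The linear boundary functions of relaxed affine minimizers may be truncated at $0$ or degenerate (fully bundling, unbounded slanted length), and these cases need separate handling. There I would first try to reduce the degenerate cases to Lemma~\ref{prop:quasi}-style unbounded-ratio arguments, with $s\to0$ on one leaf.
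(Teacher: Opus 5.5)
\textbf{The source of linearity is misidentified.} Theorem~\ref{theo:addchar} and Observation~\ref{obs:possibleFigures}(ii)--(iii) apply to two tasks shared by the \emph{same} two players; the paper restricts them explicitly to sibling tasks. The star $\{p,q\}$ involves three players: the root and two different leaves. For it, weak monotonicity only gives the shape of $R_{\{p,q\}}$ at fixed $s$, namely a rectangle with possibly a $-45^\circ$ cut. It says nothing about how $\psi_p$ depends on $s_p$, so you get no truncated linearity, no slanted facet of fixed length, and no rectilinear translation.

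You also invoke Theorem~\ref{thm:sibling-independence}, but it holds only at standard instances, i.e.\ with $t_q=t_{q'}=0$. Near the chopped corner $t_q\approx\alpha_q-\delta>0$, the instance is no longer standard. There Lemma~\ref{prop:quasi} no longer rules out a non-task-independent sibling slice.

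The missing idea is a dichotomy for the \emph{sibling} slice $(p,p')$, evaluated at points between $T_\nu$ and $T_{\nu'}$ of the star $\{p,q\}$:
\begin{itemize}
\item If it is crossing at two such points, the argument of Lemma~\ref{lem:crossing} makes $\{p',q\}$ a box. Symmetrically, a crossing $(q,q')$ slice makes $\{p,q'\}$ a box.
\item Otherwise it is a non-task-independent relaxed affine minimizer. Only this makes $\psi_p$ truncated linear in $s_p$ (resp.\ $\psi_q$ in $s_q$).
\end{itemize}
That linearity is what lets you translate the cut in the two independent directions $s_p\downarrow$ and $s_q\downarrow$ and contradict the bounded ratio, as the paper's outline of the base case indicates. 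You would also need a separate treatment of half- and fully-bundling sibling slices. With only two tasks per leaf, the counting argument of Lemma~\ref{lem:noHalfBundling} is unavailable.

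\textbf{The final ``cyclic inequality'' step cannot work as planned.} It uses only data at the fixed $\bar s$: bundling constants, Lemma~\ref{prop:lipschitz}, and sibling independence at $t=0$. These do not constrain the four cuts. Consider the mechanism that ignores the leaves' bids and gives the root the set $R$ minimizing $\sum_{j\in R}t_j+c_R$, where $c_R=(2-a)+(2-b)-0.4(2-a)(2-b)$, with $a=|R\cap\{p,p'\}|$ and $b=|R\cap\{q,q'\}|$.
\begin{itemize}
\item It is weakly monotone.
\item At the standard instance its sibling slices are crossing: $p$ goes to the root iff $t_p<1$, whatever $t_{p'}$.
\item Each of the four stars has root region $[0,1]^2$ cut by $x_1+x_2\le 1.6$. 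So all four are strictly chopped, with identical bundling constants.
\end{itemize}
Hence no strict cyclic inequality among the four constants, let alone a pair of opposite ones, follows from fixed-$\bar s$ information. The contradiction has to come from the bounded approximation ratio, through the $s$-dependence of the sibling slices. This example has unbounded ratio precisely because it ignores $s$.
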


\begin{theorem}\label{cor:b2}(\cite{ChristodoulouKK26} Theorem 45.) Let $T$ be a standard instance. If the approximation ratio is bounded, a random star of two leaves is not a box with probability at most $2/\sqrt{\ell}$. 
\end{theorem}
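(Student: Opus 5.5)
The plan is to derive the statement from Theorem~\ref{thm:basecase} by a double-counting argument over quadruples of tasks. Fix the standard instance $T$ and two leaves $C_1,C_2$, each with $\ell$ sibling tasks. A star of two leaves is a pair $(a,b)\in C_1\times C_2$, so there are $\ell^2$ stars. Call a pair \emph{bad} if it is not a box, and let $G$ be the $\ell\times\ell$ bipartite graph whose edges are the bad pairs. Theorem~\ref{thm:basecase} is applied to every sub-instance obtained by choosing two tasks $a,a'\in C_1$ and two tasks $b,b'\in C_2$, with all other tasks kept trivial at their fixed values.

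The first point to check is that this restriction remains a standard instance for two leaves with two tasks per leaf. The remaining tasks have $t_j=0$, so they are trivial, and the continuity requirement is inherited. Theorem~\ref{thm:sibling-independence} says the boundary functions of $a$ and $b$ do not depend on sibling values. Consequently, whether $(a,b)$ is a box is a property of the pair alone and does not depend on which quadruple contains it. Theorem~\ref{thm:basecase} then shows that every $2\times 2$ choice $\{a,a'\}\times\{b,b'\}$ contains at least one good pair. Equivalently, $G$ contains no complete bipartite subgraph $K_{2,2}$.

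The counting step is the Kővári–Sós–Turán bound for $K_{2,2}$-free bipartite graphs. Let $d_a$ be the degree of $a\in C_1$ in $G$. Every pair of distinct vertices $b,b'\in C_2$ has at most one common bad neighbour, so
\[
\sum_{a\in C_1}\binom{d_a}{2}\le\binom{\ell}{2}.
\]
Writing $e=\sum_a d_a$ and applying convexity (Jensen) gives $\ell\binom{e/\ell}{2}\le\binom{\ell}{2}$. This leads to $e(e-\ell)\le \ell^3$, so $e\le \ell^{3/2}+\ell$. The probability that a random star is bad is therefore $e/\ell^2\le 1/\sqrt\ell+1/\ell\le 2/\sqrt\ell$.

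The main obstacle is the independence claim in the first step: the box property of $(a,b)$ must not change when the other quadruple members are chosen or their values are varied. This is exactly where Theorem~\ref{thm:sibling-independence} and the continuity requirement are needed. Two related points also need care. All siblings already sit at $t=0$, so restricting to a quadruple changes no values at all, and the pair $(a,b)$ is literally the same object in every quadruple. The box condition sets $t_a,t_b$ while leaving other tasks at $0$, which matches the setting of Definition~\ref{def:box}. Once these points are verified, the rest is the routine Zarankiewicz computation above.
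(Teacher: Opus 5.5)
Your proof is correct and follows essentially the same route as the proof the paper imports from \cite{ChristodoulouKK26} (Theorem 45 there): Theorem~\ref{thm:basecase} applies to every $2\times 2$ choice of siblings, so the bipartite graph of non-box stars is $K_{2,2}$-free, and the K\H{o}v\'ari--S\'os--Tur\'an/Zarankiewicz count bounds its edges by $\ell^{3/2}+\ell$, which gives probability at most $2/\sqrt{\ell}$. The appeal to Theorem~\ref{thm:sibling-independence} is not needed: the box property of $(a,b)$ is defined relative to the fixed instance $T$, so it is automatically the same in every quadruple, as you yourself observe.
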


\subsection{Induction step}
\label{sec:induction_step}

For the induction step, we consider a star of $k\geq 3$ tasks, which
we call \emph{\slicedoff\ box}, such that all its subsets of $k-1$
tasks are boxes. The precise structure of a \slicedoff\ box is
detailed in the following
definition. %

\begin{definition}[\Slicedoff\ box] \label{def:potentially-good} Fix a
  mechanism of approximation factor less than $K.$ A star $P=\{p_1,\ldots,p_k\}$ from a set of leaves $\cal
  C,$ for $|\mathcal C|=k\geq 3$ is called a \emph{\slicedoff\ box for
    an instance $T$} if $T=(t,\bar s)$ is standard for $P$ and the
  following conditions hold
  \begin{itemize}
  \item for every $i\in [k]$, $P_{-i}=P\setminus \{p_i\}$ is a box for $T$;
  \item for every $q= 1,\ldots,4K/\nu,$ such that $\bar s_{p_k}>q \nu/(4K),$ the instance that results from $T$ when we replace the values of task $p_k$ with $[t_{p_k}=0, \, \bar s_{p_k}-q \nu/(4K)]$, the  $P_{-k}$ is a box. 
  \end{itemize}
  If a \slicedoff\ box is not a box itself, it will be called \emph{\slicedoffx\ box}.
\end{definition}

The definition of a box is about the allocation area $R_P$, in which
all tasks are allocated to the root. Recall that a box is a set of
tasks for which $R_P$ is (almost) an orthotope (a hyperrectangle). On
the other hand, the first condition in the definition of a \slicedoff\
box essentially says that it is a box from which a simplex was cut off
by a diagonal cut (i.e., by a hyperplane of the form $\sum_{i \in [k]}
t_{p_i} = c_{[k]}$). While the definition allows the removed simplex
to be empty (in which case $R_P$ is simply a box), when we want to emphasize
that the simplex is not empty, we call it a \slicedoffx\ box (see
Figure~\ref{fig:box} (b) and (d)).

In the rest of this section we establish that the probability that a
\slicedoff\ box is not a box, is small. We consider a standard
instance $T=(t,s)$ for a set of leaves $\mathcal C=(Q_1,\ldots,Q_k)$, and a star $P=\{p_1,\ldots, p_k\}$ from $\mathcal C$, and we assume that $P$ is a
\slicedoffx\ box. Let $p_k'$ be a random sibling of $p_k$, i.e.,
another task of leaf $Q_k.$ The main result of this section
establishes that {\em almost all other sets $(P_{-k},p_k')$ are
  boxes}.  In particular, we will show via a probabilistic argument
that $(P_{-k},p_k')$ is a box with probability at least
$1-2nK/\ell\xi$ (Lemma~\ref{lemma:many-good-sets}).  In order to show
this result, we consider the $(p_k,p_k')$-slice mechanism which is a
mechanism between two players and two tasks, utilizing the $2\times 2$
characterization (Theorem~\ref{theo:addchar} and
Observation~\ref{obs:possibleFigures}).

We will focus on input points that lie on a specific area around the
allocation region $R_P$, for which we can establish useful properties
when we use the $2\times 2$ characterization. In what follows, two important input
points are $T_\nu(P)$ and $T_{\nu'}(P)$. Recall that
by Definition~\ref{def:nu}, $T_\nu(P)$ is an instance $(t^\nu, s)$
that agrees with $T$ everywhere, except for tasks in $P,$ where
$t^{\nu}_{p_i}=\alpha_{p_i}-4^{k}\nu.$ We define
$T_{\nu'}(P)=(t^{\nu'}, s)$ similarly, but for $\nu'=\nu/4$. Then
$t^{\nu'}_P$ is a point between $t^\nu_P$ and $(\alpha_{p_i})_{p_i\in
  P}.$ Moreover, $T_\nu(P_{-k})$ is the projection in direction of the
$p_k$-axis of $T_{\nu'}(P),$ i.e., all tasks $p_i$ in $P\setminus p_k.$
 have $t$-values equal to $\alpha_{p_i}-4^{k-1}\nu$. We will
use the short notation $\,T_\nu=T_\nu(P)$ and $T_{\nu'}=T_{\nu'}(P).$

We proceed with some useful properties of \slicedoff\ boxes
(Section~\ref{sec:ind-general-observations}), and then we examine the
different possibilities for a $(p_k,p_k')$-slice mechanism
(Section~\ref{sec:ind-p_k-p'k-slice}). Then we conclude with the main
result of this section, that shows that the probability that a \slicedoff\ box is not a
box, is small (Section~\ref{sec:ind-main-lemma}).

\subsubsection{General observations.}
\label{sec:ind-general-observations}
We will need a few simple observations about weakly monotone mechanisms. 
First of all, the following proposition and its corollaries provide some intuition about \slicedoff\ box sets.

\begin{proposition}\label{prop:tech1}(\cite{ChristodoulouKK26} Proposition 47.) Suppose that
  $P=\{p_1,p_2,\ldots,p_k\}$ is a \slicedoffx\ box for $T=(t,\bar s).$
  Consider an arbitrary instance $T'=[(t'_P, t_{[m]\setminus P}) ,\bar
  s]$ such that $t^\nu_{P}< t'_P < t^{\nu'}_P$ coordinate-wise --- and
  so, $T_\nu\leq T'\leq T_{\nu'}.$ Then the point $t'_P$ obeys every
  linear constraint defining the region $R^T_P,$ except for a single
  violated non-redundant constraint $\sum_{i=1}^k t_{p_i}\leq c$ for some
  $c\leq \sum_{i=1}^k \alpha_{p_i}-k\cdot 4^k\cdot\nu,$ which defines
  a bundling boundary facet of $R_P.$
\end{proposition}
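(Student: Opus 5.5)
We will use the polyhedral description of $R^T_P$ from Section~\ref{sec:region-r_p}. By weak monotonicity, $R^T_P$ is defined by constraints $t_{p_i}\le \alpha_{p_i}$ (more precisely $t_{p_i}\le c_{\{i\}}$ with $c_{\{i\}}\le\alpha_{p_i}$) together with constraints $\sum_{i\in I}t_{p_i}\le c_I$ for $I\subseteq[k]$. The plan has two steps. First, show that $t'_P$ satisfies every constraint with $I\ne[k]$. Second, deduce that the full constraint $I=[k]$ is the one violated, and bound its constant $c$.

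\textbf{Step 1: proper subsets.} Fix a proper subset $I\subsetneq[k]$ and pick an index $i\notin I$. Since $P$ is a chopped off box, $P_{-i}$ is a box for $T$. That is, setting $t_{p_h}=\alpha_{p_h}-4^{k-1}\nu$ for all $h\ne i$, with $t_{p_i}=0$, gives all tasks of $P_{-i}$ to the root.

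We then lift this to $P$. Lowering $t_{p_i}$ to $0$ cannot take $p_i$ away from the root, because $\psi_{p_i}>0$ by Lemma~\ref{obs:alphas}, applied with the $1$-Lipschitz property (Lemma~\ref{prop:lipschitz}) to control how $\psi_{p_i}$ shifts when the other coordinates move by at most $\sum_h\alpha_{p_h}$. This needs some care; alternatively, we argue directly with the affine-minimizer form $\min_R \sum_{R}t+c_R$.

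The cleaner route is the following. The point $x$ with $x_{p_h}=\alpha_{p_h}-4^{k-1}\nu$ for $h\ne i$ and $x_{p_i}=0$ lies in $R_{P}$. To see this, note that $p_i$ is given to the root at $t_{p_i}=0$ because the boundary $\psi_{p_i}(\bar s_{p_i})$ evaluated at that point is positive by Lemma~\ref{obs:Rnonempty}/Lemma~\ref{obs:alphas}. Hence $\sum_{h\in I}x_{p_h}\le c_I$. Since $i\notin I$ and $t'_{p_h}<\alpha_{p_h}-4^{k}\nu/4=\alpha_{p_h}-4^{k-1}\nu=x_{p_h}$, we get $\sum_{I}t'_{p_h}<\sum_I x_{p_h}\le c_I$. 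This settles all constraints not involving every coordinate, including the singleton ones.

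\textbf{Step 2: the full constraint.} Since $P$ is strictly chopped off, it is not a box. So the corner $t^\nu_P$, where $t^\nu_{p_h}=\alpha_{p_h}-4^k\nu$, is not in $R_P$. By Step 1, applied to $t^\nu_P$ (which is also coordinate-wise below $x$), the only constraint that can fail there is $\sum_{i=1}^k t_{p_i}\le c$ with $c=c_{[k]}$. Therefore $c<\sum_i t^\nu_{p_i}=\sum_i\alpha_{p_i}-k4^k\nu$. Because $t'_P>t^\nu_P$, the same constraint is violated at $t'_P$, and the bound on $c$ holds.

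Non-redundancy follows from Step 1. Points of $R_P$ near $x$, or along segments toward $t'_P$, satisfy every other constraint strictly, so the hyperplane $\sum t=c$ cuts $R_P$ in a full-dimensional facet. This facet separates $R_P$ from $R_{\emptyset|P}$, as noted after Definition~\ref{def:regionR}.

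\textbf{Expected obstacle.} The delicate point is Step 1's claim that putting $t_{p_i}=0$ keeps all of $P$ with the root, so that $x\in R_P$, and not merely $P_{-i}$ being allocated. The first thing to try is the affine-minimizer representation. Compare $R=P$ with any $R\not\ni p_i$. Since $P_{-i}$ is a box, $c_R\ge c_{P}+\sum_{P\setminus R}x$ holds for the relevant sets. The remaining sets are handled by $c_{\{i\}}$-type bounds coming from $\psi_{p_i}>0$, with tie-breaking handled via the strict inequalities.
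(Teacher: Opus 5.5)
Your overall architecture is sound. For a proper subset $I\subsetneq[k]$ you dominate $t'_P$ on the coordinates in $I$ by the corner of the box $P_{-i}$ with $i\notin I$. Then $T_\nu\notin R_P$ forces the full-sum constraint to be the violated one. With tie-breaking at $t^\nu_P$ you only get $c\le\sum_i t^\nu_{p_i}$, not $<$, but that is exactly the claimed bound, and strict violation at $t'_P$ still follows from $t'_P>t^\nu_P$.

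The gap is the step you yourself flag: that the corner $x$ (with $x_{p_i}=0$) satisfies the constraints of $R_P$. The justification you give does not work:
\begin{itemize}
\item Lemma~\ref{obs:alphas} requires all leaves to be trivial, which fails at $x$ because the tasks of $P_{-i}$ have positive root values there.
\item Lemma~\ref{obs:Rnonempty} presupposes a bundling facet, which is what you are proving, and it concerns the all-zero instance.
\item Lemma~\ref{prop:lipschitz} applied over shifts of size $\sum_h\alpha_{p_h}$ says nothing about the sign of $\psi_{p_i}$ at $x$.
\item Your fallback inequality $c_R\ge c_P+\sum_{P\setminus R}x$ is only a restatement of $x\in R_P$.
\end{itemize}
The box property of $P_{-i}$ alone says only that the root's subset $R^*$ of $P$ at $x$ satisfies $P_{-i}\subseteq R^*\subseteq P$. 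That yields the inequality with $c_{R^*}$ in place of $c_P$. What is missing is $c_{P_{-i}}\ge c_P$.

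This missing fact comes from the origin, not from $x$. At $t_P=0$, i.e.\ in $T$ itself, all leaves are trivial, so Lemma~\ref{obs:alphas} gives $\alpha_{p_j}>0$ for every $j$. Hence the root receives all of $P$ there, and $c_P\le c_R$ for every $R\subseteq P$. Since $x_{p_i}=0$, we have $\sum_{R^*}x=\sum_P x$. Optimality of $R^*$ against $P\setminus I$ then gives $\sum_{h\in I}x_{p_h}\le c_{P\setminus I}-c_{R^*}\le c_{P\setminus I}-c_P$, and the right-hand side is exactly the constant of the constraint indexed by $I$. This is all Step~1 needs.

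Alternatively, you can show $x\in R_P$ directly. For small $\epsilon>0$, the root receives all of $P$ at the point with $t_{p_i}=\epsilon$ and all other coordinates $0$; here Lemma~\ref{prop:lipschitz} only has to absorb a shift of $\epsilon$. If $p_i$ went to the leaf at $x$, the weak-monotonicity sum between these two points would equal $\epsilon>0$, a contradiction. The same small-$\epsilon$ point also shows $c\ge\epsilon>0$, which your facet and non-redundancy argument tacitly uses. Finally, your parenthetical claim $c_{\{i\}}\le\alpha_{p_i}$ is not needed and does not hold in general; only the minimum over $I\ni i$ of the constraint constants is bounded by $\alpha_{p_i}$.
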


\begin{corollary}\label{prop:tech2}(\cite{ChristodoulouKK26} Corollary 48.) Let the conditions be like in Proposition~\ref{prop:tech1}. Then
\begin{itemize}
\item[(i)] the region $R_P$ has a non-redundant bundling facet of equation $\sum_{i=1}^k t_{p_i}=c$ for some $c;$ 

\item[(ii)] for $T'$ all tasks $p_i\in P$ are allocated to the leaves;

\item[(iii)] for each  $j\in [k],$ if in $T'$ we change the $t$-value of only task $p_j$ to $0,$ then all tasks $i\in P$ are allocated to the root.

\item[(iv)] in $T'$ for the task $p_k$ and its  critical value $\psi_{p_k}=\psi_{p_k}(t'_{-p_k},\bar s),$ it holds that the point  $(t'_{P\setminus\{p_k\}}, \psi_{p_k})$ is a point of the bundling boundary facet of $R_P,$ and $\psi_{p_k}>0.$ 

\end{itemize}
\end{corollary}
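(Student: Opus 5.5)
The plan is to derive all four items from Proposition~\ref{prop:tech1} together with weak monotonicity, in the form of the affine-minimizer description of the root's allocation on the star $P$ (Subsection~\ref{sec:region-r_p}).

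For (i), take any $T'$ with $t^\nu_P<t'_P<t^{\nu'}_P$; such a point exists since $\nu'<\nu$. By Proposition~\ref{prop:tech1} it violates exactly one non-redundant constraint of $R_P$, namely $\sum_i t_{p_i}\le c$. A violated non-redundant constraint must be a facet of the polyhedron $R_P$, which gives (i).

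For (ii), $t'_P$ lies outside $R_P$ only because of the bundling constraint. I would use the allocation structure: the root gets the set $R$ minimizing $\sum_{p\in R}t_p+c_R$. For each proper subset $R\subsetneq P$ I would show that $\emptyset$ beats $R$ at $t'_P$. Since $P_{-i}$ is a box, all smaller bundles are available near the corner $t^{\nu}_{P_{-i}}$. The violated full-bundling inequality says $\sum_P t'_p > c = c_\emptyset - c_P$ (normalizing). The non-violated inequalities say every sub-bundle constraint $\sum_{i\in I}t'_{p_i}\le c_I$ holds. I expect this to force the minimizer to be $\emptyset$: the region adjacent to $R_P$ across a bundling facet is $R_{\emptyset|P}$, as noted after Definition~\ref{def:regionR}. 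The argument would combine this with the fact that $t'_P$ lies in a small cube near the cut corner, where any partial allocation would have to be separated from $R_P$ by a non-bundling facet. \textbf{This is the main obstacle}: ruling out that some intermediate set $R$ is allocated in the thin region between the two corners. I would handle it by contradiction. If $R\neq\emptyset$ were allocated, then lowering the coordinates outside $R$ and using the box property of the sub-star $P_{-i}$ for $p_i\notin R$ would yield a second violated constraint of $R_P$, contradicting the uniqueness in Proposition~\ref{prop:tech1}.

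For (iii), setting $t'_{p_j}=0$ lowers the coordinate sum by $t'_{p_j}>\alpha_{p_j}-4^k\nu'$. I would check that the new point satisfies $\sum t\le c$, using the bound $c\le\sum\alpha_{p_i}-k4^k\nu$ together with the box property of $P_{-j}$. The box property implies that $c$ is at least $\sum_{i\ne j}\alpha_{p_i}-4^{k-1}\nu\cdot(k-1)$; this follows because the corner of the box $P_{-j}$ with $t_{p_j}=0$ lies in $R_P$. All other constraints remain satisfied because coordinates only decreased. Hence the new point lies in $R_P$ and all tasks go to the root.

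For (iv), fix $t'_{P\setminus p_k}$ and vary $t_{p_k}$. By (iii) the root gets all of $P$ at $t_{p_k}=0$, and by (ii) the leaves get all of $P$ at $t'_{p_k}$. The threshold is $\psi_{p_k}$, and the transition happens exactly when crossing the only constraint that can become violated, the bundling facet. So $(t'_{P\setminus p_k},\psi_{p_k})$ lies on that facet. Positivity follows because the facet point has strictly positive coordinates, either by Lemma~\ref{obs:Rnonempty} or directly since $\psi_{p_k}=c-\sum_{i<k}t'_{p_i}>0$ by the lower bound on $c$ from (iii).
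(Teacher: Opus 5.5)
There is a gap in (ii), the step you call the main obstacle. Your overall route is the natural one: the affine-minimizer description of the root's choice on $P$, Proposition~\ref{prop:tech1}, and the corners of the boxes $P_{-j}$. (The paper itself only cites this corollary from \cite{ChristodoulouKK26}.) But the contradiction you propose for (ii) would not work. Lowering the coordinates of tasks \emph{outside} $R$ means lowering values of tasks the root does not receive. Weak monotonicity does not preserve the allocation under that move: the Nisan--Ronen lemma only lets you lower values of received tasks and raise values of non-received ones. So nothing about the allocated set survives, and it is unclear what ``second violated constraint'' would come out.

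In fact no box argument is needed, and the obstacle disappears once you index the constraints correctly. With your normalization, the constraint of $R_P$ for $I\subsetneq P$ is exactly $\sum_{p\in I}t_p\le c_{P\setminus I}-c_P$. That is, it says $P$ is weakly preferred to $R=P\setminus I$. As you yourself use Proposition~\ref{prop:tech1}, all these constraints hold at $t'$ while the one with $R=\emptyset$ fails. Hence for every nonempty $R\subseteq P$
\[
c_\emptyset<\sum_{p\in P}t'_p+c_P\le\sum_{p\in R}t'_p+c_R ,
\]
so $\emptyset$ is the unique minimizer and there is no tie-breaking issue.

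This computation works at \emph{any} point that satisfies all non-bundling constraints and violates the bundling one. Your (iv) needs exactly this general form. The crossing point $\psi_{p_k}=c-\sum_{i<k}t'_{p_i}$ lies strictly below $t^\nu_{p_k}$, by the upper bound on $c$ and $t'_{p_i}>t^\nu_{p_i}$. So the points just above it on your segment are outside the cube where (ii) is stated. They still qualify because they are coordinatewise below $t'$.

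Two smaller points need fixing.
\begin{itemize}
\item In (iii), the box property of $P_{-j}$ only gives the root the tasks of $P_{-j}$ at the corner $u$ with $u_{p_j}=0$. It does not by itself give $u\in R_P$; you must also show the root takes $p_j$ there. At $t_{p_j}=0$ the root compares $P$ and $P_{-j}$ only through $c_P$ versus $c_{P_{-j}}$. The singleton constraint $t_{p_j}\le c_{P_{-j}}-c_P$ holds at $t'$, and $t'_{p_j}>t^\nu_{p_j}>0$, so $c_P<c_{P_{-j}}$ and $u\in R_P$. Since all constraints have nonnegative coefficients, $(t'_{P_{-j}},0)\le u$ then lies in $R_P$ with strict slack in every constraint.
\item Also in (iii): the inequality should read $t'_{p_j}<\alpha_{p_j}-4^k\nu'$, not $>$. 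The upper bound on $c$ plays no role there.
\item In (iv), Lemma~\ref{obs:Rnonempty} gives positivity of $\alpha_{p_j}$ and $\bar s_{p_j}$, not of the coordinates of your particular facet point. The correct justification is your direct bound $c-\sum_{i<k}t'_{p_i}>0$, obtained from the corner of the box $P_{-k}$ once it is shown to lie in $R_P$ as above.
\end{itemize}
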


The next propositions are about points on the bundling boundary facet
of $R_P.$ We consider inputs $T_1=[(t^1_P, t_{[m]\setminus P}) ,\bar
s]$ and $T_2=[(t^2_P, t_{[m]\setminus P}) ,\bar s]$ that differ from
$T_{\nu'}$ and $T_{\nu}$ only in their coordinates $t_{p_i}$ for tasks
$p_i\in P,$ such that $t^\nu_{p_i}\leq t^1_{p_i}< t^2_{p_i}\leq
t^{\nu'}_{p_i}$. All other $t$-coordinates $j\in [m]\setminus P$ stay
the same, i.e.,  $t^\nu_{j}= t^1_{j}= t^2_{j}= t^{\nu'}_{j},$ as well
as the $s$-coordinates.

Due to the relative position of $T_1$ and $T_2,$ and since the
critical value points for task $p_k$ are on a bundling boundary of
$R_P,$ (Corollary~\ref{prop:tech2} (iv)), it is intuitively clear that
the Lipschitz property (Lemma~\ref{prop:lipschitz}) for any two
such points is fulfilled with equality:

\begin{lemma}\label{prop:psiAreFar}(\cite{ChristodoulouKK26} Lemma 49.) Let $T_\nu \leq T_1\leq T_2 \leq T_{\nu'}$  coordinate-wise, and $t^1_{p_i}<t^2_{p_i}$ hold with strict inequality for every $p_i\in P.$ Then  $$\psi_{p_k}(t_{-p_k}^1,\bar s)-\psi_{p_k}(t_{-p_k}^2,\bar s)=|t_{-p_k}^1-t_{-p_k}^2|_1=\sum_{i\in [k-1]} (t_{p_i}^2-t_{p_i}^1).$$
  \end{lemma}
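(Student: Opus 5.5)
The approach is to use Corollary~\ref{prop:tech2}(iv) at both $T_1$ and $T_2$ to pin down $\psi_{p_k}$ as a linear function of the other root coordinates, and then read off the difference.

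First, $T_1$ and $T_2$ both lie in the box $t^\nu_P\le t'_P\le t^{\nu'}_P$ of Proposition~\ref{prop:tech1}. Where a coordinate sits on the boundary, for instance $t^1_{p_i}=t^\nu_{p_i}$, I will perturb it slightly inward. The $1$-Lipschitz property (Lemma~\ref{prop:lipschitz}) lets me pass to the limit at the end, since both sides of the claimed identity are continuous in $t_{-p_k}$. So I may assume strict inequalities. Then Proposition~\ref{prop:tech1} and Corollary~\ref{prop:tech2} apply to both instances.

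By Corollary~\ref{prop:tech2}(i) there is a single non-redundant bundling facet $\sum_{i=1}^k t_{p_i}=c$ of $R^T_P$. The constants $c_R$ depend only on the fixed coordinates outside $P$ and on $\bar s$, so this facet is the same for $T_1$ and $T_2$. By Corollary~\ref{prop:tech2}(iv), for $r\in\{1,2\}$ the point $(t^r_{P\setminus\{p_k\}},\psi_{p_k}(t^r_{-p_k},\bar s))$ lies on that bundling facet. Hence
\[
\psi_{p_k}(t^r_{-p_k},\bar s)=c-\sum_{i\in[k-1]}t^r_{p_i}, \qquad r=1,2.
\]
The point $(t^r_{P\setminus\{p_k\}},\psi_{p_k})$ must lie on this particular facet and not on another constraint of $R_P$. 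That is what Proposition~\ref{prop:tech1} and (iv) provide: every other constraint is satisfied with slack near these points. Indeed, $t^r_{p_k}$ is already in the leaves' region and moving down along the $p_k$-axis first hits the bundling facet. Subtracting the two equations gives $\psi_{p_k}(t^1_{-p_k},\bar s)-\psi_{p_k}(t^2_{-p_k},\bar s)=\sum_{i\in[k-1]}(t^2_{p_i}-t^1_{p_i})$. This equals $|t^1_{-p_k}-t^2_{-p_k}|_1$ because the coordinates outside $P$ agree and $t^2_{p_i}>t^1_{p_i}$.

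The main obstacle is justifying that the critical point for $p_k$ lies on the bundling facet in both instances, rather than on a facet $\sum_{i\in I}t_{p_i}=c_I$ with $I\ne[k]$. Such a facet would give a different slope, and in particular a different dependence on the coordinates outside $I$. I would try to handle this by arguing directly from Proposition~\ref{prop:tech1}. At $(t^r_{P\setminus\{p_k\}},\psi_{p_k})$ the first $k-1$ coordinates are strictly between $t^\nu$ and $t^{\nu'}$. The $p_k$ coordinate is at most $t^r_{p_k}<t^{\nu'}_{p_k}$, so every constraint with $I\ne[k]$ holds strictly, using the sub-box property of the $P_{-i}$. Therefore the active constraint is the bundling one.
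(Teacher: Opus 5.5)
Your proof is correct and follows the paper's route. By Corollary~\ref{prop:tech2}(iv), both critical points $(t^r_{-p_k},\psi_{p_k}(t^r_{-p_k},\bar s))$ lie on the single bundling facet $\sum_{i=1}^k t_{p_i}=c$ of $R_P$, which is common to $T_1$ and $T_2$ since they differ from $T$ only on $P$; subtracting the two facet equations gives the identity, and your perturbation-plus-Lipschitz step properly covers the endpoint cases $t^1_{p_i}=t^\nu_{p_i}$ or $t^2_{p_i}=t^{\nu'}_{p_i}$, which Proposition~\ref{prop:tech1} as stated does not reach. Your closing ``direct'' justification is redundant given (iv), and on its own it would still need to show that the critical point lies on the boundary of $R_P$ at all (e.g.\ that just below $\psi_{p_k}$ the root receives all of $P$), not merely that the non-bundling constraints are slack there.
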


The proof of the previous proposition shows 
$\psi_{p_k}(t^1)-\psi_{p_k}(t^2)=|t^1-t^2|_1$ essentially using the
fact that for $i=1,2$ the points $(t^i_{-p_k},\psi_{p_k}(t^i))\in
\mathbb R^k$ are on the bundling boundary of $R_P$ and
$R_{\emptyset|P}.$ The next proposition and its corollary show that
the reverse also holds: If
$\psi_{p_k}(t_{-p_k}^1)-\psi_{p_k}(t_{-p_k}^2)=|t_{-p_k}^2-t_{-p_k}^1|_1,$
then for all points $t'$ with $t^1\leq t'\leq t^2$ their critical
inputs for task $p_k$ are on a bundling boundary of $R_P$.

\begin{proposition}\label{prop:FarPsiAreBundling}(\cite{ChristodoulouKK26} Proposition 50.) Let $s$ be fixed, let
  $t^1_{p_i}< t^2_{p_i}$ for every $p_i\in P,$ and $t^1_{j}=t^2_j$ for
  $j\in [m]\setminus P.$ If
  $\psi_{p_k}(t_{-p_k}^1)-\psi_{p_k}(t_{-p_k}^2)=\sum_{i\in [k-1]}
  (t_{p_i}^2-t_{p_i}^1),$ then
\begin{enumerate}
\item[(i)]  for  $(t^1_{-p_k},\psi_{p_k}(t^1)-\varepsilon)$ all tasks $p_i\in P$ are given to the root for $0<\varepsilon< \psi_{p_k}(t^1),$ and 
\item[(ii)] for  $(t^2_{-p_k},\psi_{p_k}(t^2)+\varepsilon)$ all tasks $p_i\in P$ are given to the leaves  for $\varepsilon>0.$
\end{enumerate}
    
  \end{proposition}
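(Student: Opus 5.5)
The plan is to argue directly with the affine-minimizer description of the root's allocation restricted to the star $P$. The values $s$ and the $t$-coordinates outside $P$ are fixed. By weak monotonicity, as in the proof of Lemma~\ref{lem:BTrelation}, there are constants $c_R$, one for each $R\subseteq P$, such that the root receives the set $R$ minimizing $\sum_{p\in R}t_p+c_R$. The region $R_P$ is then cut out by the constraints $\sum_{p\in P\setminus R}t_p\le c_R-c_P$. Take a point $u$ on the bundling facet in the plane $t_{-p_k}=t^2_{-p_k}$, i.e.\ $u_{p_k}=\psi_{p_k}(t^2)$. The hypothesis together with Lemma~\ref{prop:lipschitz} makes the drop of $\psi_{p_k}$ between $t^1$ and $t^2$ maximal. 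So I expect $\psi_{p_k}$ to decrease with slope exactly $-1$ in each coordinate $t_{p_i}$, $i<k$, all along the box $[t^1,t^2]$.

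I would proceed in three steps.

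\emph{Step 1.} Show that along the segment from $t^1_{-p_k}$ to $t^2_{-p_k}$ the function $\psi_{p_k}$ equals $\psi_{p_k}(t^1)-\sum_{i<k}(t'_{p_i}-t^1_{p_i})$. I would get this from 1-Lipschitzness: raising each coordinate separately can lower $\psi$ by at most its increment, and the total drop must equal the sum. Hence every monotone path gives equality, and in particular each partial increment attains its Lipschitz bound.

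\emph{Step 2.} Interpret the critical value. At the critical point for $p_k$, the root is indifferent between some $R\ni p_k$ and $R\setminus\{p_k\}$, and the identity of the optimal $R$ is locally constant in generic regions. A slope of $-1$ in every coordinate $t_{p_i}$, $i<k$, means that near the critical point the optimal sets are $R=P$ on one side and $\emptyset$ on the other. The other possibilities give the wrong slopes: if $p_i\notin R$ the slope in $t_{p_i}$ is $0$, and otherwise the boundary between $R$ and $R\setminus\{p_k\}$ is the plane $t_{p_k}=c_{R\setminus p_k}-c_R$, which does not depend on $t_{p_i}$ at all. The only boundary depending on all of $t_{-p_k}$ with slope $-1$ is the bundling boundary $\sum_{p\in P}t_p=c_\emptyset-c_P$ between $P$ and $\emptyset$. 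So on the whole segment the critical point lies on the bundling facet separating $R_P$ from $R_{\emptyset|P}$.

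\emph{Step 3.} Conclude. At $(t^1_{-p_k},\psi_{p_k}(t^1)-\varepsilon)$ we are just below the bundling hyperplane. Since the optimal set just below the critical value is $P$, and the $P$-versus-$\emptyset$ boundary is the only active one, all of $P$ goes to the root; this gives (i). For (ii), at $t^2$ the point just above the critical value lies on the $\emptyset$ side. Raising $t_{p_k}$ further only favours sets not containing $p_k$, and among those, $\emptyset$ stays optimal by weak monotonicity, since $\emptyset$ was already optimal and the values $t_{p_i}$, $i<k$, did not change.

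The main obstacle is Step 2. Nondifferentiable points and ties between several $R$, with boundaries possibly jumping between facets, have to be handled rigorously. I would handle them by using the strict inequalities $t^1_{p_i}<t^2_{p_i}$ to pick generic interior points, then extend to the endpoints by the piecewise-linear, continuous nature of $\psi$ together with the closedness of the regions. The range $0<\varepsilon<\psi_{p_k}(t^1)$ in (i) needs separate care. Above the facet region, the remaining constraints of $R_P$ (the faces $t_{p_k}\le c$) must not cut in. I plan to control this by noting that on this segment the bundling facet lies below all other facets, because otherwise the slope along some coordinate would be $0$ rather than $-1$.
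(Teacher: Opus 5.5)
The paper does not prove this statement; it imports it from \cite{ChristodoulouKK26}. So your argument has to stand on its own.

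\textbf{What works.} The menu representation and Step~1 are sound. In fact Step~1 gives $\psi_{p_k}(t')=\psi_{p_k}(t^1)-\sum_{i<k}(t'_{p_i}-t^1_{p_i})$ on the whole box $t^1\le t'\le t^2$, since every such $t'$ lies on a monotone path from $t^1$ to $t^2$.

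\textbf{The gap.} It lies in the step that actually produces (i) and (ii). Ties are broken arbitrarily, so you need \emph{strict} optimality at the corners:
\begin{itemize}
\item at $t^1_{-p_k}$, $P$ must be the unique minimizer among sets containing $p_k$;
\item at $t^2_{-p_k}$, $\emptyset$ must be the unique minimizer among sets not containing $p_k$.
\end{itemize}
Identifying the optimal sets at generic interior points and passing to the corners by continuity only shows that $P$ (resp.\ $\emptyset$) is \emph{a} minimizer. The regions $R_P$ and $R_{\emptyset|P}$ need not be closed, precisely because they depend on tie-breaking. So ``closedness of the regions'' cannot finish the proof.

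\textbf{Two smaller points.} First, in Step~2 the indifference at the critical value is between the best set containing $p_k$ and the best set not containing it. That second set need not be $R\setminus\{p_k\}$, and your target pair $P,\emptyset$ is not of that form. The conclusion survives, because an $R$-versus-$R'$ boundary has slope $[p_i\in R']-[p_i\in R]$ in $t_{p_i}$, which equals $-1$ for all $i<k$ only when $R=P$ and $R'=\emptyset$. Second, the worry about the $\varepsilon$-range is misplaced. The faces $t_{p_k}\le c$ only loosen as $t_{p_k}$ decreases, and the comparisons of $P$ with other sets $R'\ni p_k$ do not involve $t_{p_k}$ at all.

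\textbf{The fix.} The missing idea is a one-sided perturbation from each corner into the box; it also makes the generic-point analysis unnecessary. Write $x=t_{P\setminus\{p_k\}}$, $f(x)=\min_{R\not\ni p_k}\bigl(x(R)+c_R\bigr)$ and $g(x)=\min_{R\ni p_k}\bigl(x(R\setminus\{p_k\})+c_R\bigr)$. Then $\psi_{p_k}=\max(0,f-g)$, and $\psi_{p_k}(t^1)>0$. Below, $\delta>0$ satisfies $\delta\le t^2_{p_i}-t^1_{p_i}$, so that Step~1 applies.

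For (i), suppose some $R'\ni p_k$ with $R'\neq P$ attained $g(t^1)$. Pick $p_i\in P\setminus R'$ and raise $t_{p_i}$ by $\delta$. The value of $R'$ is unchanged and $f$ cannot decrease, so $\psi_{p_k}$ does not decrease, contradicting the slope $-1$. Hence $P$ is the unique minimizer of $g$ at $t^1$. For every $t_{p_k}<\psi_{p_k}(t^1)$, the value of $P$ is $f(t^1)-(\psi_{p_k}(t^1)-t_{p_k})$. This is strictly below the value of every other set, which gives (i) for the whole range of $\varepsilon$ at once.

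For (ii), suppose a nonempty $R'\not\ni p_k$ attained $f(t^2)$. Pick $p_i\in R'$ and lower $t_{p_i}$ by $\delta$. Then $f$ drops by at least $\delta$ and $g$ by at most $\delta$, so $\psi_{p_k}(t^2-\delta e_i)\le\psi_{p_k}(t^2)$. Step~1 instead gives $\psi_{p_k}(t^2)+\delta$, a contradiction. Hence $\emptyset$ is the unique minimizer of $f$ at $t^2$. Now take $t_{p_k}=\psi_{p_k}(t^2)+\varepsilon$:
\begin{itemize}
\item every set containing $p_k$ has value at least $f(t^2)+\varepsilon$;
\item every nonempty set not containing $p_k$ lies strictly above $f(t^2)$, which is the value of $\emptyset$.
\end{itemize}
So the root receives no task of $P$, which is (ii).
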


\begin{corollary}\label{cor:FarPsiAreBundling}(\cite{ChristodoulouKK26} Corollary 51.) Assume that $t^1_{p_i}< t^2_{p_i}$ for every $p_i\in P,$ and $t^1_{j}=t^2_j$ for $j\in [m]\setminus P,$ and $\psi_{p_k}(t_{-p_k}^1)-\psi_{p_k}(t_{-p_k}^2)=\sum_{i\in [k-1]} (t_{p_i}^2-t_{p_i}^1).$ 
Let $t^1\leq t'\leq t^2$  be so that $t^1_{p_i}< t'_{p_i}< t^2_{p_i}\, (i\in [k]).$ Then
\begin{enumerate}
\item[(i)] for  $(t'_{-p_k},\psi_{p_k}(t')-\varepsilon)$ all tasks $p_i\in P$ are given to the root, and 
\item[(ii)] for  $(t'_{-p_k},\psi_{p_k}(t')+\varepsilon)$ all tasks $p_i\in P$ are given to the leaves. 
\end{enumerate}

\end{corollary}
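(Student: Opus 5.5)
The plan is to derive both parts from Proposition~\ref{prop:FarPsiAreBundling}, applied to a sub-box, together with the weak monotonicity of the allocation for the root when restricted to $P$ (Lemma~\ref{lem:restricted-subset-WMON}). Recall that this restricted allocation is an affine minimizer over subsets $R\subseteq P$, minimizing $\sum_{p_i\in R}t_{p_i}+c_R$.

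For given $t'$ with $t^1_{p_i}<t'_{p_i}<t^2_{p_i}$, I would set up the two-point comparisons $(t^1,t')$ and $(t',t^2)$. By Lemma~\ref{prop:lipschitz},
\[
\psi_{p_k}(t^1_{-p_k})-\psi_{p_k}(t'_{-p_k})\le \sum_{i\in[k-1]}(t'_{p_i}-t^1_{p_i})
\]
and
\[
\psi_{p_k}(t'_{-p_k})-\psi_{p_k}(t^2_{-p_k})\le \sum_{i\in[k-1]}(t^2_{p_i}-t'_{p_i}).
\]
Adding these two inequalities gives exactly $\sum_{i\in[k-1]}(t^2_{p_i}-t^1_{p_i})$, which by hypothesis equals $\psi_{p_k}(t^1_{-p_k})-\psi_{p_k}(t^2_{-p_k})$. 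So both Lipschitz inequalities must hold with equality.

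Once we have equality for the pair $(t^1,t')$, the pair satisfies the hypotheses of Proposition~\ref{prop:FarPsiAreBundling}, with $t'$ playing the role of $t^2$ and strict coordinatewise inequalities as required. Part (ii) of the proposition then states that at $(t'_{-p_k},\psi_{p_k}(t')+\varepsilon)$ all tasks of $P$ go to the leaves, which is claim (ii). Symmetrically, applying part (i) of the proposition to the pair $(t',t^2)$, with $t'$ in the role of $t^1$, gives that at $(t'_{-p_k},\psi_{p_k}(t')-\varepsilon)$ all tasks go to the root, which is claim (i).

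The step I expect to need care is matching the ranges of $\varepsilon$. Proposition~\ref{prop:FarPsiAreBundling}(i) requires $0<\varepsilon<\psi_{p_k}(t')$, so I need $\psi_{p_k}(t')>0$. This follows from the equality chain: $\psi_{p_k}(t')=\psi_{p_k}(t^2)+\sum(t^2_{p_i}-t'_{p_i})>0$, because the sum is strictly positive. A second subtlety is whether Proposition~\ref{prop:FarPsiAreBundling} also needs $t^1_{p_k}<t^2_{p_k}$. It does not matter, since the $p_k$-coordinate is replaced by $\psi_{p_k}\pm\varepsilon$, so I can choose the $p_k$-coordinates of the auxiliary points freely to keep the inequalities strict.
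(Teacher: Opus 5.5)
Your proof is correct: summing the Lipschitz bounds (Lemma~\ref{prop:lipschitz}) over the sub-segments $(t^1,t')$ and $(t',t^2)$ recovers the assumed equality, so both bounds are tight. Proposition~\ref{prop:FarPsiAreBundling}(ii) on $(t^1,t')$ and (i) on $(t',t^2)$ then give the two claims, with $\psi_{p_k}(t')>0$ handled as you do. The paper gives no proof of its own here (it defers to \cite{ChristodoulouKK26}, Corollary 51), but this tightness-splitting argument is the natural derivation of the corollary from that proposition and lemma; your opening appeal to weak monotonicity and the affine-minimizer form is never used and can be dropped.
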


\subsubsection{The $p_k$-$p_k'$ slice}
\label{sec:ind-p_k-p'k-slice}

We now consider a \slicedoffx\ box $P=\{p_1,\ldots, p_k\}$ and choose
a random sibling $p_k'$ of $p_k$.  We will consider all different
possibilities for the ($p_k, p_k'$)-slice mechanism based on the
$2\times 2$ characterization. First, we consider the case that the
mechanism is task-independent, establishing that then $(P_{-k}, p'_k)$
must be a box (Lemma~\ref{lem:crossing}). Then we treat the non
task-independent case, excluding almost surely affine minimizers and constant mechanisms
(Lemma~\ref{lemma:linearBoundary}) and bounding from above the number of
$p_k'$ siblings for which the mechanism can be relaxed-affine
minimizer (Lemma~\ref{lem:noHalfBundling}).%

\paragraph{The case of task-independent (crossing) allocations.}

In this paragraph we treat the case when in at least two points $T_1<T_2$ between $T_\nu$ and $T_{\nu'}$ the allocation of  the $(p_k, p_k')$-slice is crossing. We prove that in this case the set $(P_{-k}, p_k')$ is a box.

\begin{lemma}\label{lem:crossing}(\cite{ChristodoulouKK26} Lemma 52.) Suppose that $P=\{p_1,p_2,\ldots,p_k\}$ is a \slicedoffx\ box for $T=(t,\bar s)$. Fix a sibling $p'_k$ of $p_k.$ Assume that there exist two distinct instances $T_1,\,T_2,$ such that 
\begin{itemize}

\item $T_\nu \leq T_1\leq T_2 \leq T_{\nu'}$  coordinate-wise, and $t^\nu_{p_i}<t^1_{p_i}<t^2_{p_i}< t^{\nu'}_{p_i}$ holds with strict inequality for every $p_i\in P;$
 \item for both $T_1$ and $T_2$, in the $(p_k, p_k')$ slice mechanism  the allocation of the root is crossing.

\end{itemize}
Then $(P_{-k}, p'_k)$ is a box for $T.$

\end{lemma}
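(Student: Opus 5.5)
We first use the crossing hypothesis to fix the boundary function of $p_k'$ along the segment between $T_1$ and $T_2$. Crossing at $T_1$ and $T_2$, together with Observation~\ref{obs:possibleFigures}(ii), shows the $(p_k,p_k')$-slice mechanism cannot be a non task-independent relaxed affine minimizer: such mechanisms are non-crossing for every $(s_1,s_2)$, and crossing is a property of the whole slice family at these base points. So the slice is relaxed task-independent. By the continuity requirement on $T$ and its rational translations, the critical values for $p_k$ and $p_k'$ in this slice have no jumps at the relevant points. Hence $\psi_{p_k'}$ does not depend on $t_{p_k}$ at $T_1$ and $T_2$, and likewise $\psi_{p_k}$ does not depend on $t_{p_k'}$ (which stays $0$). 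Since $P$ is strictly chopped off, Lemma~\ref{prop:psiAreFar} gives $\psi_{p_k}(t^1_{-p_k})-\psi_{p_k}(t^2_{-p_k})=\sum_{i<k}(t^2_{p_i}-t^1_{p_i})$. In words, $p_k$'s critical value moves exactly $1$-Lipschitz, with equality, along this segment.

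Next we transfer this equality from $p_k$ to $p_k'$. Consider the $(k+1)$-task slice on $P\cup\{p_k'\}$, with $t_{p_k'}=0$ and $t_{p_k}$ set near its critical value. Task independence of the $(p_k,p_k')$ pair means that moving $t_{p_k}$ across $\psi_{p_k}$ flips only $p_k$ and leaves $p_k'$ with the root. We then study the region $R_{(P_{-k},p_k')}$ with $t_{p_k}$ raised high, so that $p_k$ goes to the leaf. The aim is to show that the bundling-type constraint cutting the corner of $R_P$ cannot reappear as a constraint involving $p_k'$. If it did, the $(p_k,p_k')$ allocation would have a slanted bundling or flipping boundary, i.e.\ it would be non-crossing, contradicting the hypothesis at $T_1$ or $T_2$. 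By Lemma~\ref{prop:lipschitz} and weak monotonicity (Lemma~\ref{lem:restricted-subset-WMON}), the constraints defining $R_{(P_{-k},p_k')}$ are of the form $\sum_{i\in I}t\le c_I$. Each one involving both some $P_{-k}$ tasks and $p_k'$ would force $\psi_{p_k'}$ to move with slope $1$ in the $P_{-k}$ coordinates. That would in turn force a coupling between $p_k$ and $p_k'$, which crossing excludes.

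Finally we verify the box condition itself. Set $t_j=\psi_j(\bar s_j)-4^k\nu$ for $j\in P_{-k}\cup\{p_k'\}$. We know $P_{-k}$ is a box (first condition of Definition~\ref{def:potentially-good}) with margin $4^{k-1}\nu$. By the previous step $\psi_{p_k'}$ stays essentially constant over the range of $P_{-k}$ coordinates between $T_\nu$ and $T_{\nu'}$, so $p_k'$ remains allocated to the root together with all of $P_{-k}$. Here we use the slack $4^k\nu-4^{k-1}\nu$ to absorb the $\nu'$ perturbations and tie-breaking.

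The main obstacle is the second step. We must rule out that the corner cut of $R_P$ migrates to a cut on $(P_{-k},p_k')$ using only crossing information at two points, rather than everywhere. First we would use Corollary~\ref{cor:FarPsiAreBundling}: it shows that for all $t'$ between $t^1$ and $t^2$ the $p_k$-critical points lie on the bundling facet. From this we would derive that a simultaneous bundling facet for $p_k'$ would make the $(p_k,p_k')$ slice quasi-bundling at some such $t'$. We would then extend the crossing property to the endpoints $T_1$ and $T_2$ via the characterization's dichotomy, namely that relaxed affine minimizers are never crossing.
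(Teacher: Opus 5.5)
Your first step is fine. The genuine gap is your second step, which you yourself flag, and the route you sketch for it cannot work.

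The $2\times 2$ characterization (Theorem~\ref{theo:addchar}, Observation~\ref{obs:possibleFigures}) describes a single slice mechanism, i.e.\ one with every other coordinate held fixed, in particular $t_{P_{-k}}$. The $(p_k,p_k')$-slices at $t^1_{P_{-k}}$, at $t^2_{P_{-k}}$ and at an intermediate $t'$ are three different $2\times 2$ mechanisms. So the dichotomy ``relaxed affine minimizers are never crossing'' cannot carry the crossing property between $T_1,T_2$ and $t'$ in either direction, and a quasi-bundling slice at some intermediate $t'$ would contradict nothing. Likewise, your claim that a bundling-type constraint of $R_{(P_{-k},p_k')}$ would ``force a coupling between $p_k$ and $p_k'$'' is asserted, not derived. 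Such a constraint acts near the corner where $t_{p_k}=0$ and $t_{P_{-k}}\approx t^\nu_{P_{-k}}$, whereas your hypothesis only concerns the two planes $t_{P_{-k}}=t^1_{P_{-k}}$ and $t_{P_{-k}}=t^2_{P_{-k}}$.

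Your last step has two further holes. First, the box test for $(P_{-k},p_k')$ is relative to $\alpha_{p_k'}$, the critical value at $T$ where $t_{P_{-k}}=0$. Near-constancy of $\psi_{p_k'}$ between $T_\nu$ and $T_{\nu'}$ therefore does not suffice. Second, the box property of $P_{-k}$ was certified with $t_{p_k'}=0$. You still must show that raising $t_{p_k'}$ to $\alpha_{p_k'}-4^k\nu$ does not make the root give up $P_{-k}$ (together with $p_k$), and that is precisely a constraint on all of $P\cup\{p_k'\}$.

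What is missing is a global link between the data at $T_1,T_2$ and that corner. One way to supply it is as follows.
\begin{itemize}
\item With everything outside $P\cup\{p_k'\}$ fixed, the root takes a set $R$ minimizing $t(R)+c_R$. For $S\subseteq\{p_k,p_k'\}$ put $g_S(x)=\min_{Y\subseteq P_{-k}}[x(Y)+c_{Y\cup S}]$. Each $g_S$ is concave and nondecreasing, with gradients in $\{0,1\}^{P_{-k}}$, and the slice at $x$ is crossing iff $g_{\{p_k,p_k'\}}+g_\emptyset=g_{\{p_k\}}+g_{\{p_k'\}}$.
\item Crossing at $T_1$ and $T_2$ transfers the equality of Lemma~\ref{prop:psiAreFar} to the other side of $p_k'$'s boundary: $g_{\{p_k\}}-g_\emptyset$ grows by exactly $|t^2_{P_{-k}}-t^1_{P_{-k}}|_1$.
\item Since every coordinate strictly increases, this forces $g_{\{p_k\}}(x)=x(P_{-k})+c_P$ along the segment, and hence for all $x\le t^1_{P_{-k}}$. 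This is where the two points are needed.
\item It follows that $\alpha_{p_k'}\le c_P-c_{P\cup\{p_k'\}}$, and that $g_\emptyset-g_{\{p_k\}}\ge\psi_{p_k}(T_1)>0$ at $t^\nu_{P_{-k}}$.
\item Together with Corollary~\ref{prop:tech2}(iii), this shows that at $(t^\nu_{P_{-k}},\,t_{p_k}=0,\,t_{p_k'}=\alpha_{p_k'}-4^k\nu)$ the root takes all of $P\cup\{p_k'\}$.
\end{itemize}
So the right statement to prove is that the threshold of $p_k'$ does not move with $t_{P_{-k}}$ below $t^1_{P_{-k}}$. Seeking a contradiction with crossing at intermediate points is the wrong target.
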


\paragraph{The case of non-crossing allocations.}

In the remaining part, we need to exclude the case that between
$T_\nu$ and $T_{\nu'}$ there are ``many'' points where the allocation of
the slice mechanism $(p_k, p_k')$ is non-crossing.  Allocations of
$2\times 2$-mechanisms that are non-crossing for the root, occur (1)
in the linear part (linear as function of $s_{p_k}$) of relaxed affine
minimizers (including constant mechanisms), or (2) they must be half-bundling at
$p_k$ (including completely bundling). %

The following paragraphs deal with both of these cases; for any given
point $T,$ case (1) --- linear boundary functions --- will be excluded
to occur in $k$ different points, positioned appropriately, hence
linear functions are excluded almost surely; case (2) is excluded with
high probability over the choice of $p_k'$.

\paragraph{Case 1: Linear boundary functions.}

Here we exclude the case that in $k$ carefully selected points $T_0,
\, T_1, \ldots , T_{k-1}$ the $(p_k, p_k')$-slice mechanism is a
(relaxed) affine minimizer, and so that in each of these points the
critical value function $\psi_k(s_{p_k})$ is truncated linear. This
implies that linear boundary functions are excluded almost surely. The next lemma is an adaptation of (\cite{ChristodoulouKK26} Lemma 53.). We omit the proof of those claims where no change was needed.

\begin{lemma} \label{lemma:linearBoundary} Assume that the
  approximation ratio is less than $K.$ Suppose that
  $P=(p_1,\ldots,p_k)$ is a \slicedoffx\ box for standard instance
  $T=(t,\bar s).$ Fix a sibling $p_k'$ of $p_k.$ Finally let
  $h=(5/8)\cdot 4^{k}\nu,$ and let $0<\eta< \nu/4n$ be some small
  number. There exist no distinct instances $T_0=(t^0,\bar s)$,
  $T_1=(t^1,\bar s),\ldots, T_{k-1}=(t^{k-1},\bar s)$ such that
  \begin{itemize}
  \item $T_\nu\leq T_j\leq T_{\nu'},$ and $\,t^\nu_P<t^j_P<t^{\nu'}_P\,$ for every $j = 0, \ldots, k-1;$ 
  \item $t^0_{p_j}+h<t^j_{p_j}<t^0_{p_j}+h+\eta$ for all $j\in [k-1];$
  \item $t^0_{p_i}<t^j_{p_i}<t^0_{p_i}+\eta$ for all $j\in [k-1],$ and all $i\neq j;$
  \item the boundary function $\psi_{p_k}$ is truncated linear in $s_{p_k},$ that is,
    \begin{align*}
      \psi_{p_k}(t^j_{-p_k},\bar s)=\max(0\,,\, \lambda(t^j_{-p_k},\bar s_{-p_k}) s_{p_k} - \gamma(t^j_{-p_k}, \bar s_{-p_k})),
    \end{align*}
    for  each of the instances $T_j,\quad$ $j = 0, 1,\ldots, k-1.$ 
  \end{itemize}
\end{lemma}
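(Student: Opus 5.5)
We argue by contradiction: assume instances $T_0,T_1,\ldots,T_{k-1}$ with the listed properties exist. The plan is to use the linearity of $\psi_{p_k}$ in $s_{p_k}$ at these $k$ points to translate the whole bundling facet of $R_P$ rectilinearly as $s_{p_k}$ decreases. Eventually the chopped-off corner reaches the face $t_{p_k}=0$ and produces a chopped-off corner in the $(k-1)$-dimensional star $P_{-k}$. This contradicts the second condition of Definition~\ref{def:potentially-good}, which says that $P_{-k}$ remains a box when $\bar s_{p_k}$ is lowered in steps of $\nu/(4K)$.

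First, by Lemma~\ref{prop:psiAreFar}, for every pair $T_0\le T_j$ (coordinatewise strict on $P$, after a small perturbation if needed) we have the exact Lipschitz equality
\[
\psi_{p_k}(t^0_{-p_k},\bar s)-\psi_{p_k}(t^j_{-p_k},\bar s)=\sum_{i\in[k-1]}(t^j_{p_i}-t^0_{p_i})\in(h,h+k\eta).
\]
This holds at $\bar s_{p_k}$. By Corollary~\ref{prop:tech2}(iv), the points $(t^j_{-p_k},\psi_{p_k}(t^j))$ all lie on the bundling facet $\sum_i t_{p_i}=c$ of $R_P$.

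Next we use the truncated-linear forms $\lambda_j s_{p_k}-\gamma_j$. Since each $T_j$ lies on the same bundling hyperplane, the differences $\psi_{p_k}(t^0)-\psi_{p_k}(t^j)$ equal the $\ell_1$ offsets at $\bar s_{p_k}$. We would like them to remain equal as $s_{p_k}$ varies, i.e.\ $\lambda_j=\lambda_0$. To get this, note that the Lipschitz bound (Lemma~\ref{prop:lipschitz}) caps the difference by the $\ell_1$ offset for every $s_{p_k}$. Two affine functions of $s_{p_k}$ whose difference attains this maximum at an interior point $\bar s_{p_k}$ must have equal slopes. This step requires that $\bar s_{p_k}$ is interior to the linear, nontruncated part, which follows from $\psi_{p_k}>0$ (Corollary~\ref{prop:tech2}(iv)) and continuity. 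Then Proposition~\ref{prop:FarPsiAreBundling} and Corollary~\ref{cor:FarPsiAreBundling} show that for all these $s_{p_k}$ the critical points remain on a bundling facet. So the facet translates rigidly, with $c(s_{p_k})=c(\bar s_{p_k})-\lambda(\bar s_{p_k}-s_{p_k})$.

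Then we quantify. Lemma~\ref{obs:alphas} gives $\lambda\ge 1/K$ in an averaged sense, since $\psi_{p_k}$ goes from about $\alpha_{p_k}\ge \xi/K$ down to $0$. Lowering $s_{p_k}$ by $q\nu/(4K)$ therefore moves $\psi_{p_k}$, and hence the facet, down by roughly $q\nu/4$. For some $q\le 4K/\nu$ the cut $\sum t_{p_i}=c$ meets the face $t_{p_k}=0$ at depth more than $4^{k-1}\nu$ below the corner $(\alpha_{p_i})_{i<k}$. The offset $h=(5/8)4^k\nu$ is chosen so that the $k$ points span enough of the facet: the $k-1$ directions $T_j-T_0$ witness its full-dimensionality, which ensures that the translated piece really lies on the face. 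At that $q$, $P_{-k}$ is not a $4^{k-1}\nu$-box, a contradiction. The range of $q$ is adequate because $\bar s_{p_k}>\xi$ and $\nu<\xi/(nK4^n)$.

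The main obstacle is the slope-equality and rigid-translation step. It requires that all $k$ linear pieces are simultaneously nontruncated over the whole range of $s_{p_k}$ used, and that the facet stays non-redundant, so that the corner does not simply disappear. I would handle this by tracking the $k$ points together, using the $\eta$-closeness to control cross terms in the $\ell_1$ identities, and using the relaxed affine minimizer structure of Observation~\ref{obs:possibleFigures}(ii), under which the slanted boundary can only shorten as $s$ decreases. The $K$-dependence, which is the new element compared with \cite{ChristodoulouKK26}, enters only through the bound $\lambda\gtrsim 1/K$ and the step size $\nu/(4K)$.
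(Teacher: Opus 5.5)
Your overall plan matches the paper's, and your slope-equality step is sound. Set $\psi^j(s):=\psi_{p_k}[t^j_{-p_k},\bar s_{-p_k}](s)$. Then $\psi^0-\psi^j\le |t^0_{-p_k}-t^j_{-p_k}|_1$ for every $s$ (Lemma~\ref{prop:lipschitz}), with equality at $\bar s_{p_k}$ (Lemma~\ref{prop:psiAreFar}), and the difference is affine near $\bar s_{p_k}$ because $\psi^j(\bar s_{p_k})>0$. So the slopes agree, which is a clean proof of Claim~(i).

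\textbf{The gap.} In the quantitative step you bound $\lambda$ from the wrong side. The contradiction can only be drawn at the grid values $s=\bar s_{p_k}-q\nu/(4K)$ of Definition~\ref{def:potentially-good}. At such an $s$, everything you know about the facet comes from the $k$ tracked points, and these lie on a bundling boundary only while $\psi^j(s)>0$. Once $\psi^j$ is truncated to $0$ the points carry no information, and the slanted facet may already have vanished. So you need a single grid value $s^*$ with $\psi^j(s^*)\in(0,\nu)$ for all $j\in[k-1]$: the facet is still there, but sits just above the face $t_{p_k}=0$ at every $t^j_{-p_k}$. Only then can one build, next to each $(t^j_{-p_k},\psi^j(s^*))$, instances with $t_{p_k}=0$ where all of $P_{-k}$ goes to the root, respectively to the leaves. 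The offsets $h$ then show that $P_{-k}$ is not a box at $s^*$. A lower bound on $\lambda$ cannot produce $s^*$. If $\lambda\gg K$, one step of size $\nu/(4K)$ carries every $\psi^j$ from far above $\nu$ straight into the truncated region, skipping the window. Your target, a cut more than $4^{k-1}\nu$ below the corner $(\alpha_{p_i})_{i<k}$, also measures against the old critical values. Lowering $s_{p_k}$ may change the critical values of $P_{-k}$; what controls the new box is the position of these witness points, which again requires $\psi^j(s^*)\in(0,\nu)$.

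\textbf{What the paper uses instead.} The missing ingredient is the upper bound $\lambda<2K$ (Claim~(iii)(b)), obtained directly from the approximation ratio. Take $s_{p_k}=2$, $t_{-p_k}=t^1_{-p_k}$ and $t_{p_k}=\psi^1(2)-\varepsilon$. The root gets $p_k$, and using $\gamma<\lambda$ the makespan is at least $2\lambda-\gamma-\varepsilon>\lambda-\varepsilon$, while the optimum is at most $2$; this is incompatible with ratio below $K$ if $\lambda\ge 2K$. Consecutive grid values of $\psi^1$ then differ by at most $\nu/2$. Since $\psi^1(0)=0$ and $\bar s_{p_k}\le 1$, some $q^*\le 4K/\nu$ gives $\psi^1\in[\nu/4,3\nu/4]$, and the closeness $n\eta<\nu/4$ carries this to every $j\in[k-1]$ (Claim~(iv)).

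Two other parts of your quantification fail as stated:
\begin{itemize}
\item Lemma~\ref{obs:alphas} requires all leaves to be trivial, which fails at $T_j$. Also, $\psi^j(\bar s_{p_k})$ lies on the bundling cut and is not $\alpha_{p_k}$. So the lemma gives neither $\lambda\ge 1/K$ nor $\psi^j(0)=0$. Even granting $\lambda\ge 1/K$, the displacements would be of order $q\nu/(4K^2)$, not $q\nu/4$.
\item The fact $\gamma\ge 0$ is genuinely needed, since otherwise the facet never reaches the face. The paper derives it from the bundling-boundary property at $s_{p_k}=0$ together with Lemma~\ref{obs:Rnonempty}, which would force $s_{p_k}>0$.
\end{itemize}
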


\begin{proof}
Towards a contradiction, suppose that such instances $T_0, T_1,\ldots , T_{k-1}$ exist. For simplicity of notation, throughout the proof we write $\psi_k^j (s_{p_k})$ for the critical value function $\psi[T_j]_{p_k}(s_{p_k}),$ defined for the fixed values $t_{-p_k}^j$ and  $\bar s_{-p_k}.$ 

The high-level idea is the following (see Figure~5 in \cite{ChristodoulouKK26}). Let us ignore all other
dimensions, except for those $k$ coordinates of tasks $P.$ Due to
Corollary~\ref{prop:tech2} ((ii), (iii) and (iv) respectively), we know
that in the allocation of $T_j,$ $j=0,\,1,\ldots, k-1$, all tasks
are given to the leaf, when we change the $t_{p_k}$-value to $0,$ all
tasks $p_i\in P$ are given to the root, and the respective boundary
points $(t^j_{-p_k},\psi_k^j(\bar s_{p_k}))\in \mathbb R^k$ are on a
bundling boundary facet of $R_P.$ Now, as we reduce $s_{p_k},$ due to
the linearity of each boundary function $\psi^j_k(s_{p_k}),$ these $k$
boundary points move by the same speed $\lambda$ towards the
coordinate-plane $t_{p_k}=0,$ so that all the time they remain on a
bundling boundary.  For some positive $s^*_{p_k}$ this bundling
boundary facet will reach the plane $t_{p_k}=0,$  %
and there it will `form' a bundling boundary of dimension $k-1$ for the task-set $P_{-k}.$ But
this will prove that for this $s^*_{p_k}$ the set $P_{-k}$ cannot be
a box, contradicting the fact that $P$ is \slicedoff\ box.

We start with some observations about the critical values $\psi_k^j
(\bar s_{p_k}),$ and the respective boundary points of $R_P$ when
$s_{p_k}=\bar s_{p_k}$.  Note that for every $j\in [k-1]$ by
definition $ t^0_{p_i}< t^j_{p_i}$ holds for all $p_i\in P.$
Therefore, by Lemma~\ref{prop:psiAreFar} $$\psi^0_k (\bar
s_{p_k})-\psi_k^j (\bar s_{p_k})=|t_{-p_k}^0-t_{-p_k}^j|_1=\sum_{i\in
  [k-1]}(t^j_{p_i}-t^0_{p_i}).$$ Moreover, for every $j\in [k-1]$ this
difference roughly equals $h,$ in particular
$$\, h<\sum_{i\in [k-1]}(t^j_{p_i}-t^0_{p_i})<h+n\cdot \eta.$$ 

Therefore, all these critical values are almost the same, that is, for
any pair of indices $j,j'\in[k-1]$
it holds $$|\psi_k^j (\bar s_{p_k})- \psi_k^{j'} (\bar s_{p_k})|<
n\cdot\eta.$$ Next we show that these relative positions remain the
same {\em for every $s_{p_k}.$} %

\begin{claim*}[i]

The boundary functions for $t^j$, $j=0,\,1,\ldots , k-1$,
  \begin{align*}
      \psi_k^j (s_{p_k}) & =\lambda(t_{-p_k}^j) s_{p_k} - \gamma(t_{-p_k}^j)
  \end{align*}
  have the same linear coefficient, i.e., $\lambda(t_{-p_k}^0)=\lambda(t_{-p_k}^1)= \ldots =\lambda(t_{-p_k}^{k-1})=\lambda$. 
\end{claim*}

As a corollary we obtain that the difference between (positive) critical values $\psi_k^0$ and $\psi_k^j$ remains the same \emph{for every} $s_{p_k}$.

  \begin{claim*}[ii] Let $j\in [k-1].$ \emph{For every} $s_{p_k}$, for which $\psi_k^j ( s_{p_k})>0,$ the following hold:  
\begin{enumerate}
\item[(a)]  $ \psi_k^0 (s_{p_k})-\psi_k^j (s_{p_k})=|t_{-p_k}^0-t_{-p_k}^j|_1;$
  
\item[(b)]  $|\psi_k^j (s_{p_k})- \psi_k^{j'} ( s_{p_k})|\leq n\cdot \eta$ for every $j'\in [k-1];$ 
\item[(c)] the point $(t^j_{-p_k}, \psi_k^j(s_{p_k}))\in \mathbb R^k$ is on a bundling boundary, in particular on a boundary of $R^{T'}_P,$ where $T'$ is obtained from $T$ by replacing $\bar s_{p_k}$ by $s_{p_k},$ for $j=0, 1, \ldots, k-1;$
\end{enumerate}
\end{claim*}

Let $\psi^1_k(s_{p_k}) = \max(0, \lambda\, s_{p_k} - \gamma).$ Next we prove lower and upper bounds for $\gamma$ and  $\lambda.$

\begin{claim*}[iii] Let $\psi^1_k(s_{p_k})=\max(0, \lambda\, s_{p_k} -
  \gamma)$ for some $\lambda$ and $\gamma$ that do not depend on
  $t_{p_k}$
  and $s_{p_k}$. Suppose further that the approximation
  ratio is less than $K,$ then 
  \begin{enumerate}
\item[(a)] $0\leq  \gamma < \lambda; $
\item[(b)] $\lambda<2K.$ 
\end{enumerate}
\end{claim*}

\begin{proof} (a) By Corollary~\ref{prop:tech2} (iv) holds that $\psi^1_k (\bar s_{p_k})>0$. Now notice that $\lambda\cdot 1-\gamma=\psi^1_k(1)\geq \psi^1_k(\bar s_{p_k}) > 0.$ This implies $\gamma< \lambda$. 

  Next we show $\gamma\geq 0.$ We claim that $\psi_k^1(s_{p_k}=0)=0.$
  Assume the contrary, that $\psi_k^1(s_{p_k}=0)>0.$ Then, by Claim
  (ii) (c), the $(t_{-p_k}^1\,, \, \psi_k^1(0))$ is a point with
  positive coordinates on a bundling boundary of $R_P^{T'}$ where $T'$
  is obtained from $T$ by setting $s_{p_k}=0.$ On the other hand, then
  by Lemma~\ref{obs:Rnonempty} $s_{p_k}>0$ must hold,
  contradiction. So we have $\psi_k^1(s_{p_k}=0)=0.$ Consequently,
  $\lambda\cdot 0 - \gamma\leq 0,$ so $0\leq \gamma.$ Notice that part (a) of this claim excludes constant mechanisms with $\lambda=0.$

(b) Assume that $\lambda\geq 2K.$ We set $s_{p_k}=2,\, t_{-p_k}=t^1_{-p_k}$ and $t_{p_k}=\psi_k^1(s_{p_k})-\varepsilon=\lambda s_{p_k}-\gamma-\varepsilon.$ Then task $p_k$ is allocated to the root, and the makespan is $\mech\geq \lambda s_{p_k}-\gamma-\varepsilon \geq \lambda \cdot 2-\lambda-\varepsilon= \lambda -\varepsilon\geq 2K-\varepsilon.$

The contribution of the tasks $[m]\setminus P$ to the optimal makespan is $0,$ since the other leaves are trivial. We have $\opt\leq \max_i s_{p_i}=s_{p_k}=2.$ We obtain $\mech/\opt\rightarrow K$ when $\varepsilon \rightarrow 0,$ which contradicts our assumption about the approximation factor.
\end{proof}

\begin{claim*}[iv]
  There exists $q^*\in \{0,1,\ldots,4K/\nu\}$ such that $\bar s_{p_k}\geq q^*  \nu/(4K),$ and $\psi^j_k(\bar s_{p_k}-q^*  \nu/(4K))$ is in the interval $(0, \nu)$, for every $j\in [k-1].$ 
\end{claim*}
\begin{proof} By Corollary~\ref{prop:tech2} (iv) it holds for every
  $j\in[k]$ that $\psi^j_k(\bar s_{p_k})>0.$ If also $\psi^j_k(\bar
  s_{p_k})<\nu$ holds for every $j\in[k-1],$ then the claim holds
  trivially with $q^*=0.$
  
  Otherwise let $\bar q\leq 4K/\nu$ be the largest integer such that
  $\bar s_{p_k}\geq \bar q \nu/(4K)$ (recall that $\bar s_{p_k}\leq
  1$). Consider the sequence of values $\psi^1_k(\bar s_{p_k}-q
  \frac{\nu}{4K}),$ for $q=0,1,\ldots,\bar q.$ By Claim (iii),
  $\lambda< 2K$, so successive values in this sequence differ by at
  most $\lambda (\nu/4K)\leq 2K\nu/4K=\nu/2.$ Moreover, because
  $\gamma\geq 0,$ we have $\psi^1_k(0)=0.$ Therefore for some $0\leq
  q^*\leq \bar q,$ the $\psi^1_k(\bar s_{p_k}-q^*\cdot\frac{\nu}{4
  K})$
  falls between $\nu/4$ and $3\nu/4.$ Finally, by Claim (ii) (b)
  it follows that $\psi^j_k(\bar s_{p_k}-q^*\cdot\frac{\nu}{4K})\in (0,
  \nu)$ for \emph{every} $j\in[k-1],$ given that $n\eta< \nu/4.$
\end{proof}

Let $s^*_{p_k}:=\bar s_{p_k}-q^* \nu/(4K)$ be the value from the
previous claim for which $\psi^j_k(s^*_{p_k}) \in (0, \nu)$, for every
$j\in[k-1].$ For the rest of the proof we fix $(s^*_{p_k}, \bar
s_{-p_k}).$ We know from Claim (ii) (c) that for every $j\in [k-1],$
in an arbitrarily small neighborhood of the point $(t^j_{-p_k},
\psi^j_k(s^*_{p_k}))\in \mathbb R^k$ there are points $t_P$ so that
all tasks $p_i\in P$ are allocated to the root, and also there are
points so that none of the tasks $p_i\in P$ is allocated to the root.

Recall that  $t_{p_i}=0$ for every $p_i\in P$ in the instance $T=(t,\bar s).$ Let $T^*=(t, (s^*_{p_k}, \bar s_{-p_k}))$ denote the instance that we obtain from $T$ by replacing $\bar s_{p_k}$ by $s^*_{p_k}.$  Notice that $T^*$ is standard instance for the star $P_{-k}=P\setminus {p_k}.$ We need to be careful, because all we know about the allocation of instance $T^*$ are the $k$ shifted points $(t^j_{-p_k},
\psi^j_k(s^*_{p_k}))$ of a bundling boundary (in particular, the $\alpha_{p_j}$ values may have changed).

The rest of the proof is that same as in (\cite{ChristodoulouKK26} Lemma 53.), and we omit the details: For every $j\in [k-1]$ we  define
two nearby instances to $(t^j_{-p_k}, \psi^j_k(s^*_{p_k})),$ called
$\underline T_j$ and $\overline T_j$ so that for each of them the
$t_{p_k}$ entry is zero, the $s_{p_k}$ entry equals $s^*_{p_k},$ and
for $\underline T_j$ all tasks $p_i\in P_{-k}$ are given to the root,
and for $\overline T_j$ all tasks $p_i\in P_{-k}$ are given to the
leaves. %
This contradicts the assumption that for $s^*_{p_k}$ the set $P_{-k}$ is a box, and thus also that $P$ is a \slicedoff\ box (the second
bullet of Definition~\ref{def:potentially-good}), which concludes the proof.
 
 \end{proof}

\paragraph{Case 2: Half-bundling boundary functions.}

We consider the case that for some $T_\nu(P)\leq \hat T \leq
T_{\nu'}(P),$ the slice mechanism $(p_k,p_k')$ is non-crossing, and so
that it has a boundary half-bundling at $p_k$ (or even fully bundling
at both $p_k$ and $p_k'$). Recall that this includes one-dimensional
bundling mechanisms as well as relaxed affine minimizers that might
become nonlinear for small $s_{p_k}.$
For arbitrary \emph{fixed} $T_\nu(P)\leq \hat T \leq T_{\nu'}(P),$ we
exclude that the slice mechanism $(p_k,p_k')$ is half-bundling at
$p_k,$ \emph{for many } different siblings $p_k'\in Q_k.$ 

\begin{lemma}\label{lem:noHalfBundling} (cf. \cite{ChristodoulouKK26} Lemma 54.) Suppose that $P=\{p_1,p_2,\ldots ,p_k\}$ is a \slicedoffx\ box for standard instance $T=(t, \bar s),$ and corresponding leaf set $\mathcal C=\{Q_1, Q_2, \ldots Q_k\},$ and the approximation ratio is less than $K.$ Let 
  $T_\nu(P)\leq \hat T \leq T_{\nu'}(P),$ be an arbitrary input. Then
  fewer than $K /\xi$ siblings $p_k'\in Q_k$ of $p_k$ have a $2\times
  2$ slice mechanism $(p_k, p_k')$ in $\hat T$ that is half-bundling
  at $p_k.$
\end{lemma}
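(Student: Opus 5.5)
We argue by contradiction. Assume that at least $K/\xi$ siblings $p_k'\in Q_k$ have a $(p_k,p_k')$-slice mechanism in $\hat T$ that is half-bundling at $p_k$; call them $q_1,\dots,q_r$. The goal is to build, from $\hat T$, a single input on which the critical value of $p_k$ would have to be negative. That is impossible.

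In $\hat T$ every sibling $q_l$ is still trivial ($t_{q_l}=0$) and has $\bar s_{q_l}\in(\xi,1)$. The deviation from a standard instance lives only on the coordinates of $P$. By Definition~\ref{def:half-bundling}, the region $R_{\{p_k,q_l\}}$ in the $(p_k,q_l)$-slice is bounded by a facet $t_{q_l}=c_2^{(l)}$ and a bundling facet $t_{p_k}+t_{q_l}=c^{(l)}$, and has no facet $t_{p_k}=c_1$. Consequently, raising $t_{q_l}$ from $0$ towards $c_2^{(l)}=\psi_{q_l}(\hat T)$ lowers $\psi_{p_k}$ one-for-one, while $q_l$ stays with the root.

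First I would lower-bound each $c_2^{(l)}$. Since all tasks outside $P$ are trivial, the argument of Lemma~\ref{obs:alphas} gives a bound of order $\bar s_{q_l}$: at least $\xi/K$, and I would try to sharpen this to order $\xi$. The sharper bound would come from comparing with the point where $p_k$ is still given to the root, using Corollary~\ref{prop:tech2}(iii)–(iv).

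Next I would raise the siblings $q_1,\dots,q_r$ one at a time, each to just below its current critical value, and track $\psi_{p_k}$. To make the decrements add up, I would use the affine-minimizer description of the root's allocation on the task set $\{p_k,q_1,\dots,q_r\}$, with constants $c_R$ exactly as in the proof of Lemma~\ref{lem:BTrelation}. Half-bundling of each pair says that the constant for a set containing $q_l$ but not $p_k$ exceeds the corresponding constant with $p_k$ added by about $c^{(l)}-c_2^{(l)}$. Combining these inequalities over $l$ should show that, after all raises, $p_k$ is kept by the root only if $t_{p_k}\le \psi_{p_k}(\hat T)-\sum_l(\text{drop}_l)$.

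The contradiction then comes from two bounds. On one side, $\psi_{p_k}(\hat T)\le\alpha_{p_k}<K\bar s_{p_k}<K$ by Lemma~\ref{obs:alphas} together with the Lipschitz bound Lemma~\ref{prop:lipschitz}. On the other side, Corollary~\ref{prop:tech2}(iv) gives $\psi_{p_k}>0$ at every point between $T_\nu$ and $T_{\nu'}$. If instead the drops cannot be summed, I would fall back on a makespan argument: the root then receives all the raised siblings, each at cost $\approx c_2^{(l)}$, while $\textsc{Opt}\le 1$.

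The main obstacle is the middle step. I must show that half-bundling of each pair $(p_k,q_l)$, which is a property of a single $2\times2$ slice at $\hat T$, survives after the other siblings have been raised, so that the decrements really are cumulative. The current instance is no longer standard, so Theorem~\ref{thm:sibling-independence} cannot be invoked directly. I would try to handle this with the $c_R$ representation on the whole set $\{p_k,q_1,\dots,q_r\}$, since the pairwise inequalities among the $c_R$ do not depend on the root's values.

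A second, quantitative risk is that the per-sibling drop may only be bounded below by $\xi/K$. In that case the argument yields a threshold of order $K^2/\xi$ rather than $K/\xi$, and the second step has to supply the better bound of order $\xi$.
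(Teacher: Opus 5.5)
There is a genuine gap, and the strategy puts the load on the wrong player. The step you flag as the main obstacle is never established: that pairwise half-bundling at $\hat T$ survives once other siblings are raised, so that the drops in $\psi_{p_k}$ add up. The missing idea is that half-bundling at $p_k$ already sends the sibling to the \emph{leaf}, at a single input and with no raising at all. Let $\psi_{p_k}$ be the critical value of $p_k$ in $\hat T$, and let $t^+$ agree with $\hat T$ except that $t_{p_k}=\psi_{p_k}+\varepsilon$. Take any sibling $p_k'$ whose $(p_k,p_k')$-slice is half-bundling at $p_k$. There is no facet $t_{p_k}=c_1$, so near $t_{p_k'}=0$ the boundary of $p_k$ is the bundling facet $t_{p_k}+t_{p_k'}=c$. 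The slice point $(\psi_{p_k}+\varepsilon,0)$ lies just beyond it, hence neither $p_k$ nor $p_k'$ goes to the root. Since $t^+$ differs from $\hat T$ only in $t_{p_k}$, which is a coordinate of every one of these slices, this holds for all such siblings simultaneously. They all belong to the single leaf player of $Q_k$ and have $\bar s_{p_k'}>\xi$. So at least $K/\xi$ of them would give that player load at least $K$. Meanwhile \textsc{Opt}$\,\le\max_{p_i\in P}\bar s_{p_i}\le 1$, because every task outside $P$ is trivial (in particular every sibling, with root cost $0$). That is the entire proof.

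The individual steps of your plan also fail.
\begin{itemize}
\item Your identification $c_2^{(l)}=\psi_{q_l}(\hat T)$, and the hoped-for lower bound of order $\bar s_{q_l}$, are wrong. Half-bundling at $p_k$ means $\{q_l\}$ alone is never given to the root. So whenever $t_{p_k}\ge\psi_{p_k}$, the sibling $q_l$ goes to the leaf and its critical value is $0$. This is the situation at $\hat T$, at least when it lies strictly between $T_\nu$ and $T_{\nu'}$, by Corollary~\ref{prop:tech2}(ii). Lemma~\ref{obs:alphas} does not apply there anyway, since $\hat T$ is not trivial on $P$.
\item The summed-drop route cannot yield a contradiction. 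Within a two-task slice the bundling facet stops at $t_{p_k}=0$, so raising $q_l$ lowers $\psi_{p_k}$ by at most its current value. Even if $\psi_{p_k}$ reaches $0$, that contradicts nothing: the raised instance differs from $T$ outside $P$, while Corollary~\ref{prop:tech2}(iv) only covers instances between $T_\nu$ and $T_{\nu'}$, which agree with $T$ outside $P$.
\item The fallback makespan argument breaks because raised siblings are no longer trivial. They all share the leaf $Q_k$, so \textsc{Opt} must pay roughly half of $\sum_l\min(c_2^{(l)},\bar s_{q_l})$. This grows at the same rate as the root load you create, so bounds of order $\xi/K$ or even $\xi$ on $c_2^{(l)}$ give only a constant ratio, not $K$.
\end{itemize}
The contradiction has to come from tasks of root cost $0$ being forced onto the leaf.
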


\begin{proof} Let $\hat T=((\hat t_P, t_{-P}), \bar s),$ (since by
  definition $t_{-P}=\hat t_{-P}$). We fix and omit $\bar s$ from the
  notation in the rest of the proof.  Let $\psi_{p_k}=\psi_{p_k}[\hat
  T]$ be the critical value of $t_{p_k}$ in $\hat T.$

  Let $p_k'$ be an arbitrary sibling of $p_k,$ such that the slice
  $(p_k,p_k')$ in $\hat T$ is half-bundling at $p_k$ (see
  Figure~14 in \cite{ChristodoulouKK26}). Then, for the input point
  slightly above the boundary $t^+=(\hat t_{P_{-k}},
  t_{p_k}=\psi_{p_k}+\varepsilon, t_{-P})$ neither of the tasks $p_k$ or
  $p_k'$ is allocated to the root. This follows from the definition of
  $\psi_{p_k},$ the fact that $t_{p_k'}=0$ (because $T$ is standard
  instance for $P$), and that the boundary for $p_k$ in the slice
  $(p_k,p_k')$ is locally bundling at $t_{p_k'}=0$ by the definition of
  half-bundling allocations (Definition~\ref{def:half-bundling}).

  Assuming by contradiction that $p_k$ has at least $K /\xi$
  different siblings $p_k'$ with which it is half-bundling, for input
  $t^+$ we obtain that all these $p'_k$ tasks are allocated to the
  same leaf player, incurring a cost of at least $\xi$ for each of
  them (because $Q_k$ is standard leaf, hence $\bar
  s_{p_k'}>\xi$). Therefore the makespan achieved by the mechanism is
  at least $(K /\xi)\cdot \xi = K. $  On the other hand, since
  tasks in $[m]\setminus P$ are trivial, we have that the makespan of the
  optimal allocation is at most $\max_{p_i\in P}\bar s_{p_i}\leq 1.$
  This would imply approximation ratio at least $K,$
  contradiction. \end{proof}

\subsubsection{The main inductive lemma.}
\label{sec:ind-main-lemma}
The results of Section~\ref{sec:ind-p_k-p'k-slice} culminate in the following lemma.

\begin{lemma} \label{lemma:many-good-sets} Suppose that
  $P=(p_1,\ldots,p_k)$ is a \slicedoffx\ box for $T.$ Then %
  the number of different siblings $p_k'$ of $p_k$ for which 
$(P_{-k},p_k')$ is a box for $T$,  is at least $\ell-2nK/\xi.$
\end{lemma}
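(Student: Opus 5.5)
We argue by counting the siblings $p_k'$ for which $(P_{-k},p_k')$ fails to be a box, and we show there are fewer than $2nK/\xi$ of them. Fix the \slicedoffx\ box $P$ and a sibling $p_k'$. We consider the family of instances $T'$ with $T_\nu\le T'\le T_{\nu'}$ and $t^\nu_P<t'_P<t^{\nu'}_P$, and we classify the $(p_k,p_k')$-slice mechanism at each such point using Theorem~\ref{theo:addchar} and Observation~\ref{obs:possibleFigures}. At a given point the slice allocation is in one of three states: crossing, non-crossing with a truncated linear boundary $\psi_{p_k}$ in $s_{p_k}$, or half-bundling at $p_k$.

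First case: there are two comparable points $T_1<T_2$ in this range, strictly ordered in every coordinate of $P$, at which the slice is crossing. Then Lemma~\ref{lem:crossing} gives directly that $(P_{-k},p_k')$ is a box.

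Otherwise, the crossing points contain no such strictly increasing pair. We then exhibit a configuration to which Lemma~\ref{lemma:linearBoundary} applies. Choose a base point $T_0$ and points $T_j$ obtained by adding about $h=(5/8)4^k\nu$ to coordinate $p_j$ and tiny perturbations $<\eta$ elsewhere. These fit inside the cube: its side in each $P$-coordinate is $4^k\nu-4^k\nu/4=(3/4)4^k\nu>h$, which is why $h$ was chosen this way. We pick the points in a small enough neighbourhood that the $T_j$ can be perturbed freely. If every $T_j$ in such a configuration were non-crossing and not half-bundling at $p_k$, then each $\psi_{p_k}$ would be truncated linear by Observation~\ref{obs:possibleFigures}(iii), which Lemma~\ref{lemma:linearBoundary} forbids.

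Making this precise is the step I expect to be the main obstacle. We must show that, when the crossing case fails, there exist $k$ points in the required positions. Crossing points form a chain-free (antichain-like) set, so we can perturb each $T_j$ within its tiny $\eta$-box to avoid crossing everywhere except possibly at one point. The natural approach is a pigeonhole or measure argument: a set with no strictly increasing pair has empty interior in each such small box. We therefore pick, inside each $\eta$-box, a point that is non-crossing, so that every $T_j$ is either linear-type or half-bundling at $p_k$.

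Consequently, for each "bad" sibling $p_k'$ at least one of the $k\le n$ points $T_0,\dots,T_{k-1}$ carries a slice that is half-bundling at $p_k$. We need these $k$ points to be the same for all siblings. To arrange this, we fix the points once, generically, by a countability/continuity argument using the continuity requirement, so that they serve every sibling simultaneously. Lemma~\ref{lem:noHalfBundling} then says that at each fixed point $\hat T$ fewer than $K/\xi$ siblings are half-bundling at $p_k$. Summing over at most $n$ points, and allowing a factor $2$ of slack for the genericity choice (for example, using two disjoint configurations), the number of bad siblings is less than $2nK/\xi$. Hence at least $\ell-2nK/\xi$ siblings $p_k'$ make $(P_{-k},p_k')$ a box.
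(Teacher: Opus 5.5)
\textbf{The gap.} It is the step you yourself flag as the obstacle: making the configuration $T_0,\dots,T_{k-1}$ independent of the sibling $p_k'$. Lemma~\ref{lem:noHalfBundling} bounds the number of half-bundling siblings only at one \emph{fixed} point $\hat T$. Your union bound therefore needs a single configuration that is non-crossing for every sibling not already settled by Lemma~\ref{lem:crossing}.

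What you actually establish is that a set with no strictly increasing pair has empty interior. That gives only a sibling-dependent choice, because ``empty interior'' is not preserved under finite unions. The remaining justifications do not close this:
\begin{itemize}
\item The ``countability/continuity argument using the continuity requirement'' does not supply the missing uniformity. The continuity requirement only excludes jump discontinuities of boundary functions at rational translations; it gives no control over where a given sibling's slice is crossing.
\item The ``factor $2$ of slack (two disjoint configurations)'' is read off from the target bound rather than derived.
\item A ``generic'' (e.g.\ random) choice would destroy rationality of the $t$-coordinates. Rationality is what makes Definition~\ref{def:continuity} applicable at the chosen points. Without it your trichotomy (crossing / truncated linear / half-bundling at $p_k$) is unjustified, since relaxed task-independent slices may be non-crossing at discontinuity points (Observation~\ref{obs:possibleFigures}(i)).
\end{itemize}

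\textbf{Repairing your route.} Work in the open cube $t^\nu_P<t_P<t^{\nu'}_P$. The closure of a set without a strictly increasing pair again has none, because strict coordinatewise order is an open condition. Hence each such crossing set is nowhere dense, and so is the union over the finitely many siblings. You can then pick $T_0$, and afterwards each $T_j$, with rational coordinates in the complementary open sets. Every remaining sibling is half-bundling at $p_k$ at one of $k$ fixed points, which gives fewer than $kK/\xi$ bad siblings.

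\textbf{The paper's route.} It avoids this entirely with a different device. It fixes, once and for all, $k$ pairs of rational instances $T_j<T_j^*$ with $t^{j*}_P=t^j_P+\varepsilon$. These are placed so that the hypotheses of Lemma~\ref{lemma:linearBoundary} hold for every mixed choice of $T_j$ or $T_j^*$. The union bound over these $2k$ points is the source of the factor $2$, and it discards fewer than $2kK/\xi$ siblings. For any other sibling, Lemma~\ref{lemma:linearBoundary} applied to the $2^k$ mixed configurations yields some $j$ at which neither $T_j$ nor $T_j^*$ has truncated linear $\psi_{p_k}$. The slice is therefore crossing at both, and Lemma~\ref{lem:crossing} applies to this strictly ordered pair. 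No case split on the shape of the crossing set and no avoidance argument is needed.
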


\begin{proof}
  Suppose that $P$ is not a box, and take a random sibling $p_k'$ of $p_k.$
  For $j=0, 1, \ldots k-1,$ we fix  instances $T_j, T_j^* $  such that 
  \begin{itemize}
  
  \item[(i)] $T_\nu \leq T_j\leq T_j^*\leq T_{\nu'}$ and all $t$-coordinates of $T_j, \,T_j^*$ are rational;

  \item[(ii)] $t^{j*}_{p_i}=t^{j}_{p_i}+\varepsilon$ for some small $\varepsilon$ for every $i\in[k];$
  \item[(iii)] the relative position of $T_0, T_1, \ldots T_{k-1}$ is as defined in the statement of Lemma~\ref{lemma:linearBoundary}, moreover this property holds even if we replace any subset of the $T_j$ by the respective $T_j^*$ instances.
  \end{itemize}
  Observe first that such instances exist:  in  Lemma~\ref{lemma:linearBoundary} $h=(5/8)\cdot 4^{k}\nu,$ and $\eta$ is arbitrarily small; on the other hand, $t^{\nu'}_{p_i}-t^{\nu}_{p_i}=(4^k-4^{k-1})\nu=(3/4)4^k\nu>(5/8)4^k\nu.$ So there is enough space for instances in the prescribed relative position.

  By Lemma~\ref{lem:noHalfBundling} and using the union bound, the
  number of siblings $p_k'\in Q_k\setminus \{p_k\}$ for which the
  $(p_k,p_k')$ slice mechanism is half-bundling in at least one of
  these $2k$ instances, is less than $2kK /\xi\leq 2nK/\xi-1.$
  Therefore, most of the siblings $p'_k$ (at least $\ell-1-(2nK/\xi-1)=\ell-2nK/\xi$ of them)
  do not have a half-bundling boundary with $p_k$ in any of the
  $2k$ fixed instances.  We focus now on such a sibling $p_k'.$ By
  Lemma~\ref{lemma:linearBoundary}, there must exist at least one
  $j\in\{0,1,2,\ldots, k-1\}$ such that in both $T_j$ and $T_j^*$ the
  $\psi[T_j]_{p_k}(s_{p_k})$ and $\psi[T_j^*]_{p_k}(s_{p_k})$
  functions are not truncated linear. Altogether, this excludes
  (non task-independent) relaxed affine minimizer mechanisms,
  constant, and one-dimensional bundling mechanisms as well as
  discontinuities in these two points. The remaining possibility is
  that the allocation of the slice $(p_k,p_k')$ is crossing for both
  $T_j$ and $T_j^*.$ Then, by Lemma~\ref{lem:crossing} the star
  $(P_{-k}, p_k')$ is a box for $T.$\end{proof}

\subsection{Many boxes of size $n-1.$}
\label{sub:existence}
We are now ready to show the main result of this section
(Corollary~\ref{cor:bound-on-b}) that proves %
that a random star of size $n-1$ is a box with arbitrarily high probability.
The proof is by induction on the size of the box, and the
next definition is handy to define `bad' events on $k$ leaves.

\begin{definition}[Probability $b_k$] Fix a mechanism, and let $1\leq k\leq n-1.$ Suppose that $T$ is a standard instance for a set $\mathcal C$ of $k$ leaves. We denote by $b(T,\mathcal C)$ the probability that a random star $P$ of $k$ tasks from $\mathcal C$ is not a box. Let $b_k$ denote the supremum of all possible $b(T,\mathcal C)$ for all choices of $T$ and $\mathcal C$ with $|\mathcal C|=k$ for the given mechanism.
\end{definition}
In the next lemma we upper bound the probability that a random star $P$ of $k$ tasks is not a \slicedoff\ box, in terms of  $b_{k-1}.$
The lemma afterwards finally combines the obtained probabilities.

\begin{lemma} \label{lemma:potentially-good} Fix any instance $T$
  which is standard for a set $\cal C$ of $k\geq 3$ leaves, and let
  $P$ be a random star of $k$ tasks from $\cal C$. The
  probability that $P$ is not a \slicedoff\ box for $T$ is at most $(5K/\nu-1)b_{k-1}.$
 
\end{lemma}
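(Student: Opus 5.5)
The plan is a union bound over the conditions in Definition~\ref{def:potentially-good}. A random star $P=\{p_1,\ldots,p_k\}$ from $\mathcal C$ consists of independent uniform choices $p_i\in Q_i$, one per leaf. Being a \slicedoff\ box requires $T$ to be standard for $P$ (automatic, since $T$ is standard for $\mathcal C$) together with two families of conditions. The first family says $P_{-i}$ is a box for $T$, for each $i\in[k]$. The second says $P_{-k}$ is a box for the modified instance $T^{(q)}$, for each $q=1,\ldots,4K/\nu$ with $\bar s_{p_k}>q\nu/(4K)$. Here $T^{(q)}$ is obtained from $T$ by setting task $p_k$ to $[t_{p_k}=0,\ \bar s_{p_k}-q\nu/(4K)]$. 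I would bound the failure probability of each individual condition by $b_{k-1}$ and then count the conditions.

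For the first family, fix $i$. The marginal distribution of $P_{-i}$ is that of a uniformly random star on the $k-1$ leaves $\mathcal C\setminus\{Q_i\}$. $T$ is standard for this leaf set, since standardness for $\mathcal C$ implies standardness for any subset, and the leaf $Q_i$ remains with its fixed values. So by the definition of $b_{k-1}$ as a supremum, $\mathbb P[P_{-i}\text{ not a box for }T]\le b_{k-1}$. This gives $k$ terms.

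For the second family, fix $q$ and condition on the choice of $p_k\in Q_k$. Given $p_k$, the instance $T^{(q)}$ is a fixed instance that does not depend on $p_1,\ldots,p_{k-1}$. It is still standard for the leaves $Q_1,\ldots,Q_{k-1}$: all $t$-values stay $0$, and the $s$-values of these leaves stay in $(\xi,1)$. The one point to check is the continuity requirement for $T^{(q)}$. It holds because $q\nu/(4K)$ is rational (the parameters are rational by Definition~\ref{def:parameters}), so $T^{(q)}$ is a rational translation of $T$, and $T$ satisfies the continuity requirement, which is closed under rational translations. The modified leaf $Q_k$ now has a non-standard $s$-value on $p_k$, but the quantity $b_{k-1}$ only concerns the $k-1$ standard leaves, and the remaining leaves are arbitrary fixed. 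Hence, conditionally on $p_k$, $P_{-k}$ is a uniform random star on $k-1$ standard leaves of $T^{(q)}$, and it fails to be a box with probability at most $b_{k-1}$. Averaging over $p_k$ gives the same bound, for $4K/\nu$ terms in total.

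The union bound then gives failure probability at most $(k+4K/\nu)\,b_{k-1}$. Since $K\ge n+1>k$, we have $k\le K/\nu-1$ (as $\nu<1$), which yields $(5K/\nu-1)b_{k-1}$. The main obstacle I expect is the second family. There I must carefully justify that $b_{k-1}$, defined as a supremum over standard instances, applies to $T^{(q)}$ even though $T^{(q)}$ depends on the random $p_k$; conditioning on $p_k$ resolves this. I must also justify the continuity requirement through the rational-translation clause of Definition~\ref{def:continuity}.
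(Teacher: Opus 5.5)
Your proposal is correct and matches the paper's proof. Both take a union bound over the $k$ conditions that $P_{-i}$ is a box and the at most $4K/\nu$ shifted conditions on $P_{-k}$, bound each failure probability by $b_{k-1}$, and finish with $k+4K/\nu\le 5K/\nu-1$. Your conditioning on $p_k$ and your rational-translation argument for the continuity requirement spell out what the paper leaves implicit; for the latter, note that $K$ is not among the rational parameters of Definition~\ref{def:parameters}, but it may be taken to be an integer without loss of generality.
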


\begin{proof} Take a random star $P=\{p_1, p_2,\ldots , p_k\}$ of $k$ tasks from $\mathcal C.$ For each $i\in [k]$ the probability that $P_{-i}$ is not a box for $T$ is at most $b_{k-1}.$ Also the probability that $P_{-k}$ is not a box when we set task $p_k$ to $(t_{p_k}=0, s_{p_k}=\bar s_{p_k}- q\nu/(4K)),$ is at most $b_{k-1}$ for (at most) each of  $q= 1, \ldots , 4K/\nu. $ By the union bound, the probability that some of these bad events happens is at most $(k+4K/\nu)b_{k-1}< (n+4K/\nu)b_{k-1}<(5K/\nu-1)b_{k-1}$ (given that $n<K-1$ w.l.o.g., and $\nu$ is much smaller than 1).
\end{proof}

\begin{lemma} \label{lemma:recurrence}
  Fix a mechanism with approximation ratio less than $K,$ then for $3\leq k\leq n-1$ $$b_k\leq \left (\frac{5K}{\nu}-1 \right )b_{k-1}+ \frac{2nK}{\xi \sqrt{\ell} }.$$ 
\end{lemma}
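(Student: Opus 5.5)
We fix a standard instance $T$ for a set $\mathcal C$ of $k\ge 3$ leaves and pick a random star $P=\{p_1,\ldots,p_k\}$. We then split the event ``$P$ is not a box'' into two disjoint pieces: (a) $P$ is not a \slicedoff\ box, and (b) $P$ is a \slicedoffx\ box. This gives
\[
\mathbb P[P\text{ not a box}]\le \mathbb P[P\text{ not a \slicedoff\ box}]+\mathbb P[P\text{ is a \slicedoffx\ box}].
\]
Lemma~\ref{lemma:potentially-good} bounds the first term directly by $(5K/\nu-1)b_{k-1}$. So the whole task reduces to showing that the second term is at most $2nK/(\xi\sqrt{\ell})$. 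Since $T$ and $\mathcal C$ are arbitrary, taking the supremum then yields the recurrence for $b_k$.

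For the second term we use a double-counting / switching argument based on Lemma~\ref{lemma:many-good-sets}. Fix the first $k-1$ tasks $P_{-k}$ and view the last coordinate $p_k$ as ranging over the $\ell$ tasks of leaf $Q_k$. Let $x$ be the number of choices of $p_k\in Q_k$ for which $(P_{-k},p_k)$ is a \slicedoffx\ box, and $y$ the number for which it is a box.

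By Lemma~\ref{lemma:many-good-sets}, if $x\ge 1$ then at least $\ell-2nK/\xi$ siblings give a box, so $y\ge \ell-2nK/\xi$. Since boxes and \slicedoffx\ boxes are disjoint, $x\le \ell-y\le 2nK/\xi$. This gives a conditional probability of at most $2nK/(\xi\ell)$ for each fixed $P_{-k}$, and averaging over $P_{-k}$ bounds the second term by $2nK/(\xi\ell)\le 2nK/(\xi\sqrt{\ell})$, as required.

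The main obstacle is the hidden dependence of the definition of a \slicedoff\ box on which leaf plays the role of ``$k$''. The second bullet of Definition~\ref{def:potentially-good} singles out $p_k$, and the statement of Lemma~\ref{lemma:many-good-sets} also concerns replacing $p_k$ by a sibling. We therefore must make sure the counting is done with respect to the same distinguished leaf $Q_k$ throughout: the random star is drawn by independent uniform choices per leaf, so conditioning on $P_{-k}$ and varying $p_k$ uniformly over $Q_k$ is legitimate.

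One further point needs checking. Lemma~\ref{lemma:many-good-sets} counts siblings $p_k'$ for which $(P_{-k},p_k')$ is a box. These siblings are in a standard instance, and ``box'' does not depend on which member of $Q_k$ we originally called $p_k$. So the set of ``box'' choices in $Q_k$ is the same regardless of which \slicedoffx\ box we started from, and the bound $x\le 2nK/\xi$ is valid.

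If the sharper $1/\ell$ bound turned out to be problematic, for example through the trivial case of $\ell$ being small, the weaker $1/\sqrt{\ell}$ stated in the lemma leaves slack. When $\ell\le (2nK/\xi)^2$ the bound exceeds $1$ and holds trivially.
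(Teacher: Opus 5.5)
Your proof is correct and is essentially the paper's argument. Like the paper, you condition on $P_{-k}$ and let $p_k$ range over $Q_k$, use Lemma~\ref{lemma:many-good-sets} to control the extensions, and combine this with Lemma~\ref{lemma:potentially-good}. The only difference is bookkeeping: you bound the strictly chopped off extensions by $2nK/\xi$ and apply a union bound, whereas the paper lower-bounds the box extensions multiplicatively by $y\,(1-2nK/(\xi\ell))$ and then expands the product $\bigl(1-\tfrac{2nK}{\xi\ell}\bigr)\bigl(1-(\tfrac{5K}{\nu}-1)b_{k-1}\bigr)$.
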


\begin{proof} Let $T$ be a standard instance for some set of leaves $\mathcal C=\{Q_1,\ldots , Q_k\},$ and let $\mathcal C_{-k}=\{Q_1, \ldots Q_{k-1}\}.$ Consider a star $P^*=\{p_1,\ldots, p_{k-1}\}$ from $\mathcal C_{-k}.$ %
  Let's call the sets $P^*\cup\{p_k\}$, $p_k\in Q_k$, extensions of $P^*$.
  
  Let $0\leq y\leq \ell$ be the number of the extensions of $P^*$ that are
  \slicedoff\ boxes. How many of these extensions are boxes? Either all $y$
  extensions are boxes, in which case the number of box extensions is at least
  $y$ or some extension is not a box and $\ell-2nK/\xi$ other extensions are boxes
  (Lemma~\ref{lemma:many-good-sets}). Therefore the number of extensions that
  are boxes is at least  $\min\{y, \ell-2nK/\xi\}\geq y (\ell-2nK/\xi)/\ell= y (1-2nK/\ell\xi).$ Since this holds for every star $P^*$, the probability that a random star of size $k$ is a box is at
  least $1-2nK/\xi\ell$ times the probability that a random star of size
  $k$ is a \slicedoff\ box.
  
  For a rigorous proof, let ${\mathcal R}$ be the set of different stars of $k-1$ tasks from $\mathcal C_{-k}.$ 
We select a random star $P$ of $k$ tasks from $\mathcal C,$ and define the events:
$\quad Y:\,$ "$P$ is a \slicedoff\ box"; $\quad X:\,$  "$P$ is a box"; $\quad E_{P^*}:\,$  "$P_{-k}=P^*\,$".  Note that  in general $X\not\subset Y.$
By the above argument $\mathbb P(X|E_{P^*})\geq (1-2nK/\xi\ell)\cdot \mathbb P(Y|E_{P^*}).$ Therefore we obtain 
\begin{align*}\mathbb P(X)=\sum_{P^*\in \mathcal R} \mathbb P(X| E_{P^*})\mathbb P(E_{P^*})\\
\geq (1-2nK/\xi\ell) \sum_{P^*\in \mathcal R} \mathbb P(Y|E_{P^*}) \mathbb P(E_{P^*})\\
=(1-2nK/\xi\ell)\cdot \mathbb P(Y).\end{align*}
By Lemma~\ref{lemma:potentially-good}, the
  probability that a random star of size $k$ is a \slicedoff\ box is at
  least $1-(5K/\nu - 1) b_{k-1}; $ so we get $$\mathbb P(X)\geq \left (1-\frac{2nK}{\xi\ell}\right )\left [1-\left (\frac{5K}{\nu} - 1\right ) b_{k-1}\right ]> 1-\frac{2nK}{\xi\ell}-\left (\frac{5K}{\nu} - 1\right ) b_{k-1}.$$  
This further yields $$b_k=1-\mathbb P(X)< \left (\frac{5K}{\nu} - 1\right ) b_{k-1}+ \frac{2nK}{\xi\ell} < \left (\frac{5K}{\nu} - 1\right ) b_{k-1}+ \frac{2nK}{\xi\sqrt{\ell}}.$$
\end{proof}

\begin{corollary} \label{cor:bound-on-b}
  Fix a mechanism with approximation ratio less than $n,$ then for $2\leq k\leq n-1$ $$b_k\leq \left (\frac{5K}{\nu}\right )^{k-2}\cdot \frac{2nK}{\xi \sqrt{\ell} }.$$ 
\end{corollary}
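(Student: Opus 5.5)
The bound follows by induction on $k$ from the base case (Theorem~\ref{cor:b2}) and the recurrence of Lemma~\ref{lemma:recurrence}. Write $a:=\frac{5K}{\nu}$ and $c_0:=\frac{2nK}{\xi\sqrt{\ell}}$, so the claim is $b_k\le a^{k-2}c_0$ for $2\le k\le n-1$.

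For the base case $k=2$, Theorem~\ref{cor:b2} gives $b_2\le 2/\sqrt{\ell}$. Since $K\ge n+1>1$, $n\ge 2$ and $\xi\in(0,1)$, we have $\frac{2nK}{\xi}\ge 2$. Hence $b_2\le 2/\sqrt{\ell}\le c_0=a^{0}c_0$. Theorem~\ref{cor:b2} only assumes bounded approximation ratio, so it applies verbatim under ratio less than $K$. The hypothesis ``less than $n$'' in the statement should be read as ``less than $K$'', consistent with the rest of the section.

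For the inductive step $3\le k\le n-1$, assume $b_{k-1}\le a^{k-3}c_0$. Lemma~\ref{lemma:recurrence} gives
\[
b_k\le (a-1)b_{k-1}+c_0\le (a-1)a^{k-3}c_0+c_0 = a^{k-2}c_0-a^{k-3}c_0+c_0 \le a^{k-2}c_0 .
\]
The last inequality uses $a^{k-3}\ge 1$, which holds because $a=5K/\nu>1$ and $k\ge 3$. The ``$-1$'' in the recurrence coefficient is exactly what absorbs the additive term $c_0$ at each step.

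The only point needing care is this: $b_{k-1}$ is a supremum over all standard instances and all leaf sets of size $k-1$. Lemma~\ref{lemma:recurrence} is stated in terms of these suprema, so the induction closes on the quantities $b_k$ themselves, with no dependence on the particular instance. There is no real obstacle beyond that; the argument is routine bookkeeping.

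Specializing to $k=n-1$ gives $b_{n-1}\le (5K/\nu)^{n-3}\frac{2nK}{\xi\sqrt{\ell}}$, which is at most the bound $\left(\frac{5K}{\nu}\right)^{n-1}\frac{1}{\sqrt\ell}$ of Theorem~\ref{thm:box-restated}. To see this, note $\frac{2nK}{\xi}\le \left(\frac{5K}{\nu}\right)^2$, which holds since $\nu<\xi/(nK4^n)$.
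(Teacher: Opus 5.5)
Your proof is correct and matches the paper's argument: the base case $b_2\le 2/\sqrt{\ell}\le \frac{2nK}{\xi\sqrt{\ell}}$ comes from Theorem~\ref{cor:b2}, and the induction uses Lemma~\ref{lemma:recurrence} with the same absorption step $(\tfrac{5K}{\nu}-1)(\tfrac{5K}{\nu})^{k-3}+1\le(\tfrac{5K}{\nu})^{k-2}$. Your remark about the hypothesis is harmless but unnecessary: since $K\ge n+1$, a ratio below $n$ is in particular below $K$, so the statement as written already follows.
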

\begin{proof} For $k=2,$ using Theorem~\ref{cor:b2},  $b_2\leq 2/\sqrt{\ell}< 2nK/(\xi\sqrt{\ell}),$ so the base case holds. Now for $k\geq 3$ by induction we obtain
\begin{align*}
b_k & \leq \left (\frac{5K}{\nu}-1 \right )b_{k-1}\,+ \,\frac{2nK}{\xi \sqrt{\ell} }\\
& \leq 
\left (\frac{5K}{\nu}\,-\, 1 \right )\left (\frac{5K}{\nu}\right )^{k-3}\cdot \frac{2nK}{\xi \sqrt{\ell} }\,+ \,\frac{2nK}{\xi \sqrt{\ell} }\\
& = 
\left (\left (\frac{5K}{\nu}\right )^{k-2}\,-\,\left (\frac{5K}{\nu}\right )^{k-3}\,+\,1 \right )\cdot \frac{2nK}{\xi \sqrt{\ell} }\\
& \leq  
\left (\frac{5K}{\nu}\right )^{k-2}\cdot \frac{2nK}{\xi \sqrt{\ell} }.%
\end{align*}
\end{proof}

Setting $k=n-1,$ and using $\nu<\xi,$ and $n+1<K,$ we obtain the following rough bound.%

\begin{corollary} 

$b_{n-1} \leq  
\left (\frac{5K}{\nu}\right )^{n-3}\cdot \frac{2nK}{\xi \sqrt{\ell} }< \left (\frac{5K}{\nu}\right )^{n-1}\cdot \frac{1}{ \sqrt{\ell} },$ so 
for every instance $T=(t,\bar s)$ standard for $\{C_i\}_{i\in [n-1]},$ a random star of $n-1$ tasks is a box with arbitrarily high probability, if the edge-multiplicity $\ell$ is high enough. 
\end{corollary}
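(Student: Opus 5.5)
Immediately after Corollary~\ref{cor:bound-on-b}, the first inequality is that corollary with $k=n-1$, which gives
\[
b_{n-1}\le \left(\frac{5K}{\nu}\right)^{n-3}\frac{2nK}{\xi\sqrt{\ell}}.
\]
Since $b_{n-1}$ is defined as the supremum over all standard instances $T$ for $n-1$ leaves, this bound applies to every instance $T=(t,\bar s)$ standard for $\{C_i\}_{i\in[n-1]}$. The remaining work is to compare the constant factor with $(5K/\nu)^2$ and then let $\ell\to\infty$.

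For the strict inequality, it suffices to show $2nK/\xi<(5K/\nu)^2=25K^2/\nu^2$, which is equivalent to $2n\nu^2<25K\xi$. Two facts give this. First, $\nu<\xi$ by Definition~\ref{def:parameters}, since $\nu<\xi/(nK4^n)<\xi<1$. Second, $n<K$ by the standing assumption $K\ge n+1$. Hence $2n\nu^2<2n\xi<25K\xi$, and multiplying by $(5K/\nu)^{n-3}/\sqrt\ell$ yields the claimed strict bound. The bound from Definition~\ref{def:parameters} is in fact much stronger than needed, so no tight estimate is involved.

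For the qualitative conclusion, fix $K,\nu,\xi,n$. The bound $(5K/\nu)^{n-1}\ell^{-1/2}$ then tends to $0$ as $\ell\to\infty$. Given any target $\Delta>0$, choosing $\ell\ge (5K/\nu)^{2(n-1)}/\Delta^2$ makes the probability that a random star of $n-1$ tasks fails to be a box at most $\Delta$. This is precisely the statement of Theorem~\ref{thm:box-restated}, with $\delta=4^{n-1}\nu$ as in Definition~\ref{def:box}.

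The only point needing care is the hypothesis of Corollary~\ref{cor:bound-on-b}, which reads ``approximation ratio less than $n$''. This is a leftover from \cite{ChristodoulouKK26}, and the recurrence it relies on (Lemma~\ref{lemma:recurrence}) was proved for ratio less than $K$. I would check that the whole induction chain uses only ratio $<K$: the base case Theorem~\ref{cor:b2} needs bounded ratio, and the step lemmas need ratio $<K$. I would then invoke it in that form. Beyond this there is no real obstacle.
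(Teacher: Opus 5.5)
Your proof is correct and follows the paper's own argument: specialize Corollary~\ref{cor:bound-on-b} to $k=n-1$, then absorb the factor $2nK/\xi$ into $(5K/\nu)^2$ using $\nu<\xi<1$ and $n<K$, which is exactly the paper's one-line justification. Your remark that the ``less than $n$'' hypothesis in Corollary~\ref{cor:bound-on-b} is a leftover, and that the induction chain only needs approximation ratio less than $K$, is also accurate.
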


	\section{Omitted Details from~\Cref{sec:upperbound}}
\label{sec:appendix-upper-bound}

\subsection{Payment Scheme}
\label{sec:payment-scheme-upperbound}

Given any instance $t$ for \textsc{Exp-Bounded-Square}, we define payments $p_{ij}$ for each $i \in N$, $j \in M$.
The overall payment that machine $i$ receives is then given by $\mathcal P_i(t) := \sum_{j \in M} p_{ij}$, which defines a possible payment scheme such that the mechanism is universally truthful.

Consider any fixed task $j \in M$, and suppose $j$ is allocated to machine $i^*$.
Set $p_{ij} = 0$ for all other machines $i \neq i^*$, fix their reported processing times, and define
\[
	t_{-i^*}^{\min} := \min \{\; t_{ij} \;\vert\; i \in N,\;i \neq i^* \} \enspace.
\]
If $t_{-i^*}^{\min} = 0$, we set $p_{i^*,j} = 0$.
Otherwise, we set
\[
	p_{i^*,j} = \sup \left\{z \ge 0 \;\colon\; E_{i^*j} \cdot z^2 \le E_{ij} \cdot (t_{ij})^2 \text{ holds for all } i \neq i^* \text{ with } t_{ij} \le 2 \min \{z, t_{-i^*}^{\min} \} \right\} \enspace.
\]
Thus, $p_{i^*,j}$ is the maximum processing time that $i^*$ could report for task $j$ such that it is still allocated to $i^*$.
Since the allocation algorithm is weakly monotone (see~\Cref{sec:universal-truthfulness}), this payment scheme is universally truthful.

\subsection{Bounding the Expected Load of Each Machine}
\label{sec:expected-load-lu-yu-upperbound}

\begin{proof}[Proof of~\Cref{lem:lu_yu_max_expected_load}]
	Consider a fixed optimal allocation for $t$, and let $J_i$ denote the set of tasks that is allocated to machine $i \in N$ in that optimal allocation.
	For every task $j\in M$, let $t^{\min}_j=\min_{i\in N} t_{ij}$, and let $N_j$ denote the set of eligible machines for task $j$.
	We assume w.l.o.g. that $t^{\min}_j> 0$ for every task $j$.
	For the optimum makespan it holds 
	\[
		\Opt{t} = \max_{i\in N}\sum_{j\in J_i} t_{ij}.
	\]
	Due to \Cref{lem:probabilities-exp-bounded-square}, the expected running time of an arbitrary machine $i$ in an allocation computed by \textsc{Exp-Bounded-Square} for instance $t$ can be expressed by
	\[
		\mathbb E[\theta_i(t)] 
		= \sum_{s \in N} \sum_{j\in J_s} x_{ij}t_{ij}
		= \sum_{j\in J_i} x_{ij}t_{ij} + \sum_{\substack{s\in N,\\ s\neq i}}\, \sum_{j\in J_s} x_{ij}t_{ij},
	\]
	where 
	\[
		x_{ij} = 
			\begin{cases}
				 \frac{1/(t_{ij})^2}{\sum_{s\in N_j}1/(t_{sj})^2} &,~\text{if}~i \in N_j,\\
				 0 &,~\text{otherwise}.
			\end{cases}
	\]
	For the first sum, it holds $\sum_{j\in J_i} x_{ij}t_{ij} \leq \sum_{j\in J_i} t_{ij} \leq \Opt{t}$.
	Next, we show that for every $s\neq i$ it holds that $\sum_{j\in J_s} x_{ij}t_{ij} \leq \Opt{t}/2,$ which will yield 
	\[
		\mathbb E[\theta_i(t)] \leq \left(\frac{n+1}{2}\right)\cdot \Opt{t}
	\]
	as required.
	
		\begin{claim*}
			For every $s\neq i$ and every $j\in J_s$ holds that $x_{ij}t_{ij}\leq \frac{t_{sj}}{2}.$
		\end{claim*}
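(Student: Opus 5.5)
We need to show that for $s\neq i$ and $j\in J_s$, $x_{ij}t_{ij}\le t_{sj}/2$. If $i\notin N_j$ then $x_{ij}=0$ and there is nothing to prove. So assume $i\in N_j$. Since $s$ processes $j$ in the optimum, $t_{sj}\ge t_j^{\min}$; I will split into two cases according to whether $s\in N_j$.

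\emph{Case $s\in N_j$.} Both $i$ and $s$ are eligible. Dropping all other eligible machines from the denominator only increases $x_{ij}$, so
\[
x_{ij}t_{ij}\le \frac{t_{ij}/t_{ij}^2}{1/t_{ij}^2+1/t_{sj}^2}=\frac{t_{ij}\,t_{sj}^2}{t_{ij}^2+t_{sj}^2}.
\]
By AM--GM, $t_{ij}^2+t_{sj}^2\ge 2t_{ij}t_{sj}$, so the right-hand side is at most $t_{sj}/2$. This is the same computation as for the fractional square mechanism.

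\emph{Case $s\notin N_j$.} Here $t_{sj}>2t_j^{\min}$. Let $k$ be a machine attaining $t_j^{\min}$; it lies in $N_j$. If $k=i$, then $x_{ij}t_{ij}\le t_{ij}=t_j^{\min}<t_{sj}/2$. If $k\neq i$, apply the same two-term bound with $k$ in place of $s$:
\[
x_{ij}t_{ij}\le \frac{t_{ij}\,(t_j^{\min})^2}{t_{ij}^2+(t_j^{\min})^2}\le \frac{t_j^{\min}}{2}<\frac{t_{sj}}{2}.
\]

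This settles the claim. Summing over $j\in J_s$ then gives $\sum_{j\in J_s}x_{ij}t_{ij}\le \tfrac12\sum_{j\in J_s}t_{sj}\le \Opt{t}/2$, which is what the surrounding lemma needs. The only point requiring care is the second case, because the pruning to $N_j$ is exactly what removes $s$ from the denominator. This is handled by comparing against a minimum-time machine, which is always eligible. Tasks with $t_j^{\min}=0$ are excluded by the stated w.l.o.g. assumption.
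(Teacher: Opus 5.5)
Your proof is correct and follows essentially the same route as the paper. When $s\in N_j$, you use the two-machine bound $t_{ij}t_{sj}^2/(t_{ij}^2+t_{sj}^2)\le t_{sj}/2$; otherwise you compare against a minimum-time machine $k$, using the trivial bound $x_{ij}t_{ij}\le t_j^{\min}$ when $k=i$, which reproduces the paper's three cases merged into two.
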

		\begin{proof} 
			If $i \notin N_j$ then $x_{ij}=0,$ and the claim holds. 
			Otherwise $i\in N_j,$ and so $t_{ij}\leq 2 t^{\min}_j$ holds. 
			Regarding $s \neq i$, we distinguish three cases:

			\begin{enumerate}[label=(\arabic*):]
				\item It holds $t_{sj}\leq 2t^{\min}_j,$ that is, $s\in N_j$.
				
					Then, taking \emph{all sums and products over tasks} in $N_j$
					\begin{align*}
						x_{ij}t_{ij}
						&=\frac{1/(t_{ij})^2}{\sum_{k\in N_j}1/(t_{kj})^2}\cdot t_{ij}\\
						&=\frac{\Pi_{ k\neq i}t_{kj}^2}{\sum_{l\in N_j}\Pi_{k\neq l}t_{kj}^2}\cdot t_{ij}\\
						&=\frac{t_{ij}t_{sj}\Pi_{ k\neq i,r}t_{kj}^2}{\sum_{l\in N_j}\Pi_{k\neq l}t_{kj}^2}\cdot t_{sj}\\
						&=\frac{t_{ij}t_{sj}}{t_{ij}^2+t_{sj}^2+\sum_{l\neq i,s}t_{ij}^2t_{sj}^2/t_{lj}^2}\cdot t_{sj}\\
						&\leq \frac{t_{ij}t_{sj}}{t_{ij}^2+t_{sj}^2}\cdot t_{sj}\\
						&\leq \frac{1}{2}\cdot t_{sj}.
					\end{align*}
					
					\item $t_{sj}\geq 2t^{\min}_j,$ and $t_{ij}>t_j^{\min}.$ 
					
					Then there exists a machine $k\in N_j$ with $k\neq i,s,$ and $t_{kj}=t_j^{\min}\leq t_{sj}/2.$ 
					By replacing $t_{sj}$ by $t_{kj}$ in the argument above we obtain
					$x_{ij}t_{ij} \leq t_{kj} / 2\leq t_{sj}/4$.
					
					\item $t_{sj}\geq 2t^{\min}_j,$ and $t_{ij}=t_j^{\min}.$
					
					Then we obtain directly $x_{ij}t_{ij}\leq t_{ij}=t_j^{\min}\leq t_{sj}/2.$
			\end{enumerate}
		\end{proof} 
	Finally, by summing over all tasks in $J_s$ we obtain $\sum_{j\in J_s}x_{ij}t_{ij}\leq \frac{1}{2}\sum_{j\in J_s}t_{sj}\leq \frac{1}{2} \Opt{t}.$
\end{proof}

\end{document}